\newcommand{\ms}{1.8cm}
\newcommand{\NEWCOMMANDS}
\newcommand{\mathify}[1]{\ifmmode{#1}\else\mbox{$#1$}\fi}
\newcommand{\ith}{i\textsuperscript{\underline{th}}\xspace}
\newcommand{\jth}{j\textsuperscript{\underline{th}}\xspace}
\newcommand{\fhat} {\mathify{\widehat{f}}\xspace}
\newcommand{\fhatj} {\mathify{\widehat{f^j}}\xspace}
\newcommand{\ghat} {\mathify{\widehat{g}}\xspace}
\newcommand{\hhat} {\mathify{\widehat{h}}\xspace}
\newcommand{\ICtilde} {\mathify{\widetilde{IC}}\xspace}
\newcommand{\Exp}{\mathop{\mathbb{E}}}
\newcommand{\half}{\frac{1}{2}}
\newcommand{\maj}{\mathop{\mathrm{majority}}}
\newcommand{\xor} {\mathop{\oplus}}
\newcommand{\Z}{{\mathbb Z}}
\def\noqed{\renewcommand{\qedsymbol}{}}
\newenvironment{IndentSection}{\par\begingroup\addtolength{\leftskip}{1em}}{\par\endgroup}
\newcommand{\THMTEX}
\theoremstyle{plain}
\newtheorem{theorem}                {Theorem}[section]
\newtheorem{lemma}      [theorem]   {Lemma}
\newtheorem*{lemmaREF}              {Lemma}
\newtheorem*{propREF}               {Proposition}
\newenvironment{PropositionWithReference}[1]{\begin{propREF}[Proposition \ref{#1}]}{\end{propREF}}
\newtheorem{corollary}  [theorem]   {Corollary}
\newtheorem{proposition}[theorem]   {Proposition}
\newtheorem*{theoremUN}             {Theorem}
\newtheorem*{theoremREF}             {Theorem}
\newtheorem*{lemmaUN}               {Lemma}
\newtheorem*{propositionUN}         {Proposition}
\theoremstyle{definition}
\newtheorem{definition} [theorem]   {Definition}
\theoremstyle{remark}
\newenvironment{proof-of-claim}[0]{\begin{proof}[\bf Proof of Claim]}{\end{proof}}
\newenvironment{proof-of-lemma}[1]{\begin{proof}[\bf Proof of Lemma #1]}{\end{proof}}
\newenvironment{proof-of-proposition}[1]{\begin{proof}[\bf Proof of Proposition #1]}{\end{proof}}
\newenvironment{proof-sketch}[0]{\begin{proof}[\bf Proof sketch]}{\noqed\end{proof}}
\newcommand{\SCTEX}{}
\newcommand{\TRUE}{\texttt{True}\xspace}
\newcommand{\FALSE}{\texttt{False}\xspace}
\newcommand{\OSI}[3][]{\ensuremath{P^{#1}_{#2}(#3)}\xspace}
\newcommand{\Inf}[3][]{\ensuremath{I^{#1}_{#2}(#3)}\xspace}
\newcommand{\FeasableAssignments}{\ensuremath{\mathbb{X}}\xspace}
\newcommand{\PAIR}[2]{\mathify{\left<#1,#2\right>}\xspace}
\newcommand{\TRIPLET}[3]{\mathify{\left<#1,#2,#3\right>}\xspace}
\newcommand{\Agenda}{\FeasableAssignments\xspace}
\newcommand{\mANDagenda}{\mathify{\left<A^1,\ldots,A^m,\mathop{\wedge}\limits_{j=1}^mA^j\right>}\xspace}
\newcommand{\kXORagenda}[1]{\mathify{\left<A^1,\ldots,A^{#1},\xor\limits_{j=1}^{#1}A^j\right>}\xspace}
\newcommand{\mXORagenda}{\kXORagenda{m-1}}
\newcommand{\LIN}{\texttt{Lin}\xspace}
\newcommand{\OLIG}{\texttt{Olig}\xspace}
\newcommand{\DISTx}[2]{d(#1,#2)}
\newcommand{\JUNTA}[1]{\mathify{#1_{\rightarrow J}}}
\def\monthname{%
    \ifcase\month
        \or Jan\or Feb\or Mar\or Apr\or May\or Jun%
        \or Jul\or Aug\or Sep\or Oct\or Nov\or Dec%
    \fi}%
\def\timestring{\begingroup
    \count0 = \time \divide\count0 by 60
    \count2 = \count0 
    \count4 = \time \multiply\count0 by 60
    \advance\count4 by -\count0 
    \ifnum\count4<10 \toks1 = {0}
    \else \toks1 = {}%
    \fi
    \ifnum\count2<12 \toks0 = {a.m.}%
    \else \toks0 = {p.m.}%
    \advance\count2 by -12 \fi
    \ifnum\count2=0 \count2 = 12 \fi 
    \number\count2:\the\toks1 \number\count4 \thinspace \the\toks0
\endgroup}%
\newcommand{\TITLE}{Approximate Judgement Aggregation
}
\newcommand{\AUTHOR}{Ilan Nehama}
\newcommand{\ADDRESS}{%
    Center for the Study of Rationality \\
    \quad\& The Selim and Rachel Benin School of Computer Science and Engineering\\
    The Hebrew University of Jerusalem, Israel}
\newcommand{\EMAIL}{ilan.nehama@mail.huji.ac.il}
\newcommand{\KEYWORDS}{%
    approximate aggregation,
    discursive dilemma,
    truth-functional agendas,
    inconsistency index,
    dependency index,
    computational social choice}
\newcommand{\ACK}{The research was supported by a grant from the Israeli Science Foundation (ISF)
    and by the Google Inter-university center for Electronic Markets and Auctions}
\title{\TITLE}
\author{\AUTHOR\\\ADDRESS\\\EMAIL}
\begin{document}
\maketitle
\begin{abstract}
In this paper we analyze judgement aggregation problems in which a
group of agents independently votes on a set of complex propositions
that has some interdependency constraint between them (e.g.,
transitivity when describing preferences).
We generalize the previous results by studying approximate judgement
aggregation. We relax the main two constraints assumed in the
current literature, Consistency and Independence and consider
mechanisms that only approximately satisfy these constraints, that
is, satisfy them up to a small portion of the inputs.
The main question we raise is whether the relaxation of these
notions significantly alters the class of satisfying aggregation
mechanisms.
The recent works for preference aggregation of Kalai, Mossel, and
Keller fit into this framework. The main result of this paper is
that, as in the case of preference aggregation, in the case of a
subclass of a natural class of aggregation problems termed
`truth-functional agendas', the set of satisfying aggregation
mechanisms does not extend non-trivially when relaxing the
constraints.
Our proof techniques involve boolean Fourier transform and analysis
 of voter influences for voting protocols.

The question we raise for Approximate Aggregation can be stated
using terms of Property  Testing. For instance, as a corollary from
our result we get a generalization of the classic result for
property testing of linearity of boolean functions.
\end{abstract}
\noindent{\bf Keywords:} \KEYWORDS
{\renewcommand{\thefootnote}{}\footnotetext{\ACK}}
{\renewcommand{\thefootnote}{}\footnotetext{Previous versions of this work were presented at
Bertinoro Workshop on Frontiers in Mechanism Design 2010,
Third International Workshop on Computational Social Choice, D$\ddot{u}$sseldorf 2010,
and Computation and Economics Seminar at the Hebrew University.
The author would like to thank the participants in these workshops for their comments.}}
\hypersetup{
    pdfsubject = {\TITLE},
    pdfkeywords = {\KEYWORDS},
    pdfauthor = {\AUTHOR},
    pdftitle = {\TITLE},
}

\section{Introduction}
\subsection{Abstract Judgement Aggregation}
Judgement Aggregation deals in scenarios in which one should aggregate a set of opinions/judgement done by independent judges or agents to one opinion.
For instance, assume a committee needs to decide whether to deploy a
suggested network protocol.
It is acceptable by all voters that the protocol should be deployed
if and only if it stands in two criteria: security (resistance to
attacks) and scalability (supports several sizes of a network).
We assume that each voter decides his opinion on the two criteria
independently and based on this decides on whether to deploy the
protocol. The voters cast their votes simultaneously and we assume
no strategic behavior on their behalf.
Now assume that a third of the voters are convinced that the
protocol stands in the two criteria and hence think it should be
deployed while the other voters think it should not be deployed but
disagree on the criterion that is violated. One third of the voters
think it is not secure enough and hence should not be deployed
although it is scalable. On the other hand, a third of the voters
think it is secure but not scalable enough and hence should not be
deployed.
Then we have that although a minority of the committee (a third of
the voters) thinks the protocol should be deployed, for each
criterion separately there is a supporting majority (two thirds)
that thinks the protocol passes. This discrepancy between the
majority vote on premises (security and scalability) and the
majority vote on the conclusion (deploy) was presented by Kornhauser
and Sager in 1986\cite{Kornhauser1986} and was later named `The
Doctrinal Paradox'.
The deploy-network scenario is an example for a case when the
decision is based on several arguments (while the logical connection
between the arguments and the decision is commonly accepted) and we
need to aggregate the decisions of (many) individuals into one
decision. In general, one can think of scenarios in which the
accepted opinions do not have this premise-conclusion structure ,for
instance the Arovian agenda of preferences.
A discrepancy phenomena as described above can happen for a lot of
such sets of `accepted opinions' (e.g., Condorcet Paradox for
preference aggregation) and is the subject of a growing body of
works in economics, political science, philosophy, law, and other
related disciplines. (A survey of this field can be found in
\cite{List2009} and \cite{List2008})
\newcommand{\FootnoteNonBooleanisPossibleToo}{\footnote{%
    There is some literature on aggregating non-binary issues, e.g.,
    \cite{Rubinstein1986}
    and \cite{Dokow2010a},
    but this is outside the scope of this paper.
}}%
Abstract aggregation is the field that deals with such problems that arise when aggregating several opinions (constrained to be in some given group, the agenda) to one opinion. It can be formalized in the following way. There
is a committee of $n$ individuals (also called voters) that needs to
decide on $m$ binary issues (that is, each question has exactly two
possible answers \TRUE and \FALSE\FootnoteNonBooleanisPossibleToo).
Each individual holds an {\bf opinion} which is an answer for each
of the issues. We denote the answer of the \ith voter for the \jth
issue by $X^j_i$ and the vector of all opinions in the committee
(called {\bf profile}) by
    \mbox{$X\in\left(\left\{0,1\right\}^m\right)^n$} (For the ease of presentation we will identify
\TRUE with 1 and \FALSE with 0).
Like in the example above, not all opinions are acceptable
    (one cannot accept a network protocol if it is not secure).
We assume a non-empty set \FeasableAssignments of $\{0,1\}^m$ called
the {\bf agenda } is given. The opinions in \FeasableAssignments are
called the {\bf consistent} opinions and only these opinions are
held by voters\footnote
    {For instance those might be the legal opinions, logic
    consistent opinions, or rational according to other criterion so one
    can assume that any `reasonable' individual should hold only
    consistent opinions.}.
For instance the {\bf conjunction agenda}, which is the agenda
described in the example, is defined to be the set
    \mbox{$\{000, 010, 100, 111\}$}\footnote{I.e., the third bit ia
a conjunction of the first two.}. 
A related model (that is more common in the literature) is
`Judgement Aggregation'. In this model the issues are logical
propositions over a set of variables and a consistent opinion is an
assignment to these variables (so not every combination of truth
values for the proposition is achievable). We feel that the logical
model is more suitable in cases where one searches for aggregation
mechanisms that respect the semantics of the agenda (which is
represented by the propositions). Since our work can be stated as
impossibility results, we prefer using the more general model. In
\cite{Dokow2009} Dokow and Holzman prove that the two models are
equivalent in the sense that each set of consistent opinions can be
described using a proposition set (although not uniquely).

An aggregation mechanism is a function that defines for any profile
the {\bf aggregated opinion}
(\mbox{$F:\left(\left\{0,1\right\}^m\right)^n\!\!\rightarrow\!\left\{0,1\right\}^m$}).
In this work we concentrate on two properties of aggregation
mechanisms, which were the first to be studied, independence and
consistency.
{\bf Independence} states that the aggregated opinion on the \jth
issue, $(F(X))^j$ depends solely on the opinions on that issue
$X^j$. For instance, issue-wise majority that was presented above
satisfies independence.
{\bf Consistency} of the aggregation mechanism states that whenever
all the members of the committee hold consistent opinions, i.e.,
\mbox{$X\in\FeasableAssignments^n$}, $F$ returns a consistent
opinion as well, i.e., \mbox{$F(X)\in\FeasableAssignments$}. For instance, for the conjunction agenda issue-wise majority
, as can be seen in the
    deploy-protocol
example, does not satisfy consistency. The consistency criterion is
commonly accepted as regarding a system of agents as an agent and
hence requiring it the same rationality constraint we require from
the individual agents. For instance, firms(in which decisions are
formed by
    aggregating opinions of different board members) act as a unit and
    it matters to the firms`s credibility if it can be expected to
    behave coherently.
The independence criterion, while not being uncontroversial, has
some appeal in that it guarantees a proposition-wise approach to
aggregation. In section \ref{Section_WhyIndependence} we discuss
further this criterion and claim that it can be justified in many
scenarios.

A natural question is to characterize the aggregation mechanisms that satisfy independence and consistency (for a given agenda).
For the conjunction agenda (under mild and natural constraint\footnote
    {Voter Sovereignty - For each issue, there a profile for which the aggregated opinion will be \FALSE and a profile for which the aggregated opinion will be \TRUE.})
    the only aggregation mechanisms that satisfy independence and consistency are the oligarchies (The oligarchy of a coalition $S$ returns for each issue \TRUE iff all voters in $S$ voted \TRUE for that issue).
Since, the set of consistent and independent aggregation mechanisms
is small and unnatural this characterization theorem is referenced
as an impossibility result.
Such impossibility results are quite strong.
They shows the impossibility of finding any aggregation mechanism that satisfies the conditions.
In other words: for each mechanism there will always be some judgement
profiles that lead to a breakdown of the mechanism.
In this work we deal with this point (and we think we even strengthening the impossibility result) by quantifying the number of profiles that lead to such breakdown.

For the conjunction agenda we prove:
\begin{theoremUN}\footnote{This is a special case of corollary
\ref{Main_Theorem_As_Corollary}.}
For any \mbox{$\epsilon>0$},
\mbox{$n\geqslant 1$},
    \mbox{$\delta<Cn^{-2}\epsilon^4$} (for some constant $C$ that
does not depend on $n$), no aggregation mechanism $F$ for the
conjunction agenda over $n$ voters satisfies the following five
conditions:
    \begin{itemize}\addtolength{\itemsep}{-0.5\baselineskip}
        \item $F$ is $\delta$-close to being independent.\\
            I.e., there exists an independent (not necessarily consistent) aggregation mechanism $G$ that returns the same aggregated opinion as $F$ for at least $(1-\delta)$ fraction of the profiles.
        \item $F$ is $\delta$-close to being consistent.\\
            I.e., $F$ returns a consistent result for at least $(1-\delta)$ fraction of the profiles.
        \item $F$ is $\epsilon$-far from returning always \FALSE on one of the basic premises.
        \begin{itemize}
        \item $F$ returns \TRUE as an aggregated opinion for the first issue for at least at least $\epsilon$ fraction of the profiles.
        \item $F$ returns \TRUE as an aggregated opinion for the second issue for at least at least $\epsilon$ fraction of the profiles.
        \end{itemize}
        \item $F$ is $\epsilon$-far from being an oligarchy.\\
            I.e., for any oligarchy $G$ of a coalition $S$, $F$ returns the same aggregated opinion as $G$ for at most $(1-\epsilon)$ fraction of the profiles.
    \end{itemize}
\end{theoremUN}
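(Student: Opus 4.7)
My plan is to reduce the theorem to a stability result for approximate semigroup-morphisms between $\{0,1\}$-valued functions, and then to attack that stability result via Fourier analysis of voter influences.

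\emph{Step 1 (reduction to an independent mechanism).} Using $\delta$-closeness to independence, I would first replace $F$ by the witnessing independent mechanism $G=(G^1,G^2,G^3)$, which agrees with $F$ on all but $O(\delta)$ of the consistent profiles. Uniform sampling from $\FeasableAssignments^n=\{000,010,100,111\}^n$ is equivalent to drawing $x,y\in\{0,1\}^n$ uniformly and independently and setting voter $i$'s third coordinate to $x_i\wedge y_i$. Consistency of $G$ on such a profile becomes the pointwise identity $G^3(x\wedge y)=G^1(x)\wedge G^2(y)$, so approximate consistency gives
\[
    \Pr_{x,y\in\{0,1\}^n}\!\bigl[\,G^3(x\wedge y)\ne G^1(x)\wedge G^2(y)\,\bigr]\le O(\delta).
\]
The non-triviality and oligarchy-distance hypotheses transfer from $F$ to $G$ up to absorbed constants.

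\emph{Step 2 (the exact case as a target).} Plugging $y=(1,\ldots,1)$ into the exact identity yields $G^3(x)=G^1(x)\wedge G^2(1,\ldots,1)$; combining with the symmetric substitution one obtains $G^1=G^2=G^3=:H$ (the cases where one is identically zero being ruled out by non-triviality) and $H(x\wedge y)=H(x)\wedge H(y)$. The set $H^{-1}(1)$ is then closed under meets and upward closed, so it must equal $\{x:x_i=1\ \forall i\in S\}$ for some $S\subseteq[n]$, i.e.\ $H=\bigwedge_{i\in S}x_i$. This pins down what the stable conclusion should look like in the approximate setting.

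\emph{Step 3 (Fourier-analytic stability).} Expand each $G^j$ in the Walsh-Fourier basis $\{\chi_T\}$ and use the identity $(-1)^{x_iy_i}=\tfrac12\bigl[1+(-1)^{x_i}+(-1)^{y_i}-(-1)^{x_i+y_i}\bigr]$ to derive
\[
    \Exp_{x,y}\bigl[\chi_{S_1}(x)\,\chi_{S_2}(y)\,\chi_{S_3}(x\wedge y)\bigr]=\frac{(-1)^{|S_1\cap S_2|}}{2^{|S_3|}}\cdot\mathbf{1}\{S_1,S_2\subseteq S_3\}.
\]
Substituting this into $\Exp\bigl[(G^3(x\wedge y)-G^1(x)G^2(y))^2\bigr]$ expresses the failure probability as an explicit quadratic form $Q$ in the Fourier coefficients of $G^1,G^2,G^3$, with $Q\le O(\delta)$ by Step~1. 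From $Q$ small together with $\Exp[G^1],\Exp[G^2]\ge\epsilon$ I would then extract (i) low-degree concentration of each $G^j$ via hypercontractivity applied to the terms in $Q$, (ii) a common set of pivotal voters via Friedgut's junta theorem, and (iii) identification of this junta as $\bigwedge_{i\in S}$ by matching first-order influences (which for an AND on $S$ equal $2^{-(|S|-1)}$ on $S$ and $0$ off $S$). Closeness of each $G^j$ to $\bigwedge_{i\in S}$ contradicts the $\epsilon$-farness of $F$ from every oligarchy.

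The chief obstacle, I expect, is Step 3: the form $Q$ couples the three functions non-symmetrically across triples $(S_1,S_2,S_3)$, and extracting a \emph{common} coalition $S$ shared by all three functions — rather than three a priori unrelated approximate AND-structures — will require careful use of the non-triviality hypotheses. This is presumably the source of the $n^{-2}\epsilon^4$ factor in the theorem's bound, arising from losses in the Friedgut junta step and in a BKKKL-type influence inequality used to convert a second-moment Fourier bound into pointwise closeness to an oligarchy.
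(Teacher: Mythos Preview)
Your Steps~1 and~2 are correct and match the paper's reduction: pass to an independent $G=(G^1,G^2,G^3)$, rewrite consistency as $G^3(x\wedge y)=G^1(x)\wedge G^2(y)$, and identify the exact solutions as oligarchies. The divergence is entirely in Step~3, and there it is a genuine gap rather than merely a different route.

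The paper does \emph{not} use Fourier analysis for the conjunction agenda (it reserves that for the XOR agenda). Instead it introduces a new influence-type quantity, the \emph{ignorability} $P_i(f)=\Pr[f(x)=1\mid x_i=0]$, and proves directly the pointwise inequality
\[
P_i(f^k)\cdot I_i(f^l)\;\leqslant\;4\Bigl(\prod_{j\neq k,l}d(f^j,0)\Bigr)^{-1}\widetilde{IC}(f^1,\ldots,f^m)
\]
for any voter $i$ and distinct premise indices $k\neq l$. This is the whole engine: take the junta $J$ to be the voters with small ignorability for \emph{some} $f^j$; the inequality forces every voter outside $J$ to have small influence on every $f^l$, so the edge-isoperimetric inequality (not Friedgut) shows each $f^j$ is close to its $J$-majority truncation. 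A counting argument bounds $|J|$ by $m(1+\log_2(1/\Delta))$, and then a granularity observation ($\widetilde{IC}$ of $J$-juntas lies in $2^{-m|J|}\mathbb{Z}$) snaps the truncated mechanism to exact consistency, hence to an oligarchy. The $n^{-2}$ and the polynomial-in-$\epsilon$ losses come from summing $n$ influence bounds and from the interplay between the $\Delta$ threshold and the granularity step, not from any hypercontractive or BKKKL estimate.

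Your proposed Step~3 does not get off the ground as written. The target functions $\bigwedge_{i\in S}x_i$ are \emph{not} Fourier-low-degree (an AND on $k$ variables has mass at every level $0,\ldots,k$), so ``low-degree concentration via hypercontractivity'' is aiming at the wrong structural property. Friedgut's junta theorem needs a bound on total influence, and nothing in your quadratic form $Q$ supplies one; indeed, the $\chi_{S_3}(x\wedge y)$ expansion you wrote couples Fourier levels in a way that does not isolate $\sum_i I_i(G^j)$. Finally, BKKKL lower-bounds a maximum influence for balanced functions and does not convert $L^2$ Fourier information into closeness to an oligarchy. What you are missing is precisely the $P_i(f)\cdot I_i(g)\lesssim IC$ inequality: it is proved by an elementary coupling (flip voter $i$'s bit in the $l$-th column while conditioning on $x^k_i=0$ and $f^j=1$ for $j\neq k,l$), and it is the step that turns approximate conjunction-consistency into per-voter influence control without any spectral machinery.
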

In this paper we prove similar theorems for a family of agendas:
truth-functional agendas in which every issue is either a premise or
a conclusion defined as conjunction or xor of several premises (up to input \& output
negation).
In a truth-functional agenda the issues are divided into two
types: premises and conclusions. Each conclusion $j$ is
characterized by a binary function $\Phi_j$ over the premises and
an opinion is consistent if the answers to the conclusion issues are
attained by applying the function $\Phi_j$ on the answers to the
premise issues.
\[\FeasableAssignments=
    \left\{\!\!\!\begin{array}{l|l}
    x\in\{0,1\}^m &
    x^j=\Phi^j(\text{premises})~~~\text{ for every conclusion issue $j$.}
    \end{array}\!\!\!\right\}\]
For instance the conjunction agenda is a truth-functional agenda
with two premises and one conclusion and we mark this by notating
the agenda as \TRIPLET{A}{B}{A\wedge B}. We restrict ourselves to
truth-functional agendas in which the conclusions are either
conjunction or xor up to negation of inputs and outputs\footnote
    {E.g., $A\wedge B$ and $\overline{A\wedge \overline{B}}$, $\overline{A \xor \overline{B} \xor C}$}.
\subsection{Approximate Judgement Aggregation}
Lately there is a series of works coping with impossibility results
in Social Choice using approximations (e.g.,
\cite{IoannisCaragiannis2010} and \cite{Friedgut2008}). The version
of approximation we define in this work is studying aggregation
mechanisms that are almost consistent and almost independent.
We quantify being almost consistent by defining {\bf $\boldsymbol\delta$-consistency} of an aggregation mechanism as having a consistent aggregation mechanism $G$ that disagrees with $F$ on at most $\delta$ fraction of the inputs\footnote
    {Formally,
    $\Pr\left[F(X)\neq G(X)~|~X\in\FeasableAssignments^n\right]\leqslant\delta$.}.
Similarly,  we quantify being almost independent by defining {\bf $\boldsymbol\delta$-independence} of an aggregation mechanism as having an independent aggregation mechanism that disagrees with $F$ on at most $\delta$ fraction of the inputs.
Both terms can be equivalently defined as the failure probability of tests as we show in section \ref{Section_Model} and actually the second versions of the definitions are those we use in the analysis since they are easier to work with.

Both definitions use the Hamming distance
    \mbox{$d^\Agenda(F,G)=\Pr\left[F(X)\neq G(X)~|~X\in\FeasableAssignments^n\right]$}
which includes an assumption of uniform distribution over the opinions for each voter and that voters draw their opinions independently ({\bf Impartial Culture Assumption}). This assumption, while certainly unrealistic, is the natural choice in this kind of work and is discussed further in \mbox{section \ref{Section_Model}.}

Trivially, if $F$ is close to $G$ and $G$ is independent and
consistent, $F$ is $\delta$-independent and $\delta$-consistent for
$\delta$ linear in $d(F,G)$. Our main question is whether there are
aggregation mechanisms that are close to being independent and
consistent that are not close the (usually small) set of consistent
and independent aggregation mechanisms.

In several agendas the set of consistent independent aggregation
mechanisms is a very small set(E.g., dictatorship or oligarchies)
and hence this question is equivalent to asking whether we can look for
aggregation mechanisms that are close to being independent and close to being
consistent without collapsing to the known small (perturbed) set of consistent independent aggregation mechanisms.
\subsection{Connection to Approximate Preference Aggregation}
Preceded our work is the works by Kalai\cite{Kalai2002}, Mossel\cite{Mossel2010}, and
Keller\cite{Keller2010} that proved similar approximate aggregation theorems for {\bf preference aggregation}.
In this agenda the consistent opinions represent
the linear orders over a set of candidates
\mbox{$\{c_1,c_2,\ldots,c_s\}$} and the issues are the
$\binom{s}{2}$ pair-wise comparisons between candidates\footnote
    {\newcommand{\IsGt}{{\scriptscriptstyle\overset{?}{\geqslant}}}%
For instance, for \mbox{$s\!=\!3$} (three candidates) the issues are \mbox{`$c_1\IsGt c_2$'},
    \mbox{`$c_2\IsGt c_3$'}, and \mbox{`$c_3\IsGt c_1$'} and the consistent opinions are \mbox{$\left\{001,\!010,\!100,\!110,\!101,\!011\!\right\}$}.}.%
Similarly to the doctrinal paradox, the Condorcet
Paradox\cite{Condorcet1785} shows that issue-wise majority might lead to an inconsistent result for preference aggregation as well. Arrow's theorem\cite{Arrow1950} shows that (under mild and natural constraint\footnote
    {Pareto - Whenever all the voters hold the same opinion, this is the aggregated opinion.
    }) the only aggregation mechanisms that satisfy independence and consistency are the dictatorships.

The recent works of Kalai\cite{Kalai2002}, Mossel\cite{Mossel2010}, and
Keller\cite{Keller2010} proved similar results to the results we prove here.
\begin{theoremUN}[\cite{Keller2010}]
There exists an absolute constant $C$ such that the following holds:
For any \mbox{$\epsilon>0$} and \mbox{$k\geqslant3$},
    if $f$ is an aggregation mechanism for the preference agenda
    over $k$ candidates that satisfies independence and
    $C \cdot \left(\epsilon/k^2 \right)^3$-consistency,
    , then there exists an aggregation mechanism $G$ that satisfies
    independence and consistency such that \mbox{$d(F,G)<\epsilon$}.
\end{theoremUN}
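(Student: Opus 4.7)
Since $F$ is already assumed to be independent, I can write $F=(f_{ab})_{\{a,b\}}$ where $f_{ab}:\{0,1\}^n\to\{0,1\}$ is a boolean function deciding the pairwise comparison between candidates $a$ and $b$. The plan is to prove that, under the approximate consistency assumption, each coordinate function $f_{ab}$ is close to a dictator, and moreover that all these dictators coincide with a single voter $i^\star$; then $G=\text{dictator}_{i^\star}$ is the consistent independent aggregator we want, and a union bound over the $\binom{k}{2}$ coordinates yields $d(F,G)<\epsilon$.

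First I would reduce from $k$ candidates to triples. Under the impartial culture assumption the marginal distribution of any triple of candidates $\{a,b,c\}$ is the uniform distribution on the six linear orders of $\{a,b,c\}$, so the probability that $F$ produces a Condorcet cycle on $\{a,b,c\}$ is bounded by the global inconsistency probability, i.e.\ by $O(\delta)$ with $\delta=C(\epsilon/k^2)^3$. Hence for every triple $\{a,b,c\}$ the three functions $f_{ab}, f_{bc}, f_{ca}$ form a (not necessarily consistent) aggregator for the 3-candidate preference agenda that is $O(\delta)$-consistent.

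Next I would invoke the quantitative 3-candidate Arrow theorem of Kalai/Mossel, which is proved by Fourier analysis: writing the indicator of a non-transitive triple as a low-degree polynomial in $f_{ab},f_{bc},f_{ca}$, one expands each $f$ in the Fourier basis and shows that the probability of a Condorcet cycle is bounded below by a non-trivial expression in the Fourier coefficients that vanishes only when each $f$ is a dictator on the same variable. The inverse form of this inequality (the stability statement) yields that there is a voter $i_{abc}$ and signs such that each of $f_{ab},f_{bc},f_{ca}$ agrees with the $i_{abc}$-th dictator (up to the appropriate sign/orientation) on all but an $O(\delta^{1/3})$ fraction of inputs. Choosing the constant $C$ correctly, $O(\delta^{1/3})=O(\epsilon/k^2)$.

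Finally I would stitch the local information into a global dictator. For two overlapping triples $\{a,b,c\}$ and $\{a,b,d\}$ the pair $\{a,b\}$ appears in both, so $f_{ab}$ must be simultaneously close to the $i_{abc}$-th and $i_{abd}$-th dictator; since distinct dictators disagree on half of the inputs and our closeness parameter is far below $1/2$, we conclude $i_{abc}=i_{abd}$. Connectivity of the graph of triples on $k\geq 3$ candidates then forces a single voter $i^\star$, and the signs align along the same graph to yield a global (possibly reversed) dictator $G$. Summing the per-coordinate error over all $\binom{k}{2}$ pairs gives $d(F,G)=O(k^2\cdot\epsilon/k^2)=O(\epsilon)$, which (after absorbing constants into $C$) is the desired bound. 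The main obstacle I would expect is the quantitative 3-candidate stability statement: propagating the tight cube-root dependence on $\delta$ through the Fourier inverse theorem is where the proof does its real work, whereas the triple-to-triple matching step is essentially combinatorial.
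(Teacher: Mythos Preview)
The paper does not prove this statement at all: it is quoted verbatim from \cite{Keller2010} (stated once in the introduction and again in Section~4 as ``\cite{Keller2010} theorem 1.3'') as prior work motivating the present paper's results on truth-functional agendas. There is therefore no proof in the paper to compare your proposal against.

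That said, your outline is essentially the strategy of the Kalai--Mossel--Keller papers themselves: restrict to triples, apply the quantitative 3-candidate Arrow theorem (Fourier/stability), and glue the local structure along overlapping pairs. One point you gloss over is that the class of consistent independent aggregators for the preference agenda is larger than the dictators: without a Pareto assumption it also contains anti-dictators and constant functions, so the 3-candidate stability step may hand you, for a given triple, closeness to a constant rather than to a dictator. The overlap argument ``distinct dictators disagree on half the inputs'' then needs an extra case analysis to rule out (or absorb) these degenerate local models before you can conclude a single global $i^\star$; this is handled explicitly in \cite{Keller2010} and is where the $k^2$ in the bound actually enters.
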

Using the technique we show in this paper (theorem \ref{Thm_RelaxingIIA}), one can generalize this result for characterizing the $\delta$-independent $\delta$-consistent aggregation mechanisms for preference mechanisms.
\subsection{Connection to Property Testing}
We think it is useful to phrase the question of approximate
aggregation using terms of property testing. In this field we query
a function at a small number of (random) points testing for a global
property (In our case, the property is being a consistent independent
aggregation mechanism). We discuss this connection further is
section \ref{Section_PropertyTesting}. For example a corollary of
the results we present in this paper (in property testing terms):
\begin{IndentSection}
\noindent For any three binary functions
\mbox{$f,g,h:\{0,1\}^n\rightarrow\{0,1\}$}, if
\[\Pr[f(x)\xor g(y) = h(x\xor y)]\geqslant 1-\epsilon\]
(when the addition is in $\Z_2$ and $\Z_2^n$, respectively), then
there exists three binary functions
\mbox{$f',g',h':\{0,1\}^n\!\rightarrow\!\{0,1\}$} such that
    \mbox{$\Pr[f(x)\!\neq\! f'(x)]$},
    \mbox{$\Pr[g(x)\!\neq\! g'(x)]$}, and
    \mbox{$\Pr[h(x)\!\neq\! h'(x)]$} are smaller than $C\epsilon$
for some constant $C$ independent of $n$ and
\[\forall x,y~:~f'(x)\xor g'(y) = h'(x\xor y).\]
\end{IndentSection}
Notice a special case of this result (for $f=g=h$) is the classic
result of Blum, Luby, and Rubinfeld (\cite{Blum1993},
\cite{Bellare1995}) for linear testing of boolean functions.
\subsection{Techniques}
We prove the main theorem by proving the specific case of
independent aggregation mechanism for two basic agenda families: the
conjunction agendas (agendas in which there is exactly one conclusion with is constrained to be the conjunction of the premises. \ref{Main_Theorem_mAND})
and the xor agendas (agendas in which there is exactly one conclusion with is constrained to be the xor of the premises. theorem
\ref{Main_Theorem_mXOR}). Later we show how to extend these
theorems to the general theorem of relaxing both constraints (theorem \ref{Thm_RelaxingIIA}).

We use two different techniques in the proofs.
For the conjunction agendas we study influence measures\footnote{Both
the known influence (Banzhaf power index) and a new measure we
define: The ignorability of a voter.} of voters on the
issue-aggregating functions
and for the xor agendas we use Fourier analysis of the
issue-aggregating functions.

An open question is whether one can find such bounds for any agenda
or whether there exists an agenda for which the class of aggregation
mechanisms that satisfy consistency and independence expands non
trivially when we relax the consistency and independence
constraints.

We proceed to describe the structure of the paper.
In Section 2 we describe the formal model of aggregation mechanisms.
In section 3 we give present the main
agendas we deal with, truth-functional agendas, and specifically
conjunction agendas and xor agendas.
In section 4 we state the motivation to deal with approximate
aggregation, and describe the known results for preference
approximate aggregation by Kalai, Mossel, and Keller.
In section 5 we describe the connection we find between Approximate
Aggregation and the field of Property Testing.
In sections 6 and 7 we describe our main theorems and outline the
proof.
Section 8 concludes.
\section{The Model}
\label{Section_Model}%
We define the model similarly to
\cite{Dokow2009} (which is Rubinstein and Fishburn's model
\cite{Rubinstein1986} for the binary case)

We consider a {\bf committee} of $n$ individuals that needs to
decide on $m$ issues. An {\bf opinion} is a vector
\mbox{$x=(x_1,x_2,\ldots,x_m)\in\{0,1\}^m$} denoting an answer to
each of the issues. An opinion {\bf profile} is a matrix
\mbox{$X\in\left(\left\{0,1\right\}^m\right)^n$} denoting the
opinions of the committee members so
    an entry $X^j_i$ denotes the vote of the \ith voter for the \jth issue,
    the \ith row of it $X_i$ states the votes of the \ith individual on all issues,
    and the \jth column of it $X^j$ states the votes of each of the individuals on
        the \jth issue.
In addition we assume that an {\bf agenda}
\mbox{$\FeasableAssignments\in\{0,1\}^m$} of the {\bf consistent}
opinions is given.

The basic notion in this field is an {\bf aggregation mechanism}
which is a function that returns an {\bf aggregated opinion} (not
necessarily consistent) for every profile
    \mbox{$\left(F:\left(\left\{0,1\right\}^m\right)^n\rightarrow\{0,1\}^m\right)$}\footnote
    {We define the function for all profiles for simplicity but we are not
        interested in the aggregated opinion in cases one of the voters
        voted an inconsistent opinion.}.

An aggregation mechanism satisfies {\bf Independence} (and we say
that the mechanism is {\bf independent}) if for any two consistent
profiles $X$ and $Y$ and an issue $j$, if \mbox{$X^j=Y^j$} (all
individuals voted the same on the \jth issue in both profiles) then
\mbox{$(F(X))^j=(F(Y))^j$} (the aggregated opinion for the \jth
issue is the same for both profiles). This means that $F$ satisfies
independence if one can find $m$ binary functions
\mbox{$f^1,f^2,\ldots,f^m:\{0,1\}^n\rightarrow\{0,1\}$} s.t.
\mbox{$F(X)\equiv\left(f^1(X^1),f^2(X^2),\ldots,f^m(X^m)\right)$}%
\footnote
    {Notice this property is a generalization of the IIA property
    for social welfare functions (aggregation mechanism for the
    preference agenda) so a social welfare function satisfies IIA iff it
    satisfies independence as defined here (when the issues are the
    pair-wise comparisons).}.
An independent aggregation \mbox{mechanism} satisfies {\bf
systematicity} if \mbox{$F(X)=\left<f(X^1),\ldots,f(X^m)\right>$}
for some issue aggregating function $f$, i.e., all issues are aggregated
using the same function.
We will use the notation \mbox{$\left<f^1,f^2,\ldots,f^m\right>$}
for the independent aggregation mechanism that aggregates the \jth
issue using $f^j$.

The main two measures we study in this paper are the {\bf
inconsistency index} $IC^\Agenda(F)$ and the {\bf dependence index}
$DI^\Agenda(F)$ of a given aggregation mechanism $F$ and a given
agenda \FeasableAssignments. These
measures are relaxations of the {\bf consistency} and {\bf
independence} criterion that are usually assumed in current
works\footnote
    {$F$ satisfies consistency iff \mbox{$IC(F)=0$} and independence iff \mbox{$DI(F)=0$}}.
We define the measures by
\begin{definition}[Inconsistency Index]~\\
For an agenda \Agenda and an aggregation mechanism $F$ for that
agenda, the {\bf inconsistency index} is defined to be the
probability to get an inconsistent result.
    \[IC^\Agenda(F)=\Pr\left[F(X)\notin\FeasableAssignments~|~X\in\FeasableAssignments^n\right].\]
\end{definition}
\begin{definition}[dependency index]~\\
For an agenda \Agenda and an aggregation mechanism $F$ for that
agenda, the {\bf dependency vector} $DI^{j,\Agenda}(F)$ is defined
as
{\newcommand{\EXPR}{\Exp\limits_{X\in\FeasableAssignments^n}\left[
        \Pr\limits_{Y\in\FeasableAssignments^n}\left[(F(X))^j\neq (F(Y))^j | X^j=Y^j\right]
        \right]}
    \[DI^{j,\Agenda}(F)=\EXPR.\]
}%
    {The definition can be seen as a test for independence of the \jth issue as discussed
    in section \ref{Section_PropertyTesting}}\\
The {\bf dependency index} $DI^\Agenda(F)$ is defined by
    \[DI^\Agenda(F)=\max\limits_{j=1,\ldots,m}DI^{j,\Agenda}(F)\]
\end{definition}

In contexts where the agenda is clear we omit the agenda superscript and notate
these as \mbox{$IC(F)$}, \mbox{$DI^j(F)$}, and \mbox{$DI(F)$}, respectively.

We define these two indexes using local tests. We prove that the more natural definition of distance to he class of aggregation mechanisms that satisfy consistence (or
independence) is equivalent to the above (up to multiplication by $\delta$ by constant).

\begin{proposition}
Let $F$ be an aggregation mechanism for an agenda over $m$ issues.
Then $F$ satisfies \mbox{$IC(F)\leqslant\delta$}
iff there exists a consistent aggregation mechanism $H$ that
satisfies \mbox{$d(F,H)\leqslant\delta$}.
\end{proposition}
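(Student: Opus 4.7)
The plan is to prove the two implications separately, with the construction direction being a simple local correction of $F$ and the forward direction being a one-line monotonicity argument.

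For the easy direction, suppose there exists a consistent aggregation mechanism $H$ with $d(F,H) \leqslant \delta$. Whenever $X \in \FeasableAssignments^n$ satisfies $F(X) \notin \FeasableAssignments$, consistency of $H$ forces $H(X) \in \FeasableAssignments$, and hence $F(X) \neq H(X)$. Therefore the event $\{F(X)\notin\FeasableAssignments\}$ is contained in the event $\{F(X) \neq H(X)\}$, which gives $IC(F) \leqslant d(F,H) \leqslant \delta$ directly.

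For the other direction, assume $IC(F) \leqslant \delta$. Pick any fixed consistent opinion $x_0 \in \FeasableAssignments$ (the agenda is assumed non-empty) and define
\[
H(X) = \begin{cases} F(X) & \text{if } F(X) \in \FeasableAssignments, \\ x_0 & \text{otherwise.} \end{cases}
\]
By construction $H(X) \in \FeasableAssignments$ for every profile $X$, so in particular $H$ is consistent on $\FeasableAssignments^n$. Moreover $H$ and $F$ disagree exactly on the set of profiles on which $F$ returns an inconsistent opinion, so
\[
d(F,H) = \Pr\left[F(X)\neq H(X)\mid X\in\FeasableAssignments^n\right] = \Pr\left[F(X)\notin\FeasableAssignments\mid X\in\FeasableAssignments^n\right] = IC(F) \leqslant \delta.
\]

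There is essentially no obstacle here: the statement is really an unpacking of the definitions, because being consistent is a pointwise property that can be enforced profile-by-profile without interfering with the values of $F$ on the remaining profiles. The only thing worth noting is that the agenda must be non-empty in order to be able to choose the fallback value $x_0$, which is already part of the standing assumption on \FeasableAssignments stated in the introduction.
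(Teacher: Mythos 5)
Your proof is correct. The paper states this proposition without giving a proof (it is treated as immediate from the definitions), and your argument --- the event containment $\{F(X)\notin\FeasableAssignments\}\subseteq\{F(X)\neq H(X)\}$ for one direction, and the pointwise repair of $F$ on its inconsistent profiles via a fixed fallback $x_0\in\FeasableAssignments$ for the other --- is exactly the standard argument the authors evidently had in mind, with the non-emptiness of the agenda correctly noted as the only hypothesis used.
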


\begin{proposition}\label{Proposition_DIJisEpsIndependent}
Let $F$ be an aggregation mechanism and $j$ an issue.
If \mbox{$DI^j(F)\leqslant\epsilon$}, then there exists an
aggregation mechanism $H$ that satisfies $DI^j(H)=0$ and
\mbox{$d(F,H)\leqslant 2\epsilon$}.
If \mbox{$DI^j(F)\geqslant\epsilon$}, then every aggregation
mechanism $H$ that satisfies $DI^j(H)=0$, also satisfies
\mbox{$d(F,H)\geqslant \half\epsilon$}\\
\end{proposition}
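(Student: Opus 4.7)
The plan is to reduce both bounds to pointwise estimates indexed by the $j$th column $v := X^j \in \{0,1\}^n$. Set $p(v) := \Pr_{X \in \Agenda^n}[(F(X))^j = 1 \mid X^j = v]$, where the conditioning is with respect to the induced distribution of $X^j$ when $X$ is uniform over $\Agenda^n$. Conditioned on $X^j = Y^j = v$, the bits $(F(X))^j$ and $(F(Y))^j$ are independent Bernoulli($p(v)$) variables, so the definition of the dependency vector unfolds to
\[ DI^j(F) \;=\; \Exp_v\!\bigl[2p(v)(1-p(v))\bigr]. \]

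For the first implication, I would construct $H$ explicitly by letting it agree with $F$ on every coordinate other than $j$ and setting $(H(X))^j := g(X^j)$ with $g(v) := \mathbf{1}[p(v) \geq \half]$. Then $DI^j(H) = 0$ automatically, and a direct computation gives $d(F, H) = \Exp_v[\min(p(v), 1-p(v))]$. The bound then reduces to the pointwise inequality $2p(1-p) \geq \min(p, 1-p)$, which is immediate because $\max(p, 1-p) \geq \half$; in fact this yields the sharper bound $d(F, H) \leq DI^j(F) \leq \epsilon$, comfortably within the claimed $2\epsilon$.

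For the converse, any $H'$ with $DI^j(H') = 0$ has the form $(H'(X))^j = h(X^j)$ for some $h \colon \{0,1\}^n \to \{0,1\}$, so $d(F, H') \geq \Pr[(F(X))^j \neq h(X^j)] \geq \Exp_v[\min(p(v), 1-p(v))]$, the lower bound being achieved pointwise by $h = g$. The complementary pointwise inequality $p(1-p) = \min(p, 1-p) \cdot \max(p, 1-p) \leq \min(p, 1-p)$ then yields $d(F, H') \geq \Exp_v[p(1-p)] = \half DI^j(F) \geq \half\epsilon$. I do not anticipate any real obstacle here: everything collapses to the two elementary comparisons between $2p(1-p)$, $\min(p, 1-p)$, and $p(1-p)$ on $[0,1]$, and since $d$ and $DI^j$ are computed against the same measure on $\Agenda^n$, the passage from the pointwise inequalities to expectations is automatic regardless of the structure of $\Agenda$.
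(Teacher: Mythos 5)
Your proof is correct, and while it uses the same repaired mechanism as the paper --- replace $(F(X))^j$ by the majority value of $(F(Y))^j$ over $Y$ with $Y^j=X^j$, i.e.\ $g(v)=\mathbf{1}[p(v)\geqslant\half]$ --- the analysis runs along a genuinely different line. The paper obtains the factor $2$ in the first claim by applying Markov's inequality to the random variable $Z(X)=\Pr_Y[(F(X))^j\neq(F(Y))^j\mid X^j=Y^j]$ (the disagreement event is exactly $\{Z\geqslant\half\}$), and proves the second claim via a triangle-inequality/union-bound argument showing that $d(F,H)\leqslant\epsilon'$ with $DI^j(H)=0$ forces $DI^j(F)\leqslant 2\epsilon'$, then takes the contrapositive. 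You instead derive the exact identity $DI^j(F)=\Exp_v[2p(v)(1-p(v))]$ (valid because, conditioned on a common column $v$, the two bits are i.i.d.\ Bernoulli$(p(v))$) together with $d(F,H)=\Exp_v[\min(p(v),1-p(v))]$ for the majority repair and $d(F,H')\geqslant\Exp_v[\min(p(v),1-p(v))]$ for any competitor, and then compare integrands pointwise. This buys you a sharper constant in the first claim ($d(F,H)\leqslant DI^j(F)\leqslant\epsilon$ rather than $2\epsilon$) and makes both directions fall out of one computation; the paper's Markov/triangle route is looser but requires no identification of the conditional law of $(F(X))^j$. One minor point worth making explicit in your write-up: the deduction that $DI^j(H')=0$ forces $(H'(X))^j=h(X^j)$ uses that every profile in $\Agenda^n$ has positive probability under the uniform distribution, which is what lets you upgrade ``zero in expectation'' to ``everywhere.''
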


\begin{proposition}\label{Proposition_DIisEpsIndependent}
Let $F$ be an aggregation mechanism for an agenda over $m$ issues
that satisfies \mbox{$DI(F)\leqslant\delta$}.
Then there exists an independent aggregation mechanism $H$ that
satisfies \mbox{$d(F,H)\leqslant 2m\delta$}.
\end{proposition}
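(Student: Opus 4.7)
The plan is to combine Proposition \ref{Proposition_DIJisEpsIndependent} issue-by-issue with a union bound.

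First I would apply Proposition \ref{Proposition_DIJisEpsIndependent} separately to each of the $m$ coordinates. Since $DI(F)=\max_j DI^j(F)\leqslant\delta$, each $DI^j(F)\leqslant\delta$, so for every $j$ there is an aggregation mechanism $H_j$ with $DI^j(H_j)=0$ and $d(F,H_j)\leqslant 2\delta$. The point of the hypothesis $DI^j(H_j)=0$ is that the $j$th coordinate of $H_j(X)$ depends only on the column $X^j$; hence I can extract a function $f^j:\{0,1\}^n\rightarrow\{0,1\}$ such that $(H_j(X))^j = f^j(X^j)$. The bound $d(F,H_j)\leqslant 2\delta$ implies in particular that the $j$th coordinates agree on all but a $2\delta$ fraction of inputs, so
\[\Pr_{X\in\FeasableAssignments^n}\!\bigl[(F(X))^j\neq f^j(X^j)\bigr]\leqslant 2\delta.\]

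Next I would assemble the extracted coordinate functions into the candidate independent mechanism
\[H(X)\;\define\;\bigl\langle f^1(X^1),\,f^2(X^2),\,\ldots,\,f^m(X^m)\bigr\rangle.\]
By construction $H$ is independent: each coordinate depends only on the corresponding column. It remains to estimate $d(F,H)$, and for this I would apply a union bound over the $m$ issues:
\[d(F,H)=\Pr_{X\in\FeasableAssignments^n}\!\bigl[F(X)\neq H(X)\bigr]\leqslant\sum_{j=1}^m\Pr_{X\in\FeasableAssignments^n}\!\bigl[(F(X))^j\neq f^j(X^j)\bigr]\leqslant 2m\delta,\]
which is exactly the claimed bound.

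The argument is essentially bookkeeping once Proposition \ref{Proposition_DIJisEpsIndependent} is in hand, so I do not expect a genuine obstacle. The only subtle point is recognizing that the mechanism $H_j$ delivered by the per-coordinate statement need not be close to $F$ on \emph{every} coordinate, so one should not try to reuse a single $H_j$ as the final $H$; instead one takes only the $j$th coordinate function of $H_j$ and splices the $m$ such functions together. The factor $m$ in the conclusion is a consequence of this union bound and reflects the fact that the coordinate failures can be disjoint.
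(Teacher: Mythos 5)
Your proposal is correct and follows essentially the same route as the paper: the final mechanism is built coordinate-by-coordinate from majority-style functions of the individual columns, and the $2m\delta$ bound comes from a union bound over the $m$ issues combined with the per-issue factor of $2$. The only cosmetic difference is that you invoke Proposition \ref{Proposition_DIJisEpsIndependent} as a black box to produce each coordinate function, whereas the paper inlines the identical construction (defining $f^j(t)$ by majority over profiles with $X^j=t$ and applying Markov's inequality directly); the underlying argument and constant are the same.
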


These definitions include two major assumptions on the opinion
profile distribution. First, we assume the voters pick their
opinions independently and from the same distribution. Second, we
assume a uniform distribution over the (consistent) opinions for
each voter ({\bf Impartial Culture Assumption}).
The uniform distribution assumption, while certainly unrealistic, is
the natural choice for proving `lower bounds' on $IC(F)$. That is,
proving results of the format `Every aggregation mechanism of a
given class has inconsistency index of at least $\gamma(n)$'. In
particular, the lower bound, up to a factor $\delta$, applies also
to any distribution that gives each preference profile at least a
$\delta$ fraction of the probability given by the uniform
distribution.\footnote
    {In successive works we relax this assumption and prove similar
    results for more general distributions.}%
 Note that we cannot hope
to get a reasonable bound result for every distribution. For
instance, since for every aggregation mechanism we can take a
distribution on profiles for which it returns a consistent opinion.
\subsection{The Independence Property}
\label{Section_WhyIndependence}%
The independence criterion is sometimes criticized as being
unjustified normatively in most real-life scenarios\footnote
        {Chapman(\cite{Chapman}) and Mongin(\cite{Mongin2008})
attack this criterion and claim it removes the discipline of
reason from social choice since it disregards the intra-issue
dependencies which is the essence of the problem. According to this
criterion the aggregation of `complex' issues is done without
regarding the reasons of the voters for their opinions and hence
lacks the information for good aggregation.}.
The impossibility results of judgement aggregation can also be seen
as `empirical' argument against independence since they show that it
contradicts the more natural assumption of consistency. While we
accept this argument, we think our  work quantifies the
tradeoff between the two criteria.
Moreover, in this section we claim that in the general case this criterion
can be justified on several different grounds.

First, in a lot of cases it is justified to expect due to normative
or legal reasons that changing an individual judgement on an issue
should not change the collective judgement on another issue.

Secondly, there are works that defend this criterion by using
manipulation-resistance arguments.
In \cite{Dietrich2007b} Dietrich
and List define the notion of manipulability of an aggregation
mechanism\footnote
    {An aggregation mechanism $F$ is manipulable at the profile $X$
    by individual $i$(the manipulator) on issue $j$ if
        \mbox{$X^i_j\neq\left(F\left(X\right)\right)_j$} (The
    manipulator disagrees with the aggregated opinion on issue $j$), but
        \mbox{$X^i_j=\left(F\left(X'\right)\right)_j$}
    for some profile $X'$ that differs form $X$ in $i$`s vote only.(I.e.
    the manipulator can get his will on $j$ by voting differently)%
    }
and proved that any aggregation mechanism that does not satisfy
independence is manipulable.
In this paper they further prove that this manipulability property
is equivalent to a more game-theoretic property of
strategy-proofness under some assumptions on players' preferences.

On the ground of simplicity of representation one can justify
independence as a criterion that returns aggregation mechanisms that
are easy to represent, calculate, or justify (For instance, justify
a voting method to the public).

Other grounds of justification for such aggregation mechanisms are
from the voter point of view.
There might be a situations in which the decisions are made over
time (different meetings) or place (different representatives of the
same voting identity) so it is fair to assume that when voting on an
issue or aggregating the votes it is unreasonable to depend on
votes on other issues.
Another argument might be than there are scenarios in which you need
to define the aggregation method and only at a later stage choose from the set of issues
the relevant one (For instance, the definition of Social Welfare
Functions as returning a choice function so only at a later stage
the society is faced with the menu of alternatives).
\subsection{Binary Functions}
Since this work deals with binary functions (for aggregating
issues), we need to define several notions for this framework as
well.
To ease the presentation, throughout this paper we will identify
\TRUE with 1 and \FALSE with 0 and use logical operators on bits and
bit vectors (using entry-wise semantics).

Let \mbox{$f:\{0,1\}^n\rightarrow\{0,1\}$} be a binary function.
$f$ is the {\bf oligarchy} of a coalition $S$ if it is of the form:
\mbox{$f(x)=\prod\limits_{i\in S}x_i$}. This means that $f$ returns $1$ if all
the members of $S$ voted 1. We denote by \pmb{\OLIG} the class of
all $2^n$ oligarchies. Two special cases of oligarchies are the
constant 1 function which is the oligarchy of the empty coalition
and the dictatorships which are oligarchies of a single voter.

$f$ is a {\bf linear} function if it is of the form
\mbox{$f(x)=\xor\limits_{i\in S}x_i$} for some coalition
$S$\footnote{\mbox{An equivalent definition is:
    \mbox{$\forall x,y:~f(x)\!+\!f(y)=f(x\!+\!y)$}} when the addition is in
$\Z_2$ and $\Z_2^n$, respectively.}. This means that $f$ returns $1$ if an even
number of the members of $S$ voted 1. We denote by \pmb{\LIN} the
class of all $2^n$ linear functions. Two special cases of linear
functions are the constant 1 function which is the xor function over
the empty coalition and the dictatorships which are xor of a single
voter.

We say that $f$ satisfies the {\bf Pareto} criterion if
\mbox{$f(\bar{0})=0$} and \mbox{$f(\bar{1})=1$}\footnote
    {In the literature this criterion is sometimes referred to as
    Unanimity, e.g., in \cite{List2009}. We choose to follow
    \cite{Dokow2010} and refer to it as Pareto to distinguish
    between it and the unanimity function which is the oligarchy
    of \mbox{$\{1,2,\ldots,n\}$}.}.
I.e., when all the individuals voted unanimously 0 then $f$
should return 0 and similarly for the case of 1.

We define two measures for the influence of an individual on a
function \mbox{$f:\{0,1\}^n\rightarrow\{0,1\}$}. Both definitions
use the uniform distribution over \mbox{$\{0,1\}^n$} (which is
consistent with the assumption we have on the profile distribution).
\begin{itemize}
\item The {\bf influence}\footnote{In the simple cooperative  games
regime, this is also called the Banzhaf power index of player $i$ in
the game $f$.} of a voter $i$ on $f$ is defined to be the
probability that he can flip the result by changing his vote.
    \[\Inf{i}{f}=\Pr[f(x)\neq f(x \xor e_i)]\]
($x \xor e_i$ : $e_i$ = the \ith elementary vector. It is equivalent
to flipping the \ith bit \mbox{$0\leftrightarrow1$})
\item The (zero-){\bf ignorability} of a voter $i$ on $f$ is defined to be
    the probability that $f$ returns 1 when $i$ voted 0.%
    \[\OSI{i}{f}=\Pr[f(x)=1~|~x_i=0]\]
(We did not find a similar index defined in the voting literature or
in the cooperative games literature).
\end{itemize}

In addition we define a distance function over the binary
functions. The distance between two functions
\mbox{$f,g:\{0,1\}^n\rightarrow\{0,1\}$} is defined to be the
probability of getting a different result (normalized Hamming
distance). \mbox{$d(f,g)=\Pr\left[f(x)\neq g(x)\right]$}. From
this measure we will derive a distance from a function to a set of
functions by \mbox{$d(f,\mathcal{G})=\min\limits_{g\in\mathcal{G}}d(f,g)$}

One more notation we are using in this paper is $x_{_J}$ for a binary vector
\mbox{$x\in\{0,1\}^n$} and a coalition \mbox{$J\subseteq\{1,2,\dots
n\}$} for notating the entries of $x$ that correspond to $J$.
\section{Agenda Examples}
A lot of natural problems can be formulated in the framework of
aggregation mechanisms. It is natural to divide the agendas to two
major classes {\bf Truth-Functional Agendas} and {\bf Non
Truth-Functional Agendas}.

\subsection{Truth-Functional Agendas}%
In a truth-functional agenda the issues are divided into two types:
$k$ premises and \mbox{$(m-k)$} conclusions. The conclusion issues
are binary functions over the $k$ premises,
\mbox{$\Phi:\{0,1\}^k\rightarrow\{0,1\}^{m-k}$}. An opinion is
consistent if the answers to the conclusion issues are attained by
applying the function $\Phi$ on the premise issues.
\[\FeasableAssignments=\left\{\begin{array}{l|r}
    x\in\{0,1\}^m&
    x^j=\Phi^j(x_1,\ldots,x_k)~~~j=k+1,\ldots,m
\end{array}\right\}\]
There are cases in which there might be more than one way to
classify the issues to premises and conclusions. For instance,
the 2-premises xor agenda described below
\mbox{$\FeasableAssignments=\left\{001,010,100,111\right\}$} can be
defined both as \TRIPLET{A}{B}{A\xor B} and as \TRIPLET{A}{A\xor C}{C}. Since
we choose to analyze the agenda as opinion sets (and not as a
proposition set) we do not handle this point and notice that it is
irrelevant for our results.

These agendas ,due their structure, seems to be a good point to
start our work on approximate aggregation and in this paper we prove basic results
for the following two families of truth-functional agendas. Later we
derive results to a more general family of truth-functional agendas.
\subsubsection{Conjunction Agendas}
In the ($m$-premises) conjunction agenda
\mANDagenda there are
\mbox{$m+1$} issues to decide on and the consistency criterion is
defined to be that the last issue is a conjunction of the other
issues. For instance the Doctrinal Paradox agenda is the 2-premises
conjunction agenda. A common description of the problem is of a
group of judges or jurors that should decide whether a defendant is
liable under a charge of breach of contract. Each of them should
decide on three issues: whether the contract was valid ($p$),
whether there was a breach ($q$) and whether the defendant is liable
($r$). In their decision making they are constrained by the legal
doctrine that the defendant is only liable if the contract was valid
and if there was indeed a breach (\mbox{$r \iff (p \wedge q)$}).
\subsubsection{Xor Agendas}
Similarly, in the ($m$-premises) xor agenda
\kXORagenda{m} there are
\mbox{$m+1$} issues to decide on and the consistency criterion is
defined to be that the third issue is \TRUE if the number of
true-valued opinions for the first $m$ is even. An equivalent way to
define this agenda is constraining the number of \TRUE answers to be
odd.

\subsection{Non Truth-Functional Agendas}%
One can think on a lot of agendas that cannot be represented as a truth-functional agenda.
Among such interesting natural agendas that were studied one can find the equivalence
agenda\cite{Fishburn1986}, the membership agenda
\cite{Rubinstein1998}\cite{Miller2008}, and the preference agenda described below.
\subsection{Preference Aggregation}
Aggregation of preferences is one of the oldest aggregation
frameworks studied. In this framework there are $s$ candidates and
each individual holds a full strict order over them. We are
interested in Social Welfare Functions which are functions that
aggregate $n$ such orders to an aggregated order. As seen in
\cite{Nehring2003} and \cite{Dietrich2007a}, this problem can be
stated naturally in our framework by defining $\binom{s}{2}$
issues\footnote{The issue $\left<i,j\right>$ (for \mbox{$i\!<\!j$}) represents
whether an individual prefers $c_i$ over $c_j$.}.
\section{Motivation \& Known Results}
We find the motivation for dealing with the field of approximate
aggregation in three different disciplines.
\begin{itemize}
\item The consistent characterization are often regarded as
    `impossibility results' in the sense that they `permit' a very
    restrictive set of aggregation mechanisms. (e.g., Arrow's theorem
    tells us that there is no `reasonable' way to aggregate
    preferences). Extending these theorems to approximate aggregation
    characterizations sheds light on these impossibility results by
    relaxing the constraints.
\item
    The questions of Aggregation Theory have often  roots in
    Philosophy, Law, and Political Science.
    There is a long line of works suggesting consistent aggregating
    mechanisms while still trying to stay `reasonably close' to
    independence. The main general (not agenda-tailored) suggestions
    are premise-based mechanisms and conclusion-based aggregation
    for truth-functional agendas(see, among others,
        \cite{Kornhauser1986}   \cite{Kornhauser1992}
        \cite{Pettit2001}       \cite{List2002}
        \cite{Dietrich2006}     \cite{Bovens2006}), and a
    generalization of them to non-truth-functional agendas called
    sequential priority aggregation(\cite{List2004a},
    \cite{Dietrich2007c}). Another procedure in the literature is the
    distance-based aggregation(\cite{Pigozzi2006}) which is a well
    known for preference aggregation (E.g., Kemeny voting
    rule\cite{Kemeny1959}, Dodgson voting rule\cite{Black1986}, and
    lately a more systematic analysis in \cite{Elkind}).
    Our work contribute to this discussion by pointing out where one
    should search for solutions while not leaving the consistency
    and independence constraints entirely.
\item Connections to Property Testing as discussed in section
\ref{Section_PropertyTesting}
\end{itemize}
The first work studying approximate aggregation was done for the
preference agenda over three candidate by Kalai\cite{Kalai2002}
(although without stating the general framework of approximate
aggregation). This work was generalized by Mossel\cite{Mossel2010},
and Keller\cite{Keller2010}.
\begin{theoremUN}[\cite{Keller2010} theorem 1.3]
There exists an absolute constant $C$ such that the following holds:
For any \mbox{$\epsilon>0$} and \mbox{$k\geqslant3$},
    if $f$ is an aggregation mechanism for the preference agenda
    over $k$ candidates that satisfies independence and
    $C \cdot \left(\epsilon/k^2 \right)^3$-consistency,
    , then there exists an aggregation mechanism $G$ that satisfies
    independence and consistency such that \mbox{$d(F,G)<\epsilon$}.
\end{theoremUN}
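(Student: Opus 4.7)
The plan is to reduce the $k$-candidate preference aggregation problem to the $3$-candidate case, use a quantitative (Fourier-analytic) version of Arrow's theorem on triples, and then glue the triple-wise conclusions into a single global dictator. Because $F$ is assumed independent, it is fully specified by a family of pairwise functions $f_{ij}\colon\{0,1\}^n\to\{0,1\}$ indexed by the $\binom{k}{2}$ pairs of candidates, and consistency of $F$ just says that the vector of outcomes is a linear order, equivalently, no Condorcet cycle appears on any triple of candidates. This local structure of the consistency constraint is what drives the reduction.

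First I would fix a triple $T=\{a,b,c\}\subseteq[k]$ and consider the $3$-candidate aggregation mechanism $F_T$ induced by $(f_{ab},f_{bc},f_{ac})$. A marginalization argument shows that if $F$ is $\delta$-consistent under the uniform distribution over linear orders on $[k]$, then $F_T$ is $\delta'$-consistent under the uniform distribution over linear orders on $T$, where $\delta'$ loses at most a constant factor (independent of $k$), because the distribution of votes projected onto pairs inside $T$ is exactly the $3$-candidate impartial culture distribution. In particular, the inconsistency probability on the triple is bounded by $O(\delta)$.

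Next I would invoke Kalai's quantitative Arrow theorem for $3$ candidates: if an independent mechanism on $3$ candidates has inconsistency probability $\delta_0$, then each of the three pairwise functions is $O(\delta_0^{1/3})$-close (in normalized Hamming distance) to a Boolean function of the single-voter form, and moreover these three near-dictators are compatible, i.e.\ they all point to the same voter $i=i(T)$ (or to matching constant/anti-dictator choices at the boundary). The proof of this $3$-candidate step writes the cyclic-outcome probability as a trilinear form in the Fourier expansions of $f_{ab},f_{bc},f_{ac}$ on the hypercube, and combines hypercontractivity with an FKN/Friedgut-Kindler-Safra type structure theorem to extract a low-influence junta; the cube-root rate arises directly from this Fourier estimate. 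Applying this to each triple $T$ yields a candidate dictator $i(T)$ for every triple, with $f_{ij}$ being $O((\delta/k^{2})^{1/3}\cdot k^{2/3})=O((\epsilon/k^{2}))$-close to the dictator function corresponding to $i(T)$ after absorbing the constant $C$ properly.

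Finally I would show that the map $T\mapsto i(T)$ is constant: any two triples share a pair, and the function on that shared pair cannot be simultaneously $O(\epsilon/k^{2})$-close to two different dictators once $\epsilon/k^{2}$ is sufficiently small, so $i(T)=i^{*}$ for a single voter $i^{*}$. Defining $G$ to be the dictatorship of $i^{*}$, which is trivially independent and consistent, a union bound over the $\binom{k}{2}$ issues gives
\[
d(F,G)\;\leqslant\;\sum_{i<j}d(f_{ij},\text{dict}_{i^{*}})\;\leqslant\;\binom{k}{2}\cdot O(\epsilon/k^{2})\;<\;\epsilon
\]
for an appropriate absolute constant $C$. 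The main obstacle is the $3$-candidate step: obtaining the cubic bound $\delta^{1/3}$ rather than a weaker polynomial rate requires the precise Fourier/hypercontractivity analysis, and handling the degenerate regimes (where some $f_{ij}$ is close to a constant rather than to a dictator) demands care so that triple-wise conclusions still glue into a single global mechanism. A secondary subtlety is keeping the loss factors in $k$ polynomial, which is why the union bound over issues balances precisely against the $(\epsilon/k^{2})^{3}$ in the hypothesis.
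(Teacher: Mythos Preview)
The paper does not prove this theorem at all: it is quoted as \cite{Keller2010}, Theorem~1.3, in the ``Motivation \& Known Results'' section, and no argument is given here. So there is no in-paper proof to compare your proposal against; the result is imported wholesale from Keller's paper (building on Kalai and Mossel).

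That said, two remarks on your outline. First, the reduction-to-triples strategy and the Fourier/hypercontractive analysis on each triple is indeed the route taken in the Kalai--Mossel--Keller line of work, and your bookkeeping with $\delta=C(\epsilon/k^2)^3$ leading to per-pair distance $O(\epsilon/k^2)$ and then a union bound over $\binom{k}{2}$ pairs is the right arithmetic. Second, your gluing step contains a slip: it is not true that ``any two triples share a pair'' once $k\geqslant 6$ (e.g.\ $\{1,2,3\}$ and $\{4,5,6\}$). The correct argument is that the hypergraph whose vertices are triples and whose edges connect triples sharing a pair is connected, so the assignment $T\mapsto i(T)$ is constant along this connectivity; equivalently, one fixes a pair $\{a,b\}$ and varies the third candidate to see that all triples through $\{a,b\}$ agree, then varies the pair. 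You should also be a bit more careful in the $3$-candidate step about the degenerate branch where some $f_{ij}$ is near a constant rather than a dictator: Arrow's exact theorem on three candidates allows such degenerate consistent mechanisms, and ruling them out (or incorporating them) globally is part of what makes Keller's argument delicate.
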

\section{Connection to Property Testing}
\label{Section_PropertyTesting}
In the words of \cite{Ron2001},
the field of property testing deals with the following:
\begin{IndentSection}
    Given the ability to perform (local) queries concerning a
    particular object (e.g., a function or a graph), the task is to
    determine whether the object has a predetermined (global)
    property (e.g., linearity or bipartiteness), or is far from
    having the property. The task should be performed by inspecting
    only a small (possibly randomly selected) part of the whole
    object, where a small probability of failure is allowed.\\
    Property testing trades accuracy (the distance parameter) for
    efficiency (number of queries).
\end{IndentSection}
We think it is useful to view the Approximate Aggregation problem in
the framework of Property Testing.
Below we highlight the connection between Approximate Aggregation
and a special case of Property testing termed `one-sided
non-adaptive program testing'. For a general survey of the field,
one can read \cite{Fischer2001a}, \cite{Ron2001}, and
\cite{Goldreich1999}.

In our case the global property we are trying to test is
`consistency and independence' of an aggregation mechanism. The class
of satisfying aggregation mechanism is characterized by the current state of
research.
It is clear that each of the components of this property separately,
consistency and independence of an issue, can be tested trivially.
The consistency test consists of picking a (consistent) profile
uniformly at random and checking whether the aggregated opinion is
consistent. The test for independence of issue $j$ consists of
picking a (consistent) profile uniformly at random altering the
opinion for each voter without changing the \jth bit and check
whether the aggregated opinion on the \jth issue is changed due to
the altering. For each of the two tests the probability to accept a
non-satisfying mechanism is linear in the distance to the satisfying
set (and equals $IC(F)$ and $DI^j(F)$, respectively).
The main question of this work (in property testing terms) can be
stated as follows: What is the best test for being `consistent and
independent' one can assemble from running the ($m$+1) tests as
black boxes (and therefore get information only on $IC(F)$ and
$DI^j(F)$).
Similar question was asked lately in \cite{Chen2010}. In
\cite{Chen2010} the authors query (among other similar questions)
the conditions needed in order to deduce from testability of two
properties the testability of the intersection of the two properties.
Our work can be seen as studying this question for a specific domain
in which the question seems to be natural and while adding the
constraint that the test of the intersection property should be
defined as a sequence of tests for the basic properties.

The main result of this paper is that for a class of mechanisms
(corresponding to a natural class of agendas) one can assemble those
tests to a test for the property `consistent and independent'.

Similarly one can state questions dealing with sub-families of
aggregation mechanisms. For example, as we stated in the
introduction, the classic result of Blum, Luby, and Rubinfeld for
linearity testing of boolean functions is a direct corollary of our
result for the 2-premises xor agenda when considering systematic aggregation
mechanisms.

Still, the target of the two fields is different. While Property
Testing deals with finding the most efficient (query-wise) algorithm
for testing a property (functions family), Approximate Aggregation
deals with analyzing a specific family of tests.
\section{Main  Results}
\label{Section_Results}%
The main result of this paper is
\begin{theorem}\label{MainTheorem_IC_DC}~\\
For any \mbox{$\epsilon>0$} and \mbox{$m,n\geqslant 1$}, there
exists \mbox{$\delta_{_{IC}},\delta_{_{DI}}
    =n^{-2}\left(\frac{\epsilon}{m}\right)^{poly(m)}$}, such that
for every truth-functional agenda \Agenda over $m$ issues, in
which each non-premise issue is defined to be either
        conjunction of several premises
    or
        xor of several premises%
(up to negation of inputs or output)\footnote
    {
        For example
        \mbox{\TRIPLET{A}{B}{A\wedge B}},
        \mbox{$\TRIPLET{A}{B}{A\rightarrow B} \equiv \TRIPLET{A}{B}{\overline{A \wedge \overline{B}}}$},
        \mbox{$\left<A,B,C,A \wedge B, B \xor C, A \vee C\right>$}.%
    },
if $F$ is an aggregation mechanism for \Agenda over $n$ voters
satisfying
    \mbox{$\delta$-independence} and \mbox{$\delta$-consistency},
then there exists an aggregation mechanism $G$ that satisfies
consistency and independence such that
\mbox{$d(F,G)<\epsilon$}\\Moreover, one can take
{%
\newcommand{\EXPRA}{
    n^{-2}\left(\frac{\left(1-\beta_\epsilon\right)\epsilon}{5m}\right)^{m^2\!+\!m\!-1}
    \!\!-\!\beta_\epsilon\epsilon}%
\newcommand{\EXPRB}{\frac{1}{2m}\beta_\epsilon\epsilon}%
    \mbox{$\delta_{_{IC}}=\EXPRA$}
    ~and~
    \mbox{$\delta_{_{DI}}=\EXPRB$}
}%
 for any \mbox{$\beta_\epsilon\in\left[0,n^{-2}\left(\frac{\epsilon}{m}\right)^{m^2\!+\!m\!-1}\right]$}.
\end{theorem}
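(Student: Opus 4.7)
The plan is to reduce the general truth-functional case to the two basic cases already covered (conjunction agendas via Theorem~\ref{Main_Theorem_mAND} and xor agendas via Theorem~\ref{Main_Theorem_mXOR}), by first stripping the dependence-index slack via Proposition~\ref{Proposition_DIisEpsIndependent} and then decomposing by conclusion. The parameter $\beta_\epsilon$ partitions the total error budget $\epsilon$ between an ``independent-ization'' phase (cost $\beta_\epsilon\epsilon$) and a ``consistent-ization'' phase (cost $(1-\beta_\epsilon)\epsilon$), which explains the form of both $\delta_{_{IC}}$ and $\delta_{_{DI}}$.

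First I would apply Proposition~\ref{Proposition_DIisEpsIndependent} to $F$. Since $DI(F)\leq\delta_{_{DI}}=\beta_\epsilon\epsilon/(2m)$, there is an independent $H=\langle h^1,\dots,h^m\rangle$ with $d(F,H)\leq 2m\delta_{_{DI}}=\beta_\epsilon\epsilon$. Consequently $IC(H)\le IC(F)+d(F,H)\leq\delta_{_{IC}}+\beta_\epsilon\epsilon=n^{-2}\!\left(\frac{(1-\beta_\epsilon)\epsilon}{5m}\right)^{m^2+m-1}$, which is the per-mechanism inconsistency budget to be spent downstream. A union bound over the $(m-k)$ conclusions then gives that for each conclusion $j$, the ``local'' inconsistency probability $\Pr[h^j(X^j)\neq\Phi^j(\text{premise aggregates})]$ is bounded by $IC(H)$. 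Since the premises are unconstrained in a truth-functional agenda, the marginal distribution on any subset of premises is uniform, so restricting $H$ to the premises entering $\Phi^j$ together with $j$ itself yields a genuine independent aggregation mechanism on a conjunction or xor sub-agenda.

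Next, for each conclusion $j$, I would absorb the input/output negations (which preserve the uniform-premise distribution) so that $\Phi^j$ becomes a pure $m_j$-ary conjunction or $m_j$-ary xor, and apply the relevant basic theorem. In both cases the conclusion is a dichotomy: either some relevant premise aggregation $h^i$ is $\epsilon$-close to a constant (in the conjunction case, close to $\FALSE$; in the xor case this scenario does not arise, since the xor case always forces the linear structure up to the nontrivial/constant split), or else the premise aggregations and $h^j$ itself are $\epsilon$-close to a common oligarchy of some coalition $S_j$ (conjunction case) or a common linear function over some coalition $T_j$ (xor case). The exponent $m^2+m-1$ is essentially what one pays when feeding the bound $IC(H)$ into (a polynomial application of) the basic theorem $m$ times and taking the worst-case scaling in $m$.

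Finally, I would assemble a consistent independent $G$ by reconciling these per-conclusion verdicts. For each premise $i$, the coalitions $S_j$ and linear sets $T_j$ coming from the conclusions containing $i$ must agree; when they do, set $g^i$ to the common structure, and when a premise was flagged ``close to constant'' for some conclusion, set $g^i\equiv 0$ (which forces every conjunction conclusion containing $i$ to also be constant $0$, and is compatible with any xor constraint after we propagate). Each such reconciliation moves each $h^j$ by at most the $\epsilon$-gap guaranteed by the basic theorem, and summing over the $m$ issues yields $d(H,G)\leq(1-\beta_\epsilon)\epsilon$, hence $d(F,G)\leq\epsilon$ by the triangle inequality. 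The main obstacle will be precisely this reconciliation step: when two conclusions share a premise but one is a conjunction and the other is an xor, the only mutually consistent premise aggregations are the highly restricted ones (a single dictator, or a constant), and one must verify that the $\epsilon$-neighborhoods forced by the two separate applications of the basic theorems are simultaneously inhabited — this is what pins down the large exponent on $\epsilon/m$ inside $\delta_{_{IC}}$ and where the care in bookkeeping really lies.
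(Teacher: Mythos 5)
Your proposal follows essentially the same route as the paper: reduce to the independent case via Proposition~\ref{Proposition_DIisEpsIndependent} (this is exactly the content of Theorem~\ref{Thm_RelaxingIIA}, with the same split of the budget $\epsilon$ into $\beta_\epsilon\epsilon$ and $(1-\beta_\epsilon)\epsilon$), then apply Theorems~\ref{Main_Theorem_mAND} and~\ref{Main_Theorem_mXOR} conclusion-by-conclusion after absorbing negations as renamings of opinions, and recombine --- the paper phrases this last step as an induction on the number of conclusions, which faces the same shared-premise reconciliation bookkeeping you correctly flag as the delicate point. One minor correction: the exponent $m^2+m-1$ arises from a single inversion of the bound $5m\left(n^2\epsilon\right)^{1/(m^2+m-1)}$ of Theorem~\ref{Main_Theorem_mAND}, not from feeding the basic theorem into itself $m$ times.
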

A direct corollary is the following impossibility result.
\begin{corollary}\label{Main_Theorem_As_Corollary}
For any \mbox{$m,n\geqslant 1$} and \mbox{$\epsilon,\delta\in[0,1]$} s.t.
    \mbox{$\delta<n^{-2}\left(\frac{\epsilon}{10m}\right)^{m^2+m-1}$},
,and a truth-functional agenda \Agenda over $m$ issues,
    in which each non-premise issue is defined to be either
        conjunction of several premises
    or
        xor of several premises%
(up to negation of inputs or output)%
, no aggregation mechanism $F$ for \Agenda over $n$ voters satisfies
the following three conditions:
    \begin{itemize}\addtolength{\itemsep}{-0.5\baselineskip}
        \item \mbox{$\delta$-independence}
        \item \mbox{$\delta$-consistency}
        \item $F$ is $\epsilon$-far from any independent and consistent aggregation mechanism for \Agenda.
    \end{itemize}
\end{corollary}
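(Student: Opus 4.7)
The corollary is an immediate contrapositive of Theorem \ref{MainTheorem_IC_DC}, so the entire plan reduces to a choice of the free parameter $\beta_\epsilon$ that converts the single clean hypothesis $\delta < n^{-2}(\epsilon/(10m))^{m^2+m-1}$ into the two bounds $\delta \leq \delta_{IC}$ and $\delta \leq \delta_{DI}$ appearing in the theorem. Assuming for contradiction that some $F$ satisfies all three listed conditions, such a $\beta_\epsilon$ lets me invoke the theorem to obtain an independent and consistent mechanism $G$ with $d(F,G) < \epsilon$, directly contradicting the third condition.

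The shape of $\delta_{DI} = \tfrac{1}{2m}\beta_\epsilon\epsilon$ essentially pins the choice to $\beta_\epsilon := 2m\delta/\epsilon$, which makes $\delta_{DI} = \delta$ exactly and matches the $\delta$-independence hypothesis. First I would verify that this $\beta_\epsilon$ falls in the permissible interval $\bigl[0,\, n^{-2}(\epsilon/m)^{m^2+m-1}\bigr]$; the corollary's hypothesis on $\delta$ gives this with wide margin, and in particular ensures $\beta_\epsilon \leq \tfrac{1}{2}$.

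The remaining and only nontrivial step is checking $\delta_{IC} \geq \delta$. With $(1-\beta_\epsilon) \geq \tfrac{1}{2}$, the first summand of $\delta_{IC}$ is bounded below by $n^{-2}(\epsilon/(10m))^{m^2+m-1}$, and the subtracted term $\beta_\epsilon\epsilon = 2m\delta$ is absorbed by the gap between this lower bound and $\delta$. This is exactly why the denominator $10m$ (rather than $5m$) appears in the corollary: the extra factor of $2^{m^2+m-1}$ inside the power provides enough slack to soak up both the $(1-\beta_\epsilon)$ attenuation and the $-2m\delta$ correction. The step is purely arithmetic and is the only place where precise constants matter; once it is verified, Theorem \ref{MainTheorem_IC_DC} delivers $G$ as a black box, completing the contradiction.
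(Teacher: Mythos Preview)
Your proposal is correct and matches the paper's approach: the paper states the result as ``a direct corollary'' of Theorem~\ref{MainTheorem_IC_DC} without further argument, and your choice $\beta_\epsilon = 2m\delta/\epsilon$ together with the routine arithmetic you outline is precisely the intended way to extract the single-$\delta$ bound from the two thresholds $\delta_{IC},\delta_{DI}$. The only minor caveat is that the range condition on $\beta_\epsilon$ stated in Theorem~\ref{MainTheorem_IC_DC} is really a sufficient condition for $\delta_{IC}\geq 0$ (cf.\ the more general constraint in Theorem~\ref{Thm_RelaxingIIA}), so once you have verified $\delta_{IC}\geq\delta>0$ directly, that range check is automatically subsumed.
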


In the case of xor agenda (and its generalization, a
truth-functional agenda in which all the conclusions are xor)
we can get a much better result (e.g., no dependence on the number
of voters)
\begin{theorem}\label{Main_Theorem_mXOR_IC_DC}
Let $m\geqslant 3$ and let the agenda be
    \mbox{$\Agenda=\left<A^1,\ldots,A^{m-1},\xor\limits_{j=1}^{m-1}A^j\right>$}.
For any \mbox{$\epsilon<\frac{1}{6}$} and any aggregation mechanism $F$:\\
If $F$ is an aggregation mechanism for \Agenda over $n$ voters satisfying
\mbox{$\epsilon$-independence} and \mbox{$\epsilon$-consistency}, then
there exists an aggregation mechanism $G$ that satisfies consistency
and independence such that \mbox{$d(F,G)<m(2m+3) \epsilon$}%
\end{theorem}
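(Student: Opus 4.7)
The plan is to reduce the problem to a generalized Blum--Luby--Rubinfeld linearity test and solve it via Fourier analysis on $\{0,1\}^n$. Since $F$ is $\epsilon$-independent, Proposition~\ref{Proposition_DIisEpsIndependent} produces an independent $F' = \langle f^1, \ldots, f^m \rangle$ with $d(F, F') \leq 2m\epsilon$, and the triangle inequality gives $IC(F') \leq (2m+1)\epsilon$; write $\eta := (2m+1)\epsilon$. It therefore suffices to construct a consistent independent $G$ with $d(F', G) \leq m\eta$, since these combine to $d(F, G) \leq 2m\epsilon + m(2m+1)\epsilon = m(2m+3)\epsilon$.

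Passing to the $\pm 1$ encoding $g^j(x) := (-1)^{f^j(x)}$, a uniformly random consistent profile has the form $(X^1, \ldots, X^{m-1}, \bigoplus_{j<m} X^j)$ with the $X^j$'s i.i.d.\ uniform on $\{0,1\}^n$, so consistency translates to the event $g^m(\bigoplus_j X^j) = \prod_{j<m} g^j(X^j)$. Using $\chi_S(\bigoplus_j X^j) = \prod_j \chi_S(X^j)$ together with independence,
\[
1 - 2\eta \;\leq\; \Exp\!\left[g^m\!\Big(\bigoplus_{j<m} X^j\Big)\!\prod_{j<m} g^j(X^j)\right] \;=\; \sum_{S \subseteq [n]} \hat{g}^m(S) \prod_{j=1}^{m-1} \hat{g}^j(S),
\]
the multi-function generalization of the BLR Fourier identity.

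The central analytic step is to extract from this single scalar inequality that every $f^j$ is close to the \emph{same} character (up to complementation). For each $k \in \{1, \ldots, m\}$, since $m \geq 3$ one can pick two indices $k_1, k_2 \neq k$, apply Cauchy--Schwarz to the factor $\hat{g}^{k_1}(S)\hat{g}^{k_2}(S)$ combined with Parseval, and bound the remaining product by its $S$-maximum, obtaining $\max_S |\hat{g}^k(S)| \geq 1 - 2\eta$. In the regime where the conclusion is not vacuous, $\eta$ is small enough that the Parseval constraint $\sum_S \hat{g}^k(S)^2 = 1$ renders the argmax $S^*_k$ unique; a second pass through the Fourier identity---any mismatch $S^*_k \neq S^*_{k'}$ would force the dominant summand to contain a factor of order $\sqrt{\eta}$, contradicting the lower bound---forces all $S^*_k$ to coincide in a common $S^*$. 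Setting $b_j \in \{0,1\}$ according to the sign of $\hat{g}^j(S^*)$ yields $d(f^j, \chi_{S^*} \oplus b_j) \leq \eta$ for each $j$.

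Finally, define $G := \langle \chi_{S^*} \oplus b_j \rangle_{j=1}^m$. It is independent by construction, and the structure of consistent independent mechanisms for the XOR agenda (each issue is the same linear function up to a constant, with the constants XOR-consistent) requires $b_m = \bigoplus_{j<m} b_j$; this must hold, since otherwise $G$ would be identically inconsistent while lying within $m\eta$ of the $\eta$-inconsistent $F'$. A union bound over issues yields $d(F', G) \leq m\eta$, completing the argument. The main obstacle is the third step---showing that the \emph{same} character $S^*$ governs all $m$ functions from just one lower bound on an $m$-fold Fourier sum; this is considerably more delicate than the two-function BLR setting and relies on combining near-saturation of $m$ Parseval inequalities under a single global constraint.
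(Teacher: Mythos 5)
Your proposal follows essentially the same route as the paper: reduce to an independent mechanism via Proposition~\ref{Proposition_DIisEpsIndependent} (giving $d(F,F')\leqslant 2m\epsilon$ and hence $IC(F')\leqslant(2m+1)\epsilon$), then run the generalized BLR/Fourier argument of Theorem~\ref{Main_Theorem_mXOR} on $F'$ and combine by the triangle inequality --- which is exactly the paper's composition of Theorem~\ref{Thm_RelaxingIIA} with Theorem~\ref{Main_Theorem_mXOR}, with matching constants $2m\epsilon+m(2m+1)\epsilon=m(2m+3)\epsilon$. The one step you flag as delicate (forcing a common character $S^*$ and consistent signs across all $m$ functions) is resolved in the paper not by an argmax-uniqueness argument but by substituting the character found for $f^1$ back into the inconsistency index to pin down $\widehat{f^j}(A)$ for the remaining functions, which simultaneously yields the sign condition $\prod_j a^j=1$ and, via the fact that distinct characters are at distance $\half$, the coincidence of the characters when $\epsilon<\frac{1}{6}$.
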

Noticing that any affine agenda (i.e., an agenda that is an affine
subspace) can be represented as a truth-functional agenda that uses
xor conclusions only (lemma \ref{Lemma_AffineIsXOR}) we can get the following corollary
\begin{corollary}\label{Corollary_AffineSubspace}
For any \mbox{$\epsilon>0$} and \mbox{$m,n\geqslant 1$}, there
exists
    \mbox{$\delta=\frac{\epsilon}{m(2m+3)}$},
such that for every affine agenda \Agenda over $m$ issues, if $F$ is
an aggregation mechanism for \Agenda over $n$ voters satisfying
\mbox{$\delta$-independence} and \mbox{$\delta$-consistency}, then
there exists an aggregation mechanism $G$ that satisfies consistency
and independence such that \mbox{$d(F,G)<\epsilon$}%
\end{corollary}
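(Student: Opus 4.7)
The plan is to view this as a direct reduction from the affine setting to the xor truth-functional setting already handled by Theorem \ref{Main_Theorem_mXOR_IC_DC} (or rather, its extension to multiple xor conclusions, which is promised by the parenthetical ``and its generalization, a truth-functional agenda in which all the conclusions are xor'' immediately preceding that theorem). First I would invoke Lemma \ref{Lemma_AffineIsXOR}: an affine subspace $\FeasableAssignments \subseteq \{0,1\}^m$ is the solution set of a system of $\mathbb{F}_2$-affine equations, so by Gaussian elimination one can designate $k$ coordinates as free (the premises) and express each remaining coordinate as an $\mathbb{F}_2$-affine combination of the premises, i.e.\ an xor of a subset of them up to output negation. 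This is a relabelling of issues into premises and conclusions; it leaves $\FeasableAssignments$ unchanged, and the notions of $\delta$-consistency and $\delta$-independence depend only on $\FeasableAssignments$, not on the chosen presentation.

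Next I would set $\delta = \frac{\epsilon}{m(2m+3)}$ and apply the multi-conclusion xor version of Theorem \ref{Main_Theorem_mXOR_IC_DC} directly to the hypothesized mechanism $F$: the hypothesis provides $\delta$-independence and $\delta$-consistency, and the theorem then yields a consistent independent aggregation mechanism $G$ with $d(F,G) < m(2m+3)\delta = \epsilon$, which is exactly the conclusion of the corollary. For the degenerate regimes (where $\delta$ would exceed $\tfrac{1}{6}$ and the theorem hypothesis is vacuous), the bound $d(F,G) \leq 1 < \epsilon$ is automatic because those regimes force $\epsilon > \tfrac{m(2m+3)}{6} \geq 1$, so any fixed consistent independent $G$ works.

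The main obstacle is the step of passing from the single-conclusion statement of Theorem \ref{Main_Theorem_mXOR_IC_DC} to the multi-conclusion extension used above. I would obtain this by iteration across the $m-k$ conclusion issues: each conclusion $j$, together with the premises, constitutes an xor sub-agenda fitting the hypothesis of the single-conclusion theorem, so applying it yields, for each $j$, a modification of the aggregator on coordinate $j$ that is independent and consistent with the premises, at a distance bounded by a constant (in $m$) multiple of $\delta$. A union bound over the $m-k$ conclusions absorbs into the factor $m(2m+3)$. The crucial observation that makes the iteration sound is that modifications on different conclusion coordinates do not interfere: the premises are held fixed, each conclusion is an independent affine function of them, and altering the aggregator on coordinate $j$ does not affect any other coordinate's independence or its consistency with the premises. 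Verifying precisely that this composition preserves both properties, and tracking constants carefully enough to fit the claimed $m(2m+3)$, is where the bulk of the (short) work lies.
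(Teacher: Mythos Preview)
Your proposal is correct and follows essentially the same route as the paper: invoke Lemma~\ref{Lemma_AffineIsXOR} to rewrite the affine agenda as a truth-functional agenda with only xor conclusions, then apply Theorem~\ref{Main_Theorem_mXOR_IC_DC} (in its promised multi-conclusion form) with $\delta=\epsilon/(m(2m+3))$. The paper's own derivation is in fact just the single sentence preceding the corollary; your write-up is more explicit about the multi-conclusion extension and the degenerate regime, but the underlying argument is the same.
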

\section{Proof Sketch of the Main Theorem}
In this section we sketch the techniques behind our proofs. The full
proofs can be found in the appendices.

We prove the main theorem by proving three independent theorems.
An approximation result for independent aggregation mechanisms for
conjunction agendas (theorem \ref{Main_Theorem_mAND}).
An approximation result for independent aggregation mechanisms for
xor agendas (theorem \ref{Main_Theorem_mXOR}).
An agenda independent method of converting results for the
independent case to the general case of relaxing both constraints
(theorem \ref{Thm_RelaxingIIA}).
Using induction on the number of conclusions and noticing that
negating (of the inputs and of the output) is renaming of opinions
in our framework (and hence does not change the approximation
results) we get theorem \ref{MainTheorem_IC_DC}.

\subsection{Conjunction Agenda}
For the agenda \mANDagenda we prove:
\begin{theorem}\label{Main_Theorem_mAND}
Let $m\geqslant 2$ and let the agenda be
    $\Agenda=\mANDagenda$.

For any \mbox{$\epsilon>0$} and any independent aggregation mechanism $F$:\\
    If \mbox{$IC(F)\leqslant\epsilon$}, then there exists an
    aggregation mechanism $G$ that satisfies consistency and
    independence such that
\mbox{$d(F,G)<5m \left(n^2\epsilon\right)^\frac{1}{m^2+m-1}$}.
\end{theorem}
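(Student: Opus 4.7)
The plan is to show that $IC(F) \leq \epsilon$ combined with the independence of $F$ forces each issue-aggregating function $f^j$ to be close to the same oligarchy $\prod_{i \in S} x_i$, where the coalition $S$ is detected from $f^{m+1}$ via the ignorability index. On consistent profiles the columns $X^1,\ldots,X^m$ are i.i.d.\ uniform in $\{0,1\}^n$ and $X^{m+1}=\bigwedge_j X^j$, so
\[IC(F)=\Pr\Bigl[f^{m+1}\bigl(\textstyle\bigwedge_j X^j\bigr)\neq\textstyle\bigwedge_j f^j(X^j)\Bigr]\le\epsilon.\]
For the exact oligarchy of $S$ the ignorability $\OSI{i}{f}=\Pr[f(x)=1\mid x_i=0]$ is $0$ for $i\in S$ and $2^{-|S|}$ for $i\notin S$, a sharp dichotomy; I would recover $S$ from the approximate data by fixing a threshold $\tau$ and setting $S=\{i:\OSI{i}{f^{m+1}}<\tau\}$.

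The core technical step is to derive, for each premise $j\le m$ and each voter $i$, a chained relation of the form
\[\OSI{i}{f^{m+1}}\;\approx\;\Bigl(\prod_{k\ne j}\Pr[f^k(X)=1]\Bigr)\cdot\OSI{i}{f^j}\]
up to error of order $\epsilon$, obtained by conditioning the consistency inequality on $X^j_i=0$ (which forces $X^{m+1}_i=0$) and comparing marginals on both sides. This transfers the dichotomy from $f^{m+1}$ to every premise function: voters in $S$ have small ignorability in every $f^j$, while voters outside $S$ retain substantial ignorability. Once $S$ is identified, closeness of each $f^j$ to $\prod_{i\in S}x_i$ is shown by a case split: the event that some $i\in S$ has $X_i=0$ is controlled by $\sum_{i\in S}\OSI{i}{f^j}\le |S|\tau$, and the event that all voters in $S$ have $X_i=1$ is handled by comparing the marginal $\Pr[f^j=1]$ to the target $2^{-|S|}$, a comparison driven by another averaged consequence of the approximate consistency equation. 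Summing the per-issue errors across the $m+1$ coordinates then yields $d(F,G)$.

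The exponent $1/(m^2+m-1)$ in the stated bound pins down the optimal threshold: propagating small ignorability from $f^{m+1}$ through each $f^j$ and chaining the voter-wise estimates across the $m+1$ functions incurs a total loss of order $\tau^{\Theta(m^2)}$, and optimising against the direct oligarchy-closeness term $|S|\tau$ produces the stated exponent, while the $n^2$ factor appears from a union bound over voter pairs when establishing that all the marginals $\Pr[f^k=1]$ are jointly close to a common value $2^{-|S|}$. The main obstacle is exactly this balancing --- keeping the overall error polynomial in $n$ and small in $\epsilon^{1/(m^2+m-1)}$ while $f^{m+1}$ is evaluated on heavily biased inputs (each bit is $1$ with probability $2^{-m}$), which makes the Fourier tools available for the xor case of Theorem~\ref{Main_Theorem_mXOR} inapplicable and forces the influence- and ignorability-based approach.
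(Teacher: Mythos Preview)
Your plan diverges from the paper's argument both in structure and in the key lemma, and several of the steps you sketch do not go through as written.

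The paper never extracts the coalition from $f^{m+1}$. It works only with the premise functions $f^1,\ldots,f^m$, via $\ICtilde(f^1,\ldots,f^m)=\min_h IC(f^1,\ldots,f^m,h)$, and first splits into two cases: either some $d(f^j,0)<\Delta$ (then $F$ is close to the trivial consistent mechanism $\langle 0,f^2,\ldots,f^m,0\rangle$), or every $d(f^j,0)\ge\Delta$. This case split, which you omit, is precisely what produces the threshold $\Delta=4(n^2\epsilon)^{1/(m^2+m-1)}$ and the exponent in the theorem. In the second case the junta is $J=\bigcup_{j}\{i:\OSI{i}{f^j}\le\Delta/n\}$, and the core inequality is not a relation between ignorabilities but between ignorability and \emph{influence} of two different premise functions:
\[\OSI{i}{f^k}\cdot\Inf{i}{f^l}\;\le\;4\Bigl(\prod_{j\ne k,l}d(f^j,0)\Bigr)^{-1}\ICtilde(f^1,\ldots,f^m).\]
This forces every $i\notin J$ to have $\Inf{i}{f^l}\le 4n\epsilon\Delta^{1-m}$, whence $d(f^l,\JUNTA{f^l})\le\sum_{i\notin J}\Inf{i}{f^l}\le 4n^2\epsilon\Delta^{1-m}$; the factor $n^2$ is one $n$ from the threshold $\Delta/n$ and one $n$ from summing at most $n$ influences, not a union bound over voter pairs. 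The argument then closes by a \emph{granularity} step you do not have: $|J|\le m(1+\log_2(1/\Delta))$, so $\ICtilde(\JUNTA{f^1},\ldots,\JUNTA{f^m})$ is an integer multiple of $2^{-m|J|}$, and the parameters are chosen so that the accumulated error is strictly below this granule, forcing $\ICtilde=0$ exactly. One then invokes the exact characterisation (Lemma~\ref{Lemma_mAND_WzeroIsUnanimity}).

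Your route through $f^{m+1}$ runs into an obstruction you flag but do not resolve. The ignorability $\OSI{i}{f^{m+1}}$ is defined under the \emph{uniform} measure on $\{0,1\}^n$, whereas $f^{m+1}$ is only ever evaluated at $X^{m+1}=\bigwedge_j X^j$, whose bits are $1$ with probability $2^{-m}$; the chained relation $\OSI{i}{f^{m+1}}\approx(\prod_{k\ne j}\Pr[f^k=1])\cdot\OSI{i}{f^j}$ therefore does not follow from the consistency bound (what you actually get on the left is $\Pr[f^{m+1}(X^{m+1})=1\mid X^j_i=0]$, which is a different quantity). Even granting a corrected version, your closeness argument for $f^j$ to the oligarchy of $S$ controls only $\Pr[f^j(X)=1,\ \exists i\in S:X_i=0]\le\sum_{i\in S}\OSI{i}{f^j}$; the reverse error $\Pr[f^j(X)=0,\ \forall i\in S:X_i=1]$ is not bounded by the marginal comparison you describe without substantially more work. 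The ignorability intuition is right, but the paper leverages it through the ignorability--influence product and a granularity reduction to the exact case, not through a direct approximate estimate.
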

{\bf Technique}: The main insight in the proof is that for any two
different issue-aggregating functions, $f$ and $g$, in $F$ for two
of the $m$ premises, we can bound the product of the influence
of a voter on $f$ and the ignorability of the same voter for $g$
using the inconsistency index of $F$ by
    \mbox{$\OSI{i}{f}\cdot\Inf{i}{g}\leqslant 4IC(F)$}.
\begin{proof-sketch}~\\
\indent Let $F=\TRIPLET{f}{g}{h}$ be an aggregation mechanism that satisfies
    \mbox{$IC(F)\leqslant\epsilon$}.
In case that $f$ (or $g$) is close enough to the constant zero
function, $F$ is close
    to the consistent aggregation mechanism \TRIPLET{0}{g}{0}.
\\\indent Otherwise, we define for
{a given function \mbox{$f:\{0,1\}^n\rightarrow\{0,1\}$} and a
coalition $J$ (the junta),  the junta function
\quad\mbox{$\JUNTA{f}:\{0,1\}^n\rightarrow\{0,1\}$}. It is derived
from $f$ in the following way:
    \[\JUNTA{f}(x)=\maj\left\{f(y)~|~y_{_J}=x_{_J}\right\}.\]
I.e., for a given input, $\JUNTA{f}$ reads only the votes of the
junta members, iterates over all the possible
    votes for the members outside the junta, and returns the more frequent
    result (assuming uniform distribution over the votes of the voters outside $J$).\\
} We define $\JUNTA{f}$ and $\JUNTA{g}$ with regard to the junta of
all the voters with small ignorability for either $f$ or $g$.
    \[ J=       \left\{\begin{array}{c|c}i & \OSI{i}{f}\leqslant\frac{\Delta}{n}\end{array}\right\}
        \cup    \left\{\begin{array}{c|c}i & \OSI{i}{g}\leqslant\frac{\Delta}{n}\end{array}\right\},\]
We prove that $\JUNTA{f}$ and $\JUNTA{g}$ are close to $f$ and $g$,
respectively and that there exists
    an issue aggregation function $h^\star$ such that \TRIPLET{\JUNTA{f}}{\JUNTA{g}}{h^\star} is a
    consistent aggregation mechanism that is close to $F$.
\end{proof-sketch}%
~\\\indent
There is a known characterization of the consistent independent
aggregation mechanism for the conjunction agenda. (This
characterization is a direct corollary from a series of works in the
more general framework of aggregation, E.g., \cite{Nehring2008},
\cite{Dokow2009}. We include a proof of it in the appendix)
\begin{lemma}\label{Lemma_mAND_WzeroIsUnanimity}~\\
Let \mbox{$f,g,h:\{0,1\}^n\rightarrow\{0,1\}$} be three voting
functions satisfying \mbox{$IC(\TRIPLET{f}{g}{h})=0$}. Then either
\mbox{$f=h\equiv0$}, or \mbox{$g=h\equiv0$}, or
\mbox{$f=g=h\in\OLIG$}.
\end{lemma}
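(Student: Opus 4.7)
The plan is to extract from the hypothesis $IC(\TRIPLET{f}{g}{h})=0$ a single pointwise functional equation and then carry out an elementary case analysis on two special values.

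Concretely, in the agenda $\TRIPLET{A}{B}{A\wedge B}$ any pair of premise-columns $(x,y)\in(\{0,1\}^n)^2$ extends to a unique consistent profile (the conclusion-column is forced to be $x\wedge y$), so the assumption forces
\[
h(x\wedge y) \;=\; f(x)\wedge g(y) \qquad \forall\, x,y\in\{0,1\}^n,
\]
and all further structure will be squeezed out of this master identity. First I would specialize it at $x=\bar{1}$ and at $y=\bar{1}$, obtaining $h(y)=f(\bar{1})\wedge g(y)$ and $h(x)=f(x)\wedge g(\bar{1})$ respectively, and then case-split on the scalars $f(\bar{1}),g(\bar{1})\in\{0,1\}$. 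If $f(\bar{1})=0$ then the first relation gives $h\equiv 0$; plugging this back into the master identity yields $f(x)\wedge g(y)\equiv 0$, which forces either $f\equiv 0$ (the first alternative of the lemma) or $g\equiv 0$ (the second). The case $g(\bar{1})=0$ is symmetric.

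The remaining case is $f(\bar{1})=g(\bar{1})=1$. Here the two specializations immediately give $h=g$ and $h=f$, so $f=g=h$, and the master identity collapses to the multiplicative law $f(x\wedge y)=f(x)\wedge f(y)$ together with $f(\bar{1})=1$. To finish, I would observe that the set $S=\{x\in\{0,1\}^n : f(x)=1\}$, viewed as a family of subsets of $\{1,\ldots,n\}$, is nonempty (it contains $\bar{1}$), closed under coordinate-wise AND (the multiplicative law), and upward closed (apply the law with $y\geq x$, so $x\wedge y=x$, to deduce $f(x)=1\Rightarrow f(y)=1$). A nonempty, intersection-closed, upward-closed family of subsets of a finite set is a principal filter: the element $x^\star:=\bigwedge_{x\in S} x$ lies in $S$ by iterated AND-closure, and then $f(x)=1 \iff x\geq x^\star$. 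This is exactly the oligarchy of the coalition $\{i:x^\star_i=1\}$, so $f=g=h\in\OLIG$ as claimed. I do not anticipate a real obstacle; the only step requiring a moment of care is recognizing $S$ as a principal filter, which is immediate from the three listed properties and the finiteness of $\{0,1\}^n$.
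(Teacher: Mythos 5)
Your proof is correct, and it diverges from the paper's in the one step that actually carries weight. The reduction of $IC(\TRIPLET{f}{g}{h})=0$ to the pointwise identity $h(x\wedge y)=f(x)\wedge g(y)$, the specializations at $\bar{1}$, and the conclusion $f=g=h$ with $f(\bar{1})=1$ (outside the degenerate cases) all mirror what the paper does; the paper merely organizes the degenerate cases by contradiction (assuming no $f^j\equiv 0$) rather than by your direct case split on $f(\bar{1}),g(\bar{1})$, and it states and proves the lemma for general $m$ premises rather than two. Where you genuinely part ways is in proving that a function satisfying $f(x\wedge y)=f(x)\wedge f(y)$ and $f(\bar{1})=1$ is an oligarchy: the paper invokes its quantitative Lemma \ref{Lemma_mAND_BoundPI} with inconsistency index zero to conclude that every influential voter $i$ has $\OSI{i}{f}=0$, i.e.\ $x_i=0\Rightarrow f(x)=0$, so $f$ is the oligarchy of its set of influential voters; you instead observe that $f^{-1}(1)$ is a nonempty, AND-closed, upward-closed family and hence the principal filter above $x^\star=\bigwedge_{x\in f^{-1}(1)}x$, which is exactly the oligarchy of $\{i:x^\star_i=1\}$. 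Your route is more elementary and entirely self-contained, which is a genuine advantage for a reader who only wants the exact ($\epsilon=0$) characterization; the paper's route buys uniformity of exposition, since the influence--ignorability lemma is machinery it must develop anyway for the approximate version, and the zero-inconsistency case falls out of it for free.
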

A corollary from this theorem and theorem \ref{Main_Theorem_mAND} is
a characterization of the approximate aggregation mechanisms for
this agenda. Actually ,in the proof of theorem
\ref{Main_Theorem_mAND} we get a tighter characterization that
distinguishes between the two cases of consistent independent
aggregation mechanism.

\subsection{Xor Agenda}
For the agenda \mXORagenda we prove:
\begin{theorem}\label{Main_Theorem_mXOR}
Let $m\geqslant 3$ and let the agenda be
    \mbox{$\Agenda=\left<A^1,\ldots,A^{m-1},\xor\limits_{j=1}^{m-1}A^j\right>$}.

For any \mbox{$\epsilon\!<\!\frac{1}{6}$} and any independent
aggregation mechanism $F$:
    If \mbox{$IC(F)\!\leqslant\!\epsilon$}, then there exists an
    aggregation mechanism $G$ that satisfies consistency and
    independence such that \mbox{$d(F,G)\!\leqslant\!m\epsilon$}.
\end{theorem}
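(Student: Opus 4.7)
The plan is to reformulate the problem in Fourier-analytic terms on the Boolean cube. I would identify $\{0,1\}$ with $\{-1,1\}$ via $b\mapsto 1-2b$ so that XOR becomes coordinate-wise multiplication, and write $F^j:\{-1,1\}^n\to\{-1,1\}$ for the $\pm 1$ version of each $f^j$. Under impartial culture, a consistent profile is generated by drawing $X^1,\ldots,X^{m-1}$ independently and uniformly from $\{-1,1\}^n$ and setting $X^m=\prod_{j<m}X^j$, so consistency of $F$ at that profile becomes $F^m\bigl(\prod_{j<m}X^j\bigr)=\prod_{j<m}F^j(X^j)$. Taking expectations, expanding each $F^j$ in the character basis $\chi_T(x)=\prod_{i\in T}x_i$, and using $\chi_T\bigl(\prod_j X^j\bigr)=\prod_j\chi_T(X^j)$ together with the orthonormality of characters, all cross terms vanish and one obtains the central Fourier identity
\[
\sum_T \widehat{F}^m(T)\prod_{j=1}^{m-1}\widehat{F}^j(T)\;=\;1-2\,IC(F)\;\geq\;1-2\epsilon.\qquad(\star)
\]

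Next I would extract a dominant character for each issue-aggregating function. Applying Cauchy--Schwarz to the pair $(\widehat{F}^m,\widehat{F}^j)$ for a chosen $j$ in $(\star)$, while bounding the remaining $m-2$ Fourier coefficients trivially by $1$, forces $\max_T|\widehat{F}^j(T)|\geq 1-2\epsilon$ for every $j=1,\ldots,m$. Writing $T^*_j$ for an argmax and $s^j\in\{\pm 1\}$ for the corresponding sign, one immediately obtains $d(f^j,\,s^j\chi_{T^*_j})\leq\epsilon$, so each $f^j$ is individually $\epsilon$-close to a single affine character.

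The hard part is to show that the $T^*_j$ can be chosen to coincide and that the signs satisfy $\prod_{j=1}^{m} s^j=1$. The approach is by contradiction: if $T^*_j\neq T^*_k$ for some pair, then Parseval yields $|\widehat{F}^j(T^*_k)|\leq 2\sqrt{\epsilon(1-\epsilon)}$. Splitting the left-hand side of $(\star)$ into the contributions at $T^*_j$, at $T^*_k$, and over the remaining indices (the last controlled by pulling out $\max_{T\neq T^*_j}|\widehat{F}^j(T)|$ and applying Cauchy--Schwarz to the rest), each piece is of order $\sqrt{\epsilon}$, so the total is a small multiple of $\sqrt{\epsilon}$. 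The threshold $\epsilon<\tfrac{1}{6}$ is precisely what is needed for this upper bound to fall below the lower bound $1-2\epsilon$ in $(\star)$, producing the contradiction. The sign relation $s^m=\prod_{j=1}^{m-1}s^j$ then drops out of the positivity of the dominant term of $(\star)$ at the common $T^*$.

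With a common $T^*$ and compatible signs in hand, set $g^j=s^j\chi_{T^*}$ (equivalently $g^j(x)=\chi_{T^*}(x)\xor c^j$ in the $0/1$ picture, with $c^m$ equal to the XOR of $c^1,\ldots,c^{m-1}$) and $G=\langle g^1,\ldots,g^m\rangle$. By construction $G$ is independent, and the index and sign agreement make it consistent with the xor agenda. Under impartial culture every column $X^j$ is marginally uniform on $\{0,1\}^n$---for $j=m$ this uses $m-1\geq 2$, so that the XOR of independent uniform bits is uniform---hence each $d(f^j,g^j)\leq\epsilon$. A union bound over the $m$ issues then yields $d(F,G)\leq\sum_{j=1}^{m} d(f^j,g^j)\leq m\epsilon$, as required.
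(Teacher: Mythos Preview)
Your Fourier set-up, the identity $(\star)$, and the extraction of $\max_T|\widehat{F}^j(T)|\geq 1-2\epsilon$ for each $j$ are all correct and match the paper (the paper uses a generalized H\"older inequality, but ordinary Cauchy--Schwarz on two of the remaining factors works equally well once $m\geq 3$). The final assembly of $G$ and the union bound are fine too.

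The gap is in your ``all $T^*_j$ coincide'' step. Your contradiction bound is not sharp enough to deliver the threshold $\epsilon<\tfrac16$. With your splitting, the three pieces are each at most $2\sqrt{\epsilon(1-\epsilon)}$, so you need $6\sqrt{\epsilon(1-\epsilon)}<1-2\epsilon$; but at $\epsilon=\tfrac16$ the left side is $\sqrt{5}\approx 2.24$ while the right side is $\tfrac23$, and in fact the inequality only holds for $\epsilon\lesssim 0.026$ (even a more careful split gives at best $\epsilon\lesssim 0.05$). So the sentence ``the threshold $\epsilon<\tfrac16$ is precisely what is needed'' is false, and the argument as written does not prove the stated theorem.

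The paper obtains the exact $\tfrac16$ threshold by a different, sharper trick. Having found $A$ and $a^1$ with $d(f^1,a^1\chi_A)\leq\epsilon$, it \emph{substitutes} $a^1\chi_A$ for $f^1$ in the consistency test: by the Lipschitz bound for $IC$ this raises $IC$ by at most $\epsilon$, and since $\widehat{\chi_A}$ is supported on the single set $A$, the Fourier identity $(\star)$ for the new mechanism collapses to $a^1\prod_{j\geq 2}\widehat{F}^j(A)\geq 1-4\epsilon$. This forces every $|\widehat{F}^j(A)|\geq 1-4\epsilon$, i.e.\ $d(f^j,a^j\chi_A)\leq 2\epsilon$ for suitable signs with $\prod_j a^j=1$. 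Now take the $\epsilon$-close character $b^2\chi_B$ for $f^2$ obtained by symmetry; the triangle inequality gives $d(a^2\chi_A,b^2\chi_B)\leq 3\epsilon$, and since any two distinct $\pm\chi$'s are at Hamming distance exactly $\tfrac12$, the condition $3\epsilon<\tfrac12$ (i.e.\ $\epsilon<\tfrac16$) forces $A=B$ and $a^2=b^2$. Repeating for each $j$ gives $d(f^j,a^j\chi_A)\leq\epsilon$ for all $j$, and then the union bound yields $d(F,G)\leq m\epsilon$. Replacing your contradiction paragraph with this substitution-and-triangle argument closes the gap.
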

{\bf Technique}\footnote{The proof is similar to the analysis of the
BLR (Blum-Luby-Rubinfeld) linearity test done in
\cite{Bellare1995}.}: The proof uses the Fourier representation of
the issue aggregating functions. That is, representing the functions
as linear combinations of the linear boolean functions.
\begin{proof-sketch}~\\
\indent Given an independent aggregation mechanism
\mbox{$F=\TRIPLET{f}{g}{h}$} we analyze the expression
    \mbox{$\Exp[f(x)g(y)h(xy)]$} when $x$ and $y$ are sampled uniformly and independently.
On one hand we show that \mbox{$\Exp[f(x)g(y)h(xy)]=1\!-\!2IC(F)$}.
On the other hand we show that
\mbox{$\Exp[f(x)g(y)h(xy)]=\sum\limits_{\chi\in\LIN}\fhat(\chi)\ghat(\chi)\hhat(\chi)$}
when  $\left|\fhat(\chi)\right|$ equals
\mbox{$1\!-\!2\min\left(d\left(f,\chi\right),d\left(f,-\chi\right)\right)$}.
Hence, when $IC(F)$ is small then this sum is close to one and hence
there exists a linear function such that $f$,$g$, and $h$ are close
to it (up to negation). Noticing that for any linear function
$\chi$, \TRIPLET{\chi}{\chi}{\chi} and the permutations of
\TRIPLET{-\chi}{-\chi}{\chi} are consistent independent aggregation
mechanism for this agenda gives us the result.
\end{proof-sketch}

\subsection{Extending to $\delta$-independence Results}
\begin{theorem}\label{Thm_RelaxingIIA}~\\
If
\begin{IndentSection}
\noindent there exists a function \mbox{$\delta(\epsilon,n)$} s.t.
for any
    \mbox{$\epsilon>0$} and \mbox{$n\geqslant 1$}, if $F$ is an
    aggregation mechanism for \Agenda over $n$ voters satisfying
    independence and \mbox{$IC(F)\!\leqslant\!\delta(\epsilon)$}, then
    there exists an aggregation mechanism $G$ that satisfies
    consistency and independence such that \mbox{$d(F,G)\!<\!\epsilon$}.
\end{IndentSection}
\noindent Then,
\begin{IndentSection}
    \noindent for any \mbox{$\epsilon>0$} and \mbox{$n\geqslant 1$}, there
    exist \mbox{$\delta_{_{IC}},\delta_{_{DI}}>0$}, such that if $F$ is an
    aggregation mechanism for \Agenda over $n$ voters satisfying
    \mbox{$IC(F)\leqslant\delta_{_{IC}}$} and
    \mbox{$DI(F)\leqslant\delta_{_{DI}}$}, then there exists an
    aggregation mechanism $G$ that satisfies consistency and
    independence such that \mbox{$d(F,G)<\epsilon$}.\\
    Moreover, one can take
    \mbox{$\delta_{_{IC}}=\delta\left(\left(1-\beta_\epsilon\right)\epsilon\right)-\beta_\epsilon\epsilon$}
    and \mbox{$\delta_{_{DI}}=\frac{1}{2m}\beta_\epsilon\epsilon$}
    for any \mbox{$\beta_\epsilon\in[0,1]$} satisfying
    \mbox{$\delta\left(\left(1-\beta_\epsilon\right)\epsilon\right)\geqslant\beta_\epsilon\epsilon$}
\end{IndentSection}
\end{theorem}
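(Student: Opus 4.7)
The plan is to reduce to the independent-mechanism hypothesis by approximating $F$ with a nearby independent mechanism $H$ and then invoking the hypothesis on $H$. The two error budgets (one arising from relaxing $DI$ and one from relaxing $IC$) are controlled separately via the parameter $\beta_\epsilon$ and combined by the triangle inequality.

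First I would apply Proposition \ref{Proposition_DIisEpsIndependent} to $F$. Since $DI(F)\le\delta_{_{DI}}$, this yields an independent aggregation mechanism $H$ with $d(F,H)\le 2m\delta_{_{DI}}$. Choosing $\delta_{_{DI}}=\beta_\epsilon\epsilon/(2m)$ gives $d(F,H)\le\beta_\epsilon\epsilon$.

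Next I would bound $IC(H)$ in terms of $IC(F)$ and $d(F,H)$. Conditioning on $X\in\FeasableAssignments^n$, the event $\{H(X)\notin\FeasableAssignments\}$ is contained in $\{F(X)\notin\FeasableAssignments\}\cup\{F(X)\neq H(X)\}$, since $F(X)\in\FeasableAssignments$ together with $F(X)=H(X)$ forces $H(X)\in\FeasableAssignments$. A union bound then gives $IC(H)\le IC(F)+d(F,H)\le\delta_{_{IC}}+\beta_\epsilon\epsilon$. Setting $\delta_{_{IC}}=\delta\bigl((1-\beta_\epsilon)\epsilon\bigr)-\beta_\epsilon\epsilon$ ensures $IC(H)\le\delta\bigl((1-\beta_\epsilon)\epsilon\bigr)$. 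Now I would apply the hypothesis of the theorem to the independent mechanism $H$ with error parameter $(1-\beta_\epsilon)\epsilon$. This produces a consistent and independent mechanism $G$ with $d(H,G)<(1-\beta_\epsilon)\epsilon$, and the triangle inequality finishes the argument: $d(F,G)\le d(F,H)+d(H,G)<\beta_\epsilon\epsilon+(1-\beta_\epsilon)\epsilon=\epsilon$.

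I do not expect any serious obstacle here, since the argument is essentially a budget split combined with a triangle inequality. The only point requiring care is making sure all distances are measured with respect to the same distribution---uniform over consistent profiles---so that the union bound used to control $IC(H)$ is valid; both $IC(\cdot)$ and the distance produced by Proposition \ref{Proposition_DIisEpsIndependent} are defined using this conditional distribution, so the composition of the two steps goes through cleanly.
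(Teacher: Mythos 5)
Your proposal is correct and follows essentially the same route as the paper: apply Proposition \ref{Proposition_DIisEpsIndependent} to replace $F$ by a nearby independent mechanism $H$, transfer the inconsistency bound from $F$ to $H$ (losing only $d(F,H)$), invoke the hypothesis on $H$, and conclude by the triangle inequality with the budget split governed by $\beta_\epsilon$. The paper only sketches this argument, and your write-up supplies the details (including the union bound for $IC(H)\leqslant IC(F)+d(F,H)$ and the exact parameter choices) in a way that matches the stated values of $\delta_{_{IC}}$ and $\delta_{_{DI}}$.
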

In order to extend the results for the $\delta$-dependent case
\mbox{($DI(F)\neq0$)} we prove the following agenda-independent
proposition.
\begin{PropositionWithReference}{Proposition_DIisEpsIndependent}
Let $F$ be an aggregation mechanism for an agenda over $m$ issues
that satisfies \mbox{$DI(F)\leqslant\delta$}.
Then there exists an independent aggregation mechanism $H$ that
satisfies \mbox{$d(F,H)\leqslant 2m\delta$}.
\end{PropositionWithReference}
I.e., if $F$ is $\delta$-independent we can find a close consistent
aggregation mechanism $H$ and since it is close we can deduce bounds
on the proximity of $F$ to the consistent and independent
aggregation mechanisms from bounds on this proximity of $H$.
Similarly, since $H$ is close to $F$, we can deduce that if $F$ is
$\delta$-consistent than $H$ is $\delta'$-consistent for $\delta'$
close to $\delta$. Combining these we get the theorem.
\section{Summary and Future Work}
In this paper we defined the issue of approximate aggregation which
is a generalization of the study of aggregation mechanisms that
satisfy consistency and independence. We defined measures for the
relaxation of the consistency constraint (inconsistency index $IC$)
and for the relaxation of the independence constraint (dependency
index $DI$). To our knowledge, this is the first time this question is stated in its general form.

We proved that relaxing these constraints does not extend the set of
satisfying aggregation mechanisms in a non-trivial way for any
truth-functional agenda in which every conclusion is either conjunction or xor up to negation of inputs or output. We saw that every conclusion of two premises can be stated as such. We also saw that any affine agenda can be represented as truth-functional agenda with xor conclusions only and derived a better approximate aggregation characterization for this family.
Particulary we calculated the dependency between the extension of this class ($\epsilon$) and the inconsistency index ($\delta(\epsilon)$) (although probably not strictly) for two families of truth-functional agendas with one conclusion. The relation we proved includes dependency on the number of voters ($n$). In the works that preceded us for preference agendas
(\cite{Kalai2002}, \cite{Mossel2010}, \cite{Keller2010}) the
relation did not include such a dependency. An interesting question
is whether such a dependency is inherent for conjunction agendas
or whether it is possible to prove a relation that does not
depend on $n$.

A major assumption in this paper is the uniform distribution over
the inputs which is equivalent to assuming i.i.d uniform
distribution over the premises. We think that our results can be
extended for other distributions (still assuming voters' opinions
are distributed i.i.d) over the space over premises' opinions which
seem more realistic.

Immediate extensions for this work can be to extend our result to
more complex truth-functional agendas and generalize our results to non-truth-functional agendas to get a result unifying our work and Kalai, Mossel, and Keller's works for the preference agenda.

A major open question is whether one can find an agenda for which
relaxing the constraints of independence and consistency extends the
class of satisfying aggregation mechanisms in a non-trivial way.
\bibliographystyle{plain}

\clearpage
\onecolumn
\appendix
\section{Lemmas Proof - General} \label{Appendix_LemmasProof_General}
\subsection{Propositions \ref{Prop_FuncDistToMechDist},\ref{Prop_GenWisLifshitz}}
For a given pair of independent aggregation mechanisms, the
following propositions connect between the pairwise distance between
respective issue-aggregating functions (which we found easier to
analyze in most cases) and both the distance between the mechanisms
and the inconsistency indices of them.
\begin{proposition}\label{Prop_FuncDistToMechDist}~\\
For any agenda \Agenda of $m$ issues and any voting
functions\\\indent
    \mbox{$f^1,\ldots,f^m,g^1,\ldots,g^m,:\{0,1\}^n\rightarrow\{0,1\}$},
    \[d^\Agenda(\left<f^1,\ldots,f^m\right>~,~\left<g^1,\ldots,g^m\right>)\leqslant
        \sum\limits_{j=1}^mPr\left[f^j(X^j)\neq g^j(X^j)~|~X\in\Agenda^n\right].\]
\end{proposition}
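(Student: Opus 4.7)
The plan is a direct union bound on events indexed by the $m$ issues. By the independence of both aggregation mechanisms, for any profile $X \in \Agenda^n$ we have $\langle f^1,\ldots,f^m\rangle(X) = (f^1(X^1),\ldots,f^m(X^m))$ and likewise for $\langle g^1,\ldots,g^m\rangle(X)$. Therefore the two aggregated opinions differ on $X$ if and only if there exists some issue $j \in \{1,\ldots,m\}$ on which $f^j(X^j) \neq g^j(X^j)$. In other words, the event $\{\langle f^1,\ldots,f^m\rangle(X) \neq \langle g^1,\ldots,g^m\rangle(X)\}$ equals the union $\bigcup_{j=1}^m \{f^j(X^j) \neq g^j(X^j)\}$.

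From this observation I would unwind the definition of $d^\Agenda$ and apply the union bound in one step:
\[
d^\Agenda\bigl(\langle f^1,\ldots,f^m\rangle,\langle g^1,\ldots,g^m\rangle\bigr)
= \Pr\!\left[\bigcup_{j=1}^m \{f^j(X^j)\neq g^j(X^j)\} \;\middle|\; X\in\Agenda^n\right]
\leq \sum_{j=1}^m \Pr\!\left[f^j(X^j)\neq g^j(X^j)\;\middle|\; X\in\Agenda^n\right].
\]
Note that the conditioning distribution (uniform over $\Agenda^n$) is identical on both sides, so no change-of-measure issue arises; the only inequality used is subadditivity of probability.

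There is essentially no obstacle here; the statement is a one-line union bound, and the proposition is included to formalize a passage from issue-wise distances to mechanism-wise distance that will be invoked repeatedly in later arguments. The only caveat worth mentioning explicitly is that the bound is generally not tight: equality would require the $m$ disagreement events to be pairwise disjoint, whereas in typical cases (e.g., when several $f^j$ and $g^j$ are correlated through shared coordinates of $X$) the events overlap substantially. This slack is harmless for the intended applications, since the proposition is used only to convert small per-issue error bounds into a global error bound of the same order of magnitude (up to the factor $m$).
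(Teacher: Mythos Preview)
Your proof is correct and matches the paper's own proof, which simply states ``Direct use of the union-bound inequality.'' You have spelled out precisely that argument.
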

\begin{proof-of-proposition}{}
Direct use of the union-bound inequality.
\end{proof-of-proposition}%
\begin{proposition}~\\
For any agenda \Agenda of $m$ issues and voting functions
    \mbox{$f^1,\ldots,f^m,g^1:\{0,1\}^n\rightarrow\{0,1\}$},
    \[\left|IC^\Agenda(\left<f^1,f^2,\ldots,f^m\right>)-IC^\Agenda(\left<g^1,f^2,\ldots,f^m\right>)\right|\leqslant
        Pr[f^1(X^1)\neq g^1(X^1)~|~X\in\Agenda^n].\]
\end{proposition}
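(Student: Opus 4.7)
The plan is to exploit the fact that $F = \left<f^1, f^2, \ldots, f^m\right>$ and $G = \left<g^1, f^2, \ldots, f^m\right>$ share every issue-aggregating function except the first, so their outputs can differ only in the first coordinate, and then to deduce the bound on the inconsistency indices by an elementary symmetric-difference argument applied to the events ``the aggregated opinion is inconsistent'' under each mechanism.

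First I would observe that $(F(X))^j = (G(X))^j$ for every $j \geq 2$ and every profile $X$, and hence $F(X) \neq G(X)$ if and only if $f^1(X^1) \neq g^1(X^1)$. This gives the coupling identity
\[
\Pr\bigl[F(X) \neq G(X) \mid X \in \Agenda^n\bigr] \;=\; \Pr\bigl[f^1(X^1) \neq g^1(X^1) \mid X \in \Agenda^n\bigr].
\]

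Next, I would set $A = \{F(X) \notin \Agenda\}$ and $B = \{G(X) \notin \Agenda\}$, so that $IC^\Agenda(F) = \Pr[A \mid X \in \Agenda^n]$ and $IC^\Agenda(G) = \Pr[B \mid X \in \Agenda^n]$. Since the event $\{F(X) = G(X)\}$ forces $A$ and $B$ to coincide, the symmetric difference is contained in the disagreement event, $A \triangle B \subseteq \{F(X) \neq G(X)\}$. Applying the elementary inequality $|\Pr[A] - \Pr[B]| \leq \Pr[A \triangle B]$ under the common conditioning on $X \in \Agenda^n$, and combining with the coupling identity above, yields the claimed bound.

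There is no real obstacle here; the statement is a sensitivity bound for the inconsistency index under a single-coordinate perturbation of an independent aggregation mechanism, and the proof is essentially one line once the coupling is in place. The only point that merits mild care is that the conditioning event $\{X \in \Agenda^n\}$ is identical on both sides of the inequality, which is what lets us pass the symmetric-difference bound on indicators directly to a bound on the (conditional) inconsistency indices with no extra factor.
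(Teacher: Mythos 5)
Your proof is correct and is essentially the same argument as the paper's: the paper bounds $IC(F)-IC(G)$ and $IC(G)-IC(F)$ separately by splitting the inconsistency event according to whether $f^1(X^1)=g^1(X^1)$, which is exactly your symmetric-difference inequality $|\Pr[A]-\Pr[B]|\leqslant\Pr[A\triangle B]\leqslant\Pr[F(X)\neq G(X)]$ unpacked into its two one-sided halves. Your packaging via the coupling identity and $A\triangle B\subseteq\{F(X)\neq G(X)\}$ is a slightly cleaner way to write the same thing.
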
%
\begin{proof-of-proposition}{}~\\
$\begin{array}[t]{rlr}%
IC(\left<f^1,\ldots,f^m\right>)
    &= \Pr\left[\left(f^1(X^1),f^2(X^2),\ldots,f^m(X^m)\right)\notin\Agenda~|~X\in\Agenda^n\right]\\
    &\leqslant Pr[f^1(X^1)\neq g^1(X^1)~|~X\in\Agenda^n]\\&\qquad
                + \Pr\left[\left(f^1(X^1),\ldots,f^m(X^m)\right)\notin\Agenda
                    ~\bigwedge f^1(x)= g^1(x)~|~X\in\Agenda^n\right]\\
    &\leqslant Pr[f^1(X^1)\neq g^1(X^1)~|~X\in\Agenda^n]\\&\qquad
                + \Pr\left[\left(g^1(X^1),f^2(X^2),\ldots,f^m(X^m)\right)\notin\Agenda~|~X\in\Agenda^n\right]\\
    &= IC^\Agenda(\left<g^1,\ldots,f^m\right>) +  Pr[f^1(X^1)\neq g^1(X^1)~|~X\in\Agenda^n]&
\end{array}$\\
Hence, \mbox{$IC(f,g,h)-IC(f',g,h)\leqslant  Pr[f^1(X^1)\neq g^1(X^1)~|~X\in\Agenda^n]$}.\\Similarly we
can prove that \mbox{$IC(f',g,h)-IC(f,g,h)\leqslant  Pr[f^1(X^1)\neq g^1(X^1)~|~X\in\Agenda^n]$}.
\end{proof-of-proposition}%
As a corollary of the above we get
\begin{proposition}~\label{Prop_GenWisLifshitz}\\
For any agenda \Agenda of $m$ issues and any voting
functions\\\indent
    \mbox{$f^1,\ldots,f^m,g^1,\ldots,g^m,:\{0,1\}^n\rightarrow\{0,1\}$},
    \[\left|IC^\Agenda(\left<f^1,\ldots,f^m\right>)-IC^\Agenda(\left<g^1,\ldots,g^m\right>)\right|\leqslant
        \sum\limits_{j=1}^mPr\left[f^j(X^j)\neq g^j(X^j)~|~X\in\Agenda^n\right].\]
\end{proposition}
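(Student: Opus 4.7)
The plan is to prove this by a hybrid/telescoping argument, iterating the preceding proposition one coordinate at a time.

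First I would define a chain of $m+1$ independent aggregation mechanisms interpolating between $\langle f^1,\ldots,f^m\rangle$ and $\langle g^1,\ldots,g^m\rangle$, namely
\[F_k = \langle g^1,\ldots,g^k, f^{k+1},\ldots,f^m\rangle \quad\text{for } k=0,1,\ldots,m,\]
so that $F_0=\langle f^1,\ldots,f^m\rangle$ and $F_m=\langle g^1,\ldots,g^m\rangle$. Consecutive mechanisms $F_{k-1}$ and $F_k$ differ only in the $k$-th coordinate (where $f^k$ is replaced by $g^k$), which is exactly the setup of the previous proposition.

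Next I would invoke the previous proposition on each pair $(F_{k-1},F_k)$. That proposition is stated for swapping the first coordinate, but by symmetry (simply relabel the issues so that the $k$-th coordinate plays the role of the first), it applies verbatim to yield
\[\bigl|IC^\Agenda(F_{k-1}) - IC^\Agenda(F_k)\bigr| \leqslant \Pr\!\left[f^k(X^k)\neq g^k(X^k)~\big|~X\in\Agenda^n\right].\]

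Finally, the triangle inequality gives
\[\bigl|IC^\Agenda(F_0)-IC^\Agenda(F_m)\bigr| \leqslant \sum_{k=1}^m \bigl|IC^\Agenda(F_{k-1}) - IC^\Agenda(F_k)\bigr| \leqslant \sum_{j=1}^m \Pr\!\left[f^j(X^j)\neq g^j(X^j)~\big|~X\in\Agenda^n\right],\]
which is the desired bound. There is no real obstacle here; the only small point to be careful about is that when applying the previous proposition to the $k$-th coordinate one must note that the distribution of $X^k$ under $X\in\Agenda^n$ does not depend on which functions occupy the other coordinates, so the bound on the right-hand side refers genuinely to $\Pr[f^k(X^k)\neq g^k(X^k)\mid X\in\Agenda^n]$ regardless of the hybrid index $k$.
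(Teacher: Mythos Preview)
Your proposal is correct and is precisely the argument the paper intends: the paper states this proposition merely ``as a corollary of the above,'' and your hybrid/telescoping argument swapping one coordinate at a time and applying the single-coordinate proposition plus the triangle inequality is exactly that corollary spelled out.
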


\subsection{Proposition \ref{Proposition_DIJisEpsIndependent}}
\begin{propositionUN}~\\
Let $F$ be an aggregation mechanism and $j$ an issue.
If \mbox{$DI^j(F)\leqslant\epsilon$}, then there exists an
aggregation mechanism $H$ that satisfies $DI^j(H)=0$ and
\mbox{$d(F,H)\leqslant 2\epsilon$}.
If \mbox{$DI^j(F)\geqslant\epsilon$}, then every aggregation
mechanism $H$ that satisfies $DI^j(H)=0$, also satisfies
\mbox{$d(F,H)\geqslant \half\epsilon$}\\
\end{propositionUN}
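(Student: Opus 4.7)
The plan is to parametrize everything by the $j$-th column of the input profile. For each $c \in \{0,1\}^n$, set
$$p_c = \Pr\bigl[(F(X))^j = 1 \bigm| X^j = c,\ X \in \Agenda^n\bigr],$$
and let $\mu_c$ denote the marginal probability that $X^j = c$ when $X$ is drawn uniformly from $\Agenda^n$. Since $X$ and $Y$ are independent given $X^j = Y^j = c$, a direct calculation from the definition gives
$$DI^j(F) \;=\; \sum_{c} \mu_c \cdot 2 p_c (1 - p_c).$$
Both parts of the proposition then reduce to the two matching elementary inequalities
$$\min(p, 1-p) \;\leqslant\; 2p(1-p) \;\leqslant\; 2\min(p, 1-p),$$
the first because $2\max(p,1-p) \geqslant 1$, the second because $\max(p,1-p)\leqslant 1$.

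For the first part I would define $H$ to agree with $F$ on every issue $k \neq j$, and on issue $j$ to output the majority value, $(H(X))^j = h^j(X^j)$ with $h^j(c) = 1$ iff $p_c \geqslant \tfrac{1}{2}$ (tie-break arbitrarily; on columns $c$ never attained by a consistent profile, $h^j(c)$ may be chosen arbitrarily as well). By construction $(H(X))^j$ depends only on $X^j$, so $DI^j(H) = 0$. Moreover $H$ and $F$ differ only on issue $j$, so
$$d(F, H) \;=\; \sum_c \mu_c \min(p_c, 1 - p_c) \;\leqslant\; \sum_c \mu_c \cdot 2 p_c(1-p_c) \;=\; DI^j(F) \;\leqslant\; \epsilon,$$
which is comfortably within the stated bound $2\epsilon$.

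For the second part, let $H$ be any aggregation mechanism satisfying $DI^j(H) = 0$. The vanishing of the dependency index forces $(H(X))^j$ to be a function of $X^j$ alone on the support of $\Agenda^n$, say $(H(X))^j = h(X^j)$. Regardless of the value $h(c) \in \{0,1\}$, the conditional probability of disagreement on issue $j$ given $X^j = c$ is either $p_c$ or $1 - p_c$, hence at least $\min(p_c, 1-p_c)$. Therefore
$$d(F,H) \;\geqslant\; \sum_c \mu_c \min(p_c, 1 - p_c) \;\geqslant\; \sum_c \mu_c \cdot p_c(1-p_c) \;=\; \tfrac{1}{2} DI^j(F) \;\geqslant\; \tfrac{1}{2}\epsilon,$$
as desired.

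There is no real obstacle here; the only subtlety is to verify that $DI^j(H) = 0$ really does force $(H(X))^j$ to be a function of $X^j$ on the support of $\Agenda^n$ (which follows because the relevant probability spaces are finite, so a vanishing expectation of a non-negative quantity forces point-wise vanishing of the integrand on the support), and to handle tie-breaking in the definition of $h^j$ when $p_c = \tfrac{1}{2}$ or when $c$ is unattained — neither choice affects the distance bound.
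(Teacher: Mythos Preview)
Your proof is correct and uses the same construction of $H$ as the paper (take the majority value of $(F(Y))^j$ over all $Y$ with $Y^j = X^j$, leave the other issues untouched). The difference lies in the analysis. The paper never writes down the closed form $DI^j(F)=\sum_c \mu_c\cdot 2p_c(1-p_c)$; instead it bounds $d(F,H)$ by Markov's inequality, obtaining only $d(F,H)\leqslant 2\,DI^j(F)$, and for the second part it argues the contrapositive by directly estimating $DI^j(F)$ in terms of $d(F,G)$. Your route via the explicit formula and the sandwich $\min(p,1-p)\leqslant 2p(1-p)\leqslant 2\min(p,1-p)$ is more transparent and, in the first part, actually sharper: you get $d(F,H)\leqslant DI^j(F)\leqslant\epsilon$ rather than the paper's $2\epsilon$. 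Nothing is lost; the only point worth flagging is that your tighter bound shows the constant $2$ in the proposition's first claim is not needed.
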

\noindent
\begin{proof-of-proposition}{}~\\
With no loss of generality assume that $j=1$.
\begin{itemize}
\item
Let $F$ be an aggregation mechanism. We define the functions
\mbox{$G^1,\ldots,G^m:\FeasableAssignments^n\rightarrow\{0,1\}$} by:
    \[\begin{array}{ll}
        &G^1(X)=\left\{\begin{array}{rl}
        1 & \Pr\limits_{Y\in\FeasableAssignments^n}\left[\left(F(X)\right)^1=1~|~Y^1=X^1\right]\geqslant\half\\
        0 & \mbox{otherwise}
    \end{array}\right.\\
    j=2,\ldots,m & G^j(X)=\left(F(X)\right)^j
    \end{array}\]
and an aggregation mechanism
\mbox{$G(X)=\left<G^1(X),\ldots,G^m(X)\right>$}.
Clearly \mbox{$DI^1(G)=0$}.\\
$\begin{array}{rll}%
d(F,G)
    &= \Pr\limits_{X\in\FeasableAssignments^n}\left[(F(X))^1\neq G^1(X)\right]\\
    &= \Pr\limits_{X\in\FeasableAssignments^n}\left[
        \Pr\limits_{Y\in\FeasableAssignments^n}\left[(F(X))^1\neq (F(Y))^1 | X^1=Y^1\right]\geqslant \half
        \right]\\
    &\leqslant 2\Exp\limits_{X\in\FeasableAssignments^n}\left[
        \Pr\limits_{Y\in\FeasableAssignments^n}\left[(F(X))^1\neq (F(Y))^1 | X^1=Y^1\right]
        \right]\\
    &= 2DI^1(F)
        &\qedhere
\end{array}$
\item
Let $F$ be an aggregation mechanism that is $\epsilon$-close to
satisfy $DI^1(F)=0$. That is, we can find an aggregation mechanism
$G$ such that $d(F,G)\leqslant\epsilon$ and $DI^1(G)=0$.\\
$\begin{array}{rll}%
DI^1(F)
    &= \Exp\limits_{X\in\FeasableAssignments^n}\left[
               \Pr\limits_{Y\in\FeasableAssignments^n}\left[(F(X))^1\neq (F(Y))^1 | X^1=Y^1\right]
               \right]\\
    &\leqslant \Pr[F(X)\neq G(X)]
        + \sum\limits_{X:F(X)= G(X)}\Pr\limits_{Y\in\FeasableAssignments^n}\left[(G(X))^1\neq (F(Y))^1 |
        X^1=Y^1\right]\\
    &\leqslant \epsilon
        + \sum\limits_{X:F(X)= G(X)}\Pr\limits_{Z\in\FeasableAssignments^n}[Z=X]
                \Pr\limits_{Y\in\FeasableAssignments^n}\left[(G(Y))^1\neq (F(Y))^1 | X^1=Y^1\right]\\
    &\leqslant 2\epsilon
        &\qedhere
\end{array}$
\end{itemize}
\end{proof-of-proposition}
\subsection{Proposition \ref{Proposition_DIisEpsIndependent}}
\begin{propositionUN}~\\
Let $F$ be an aggregation mechanism for an agenda over $m$ issues
that satisfies \mbox{$DI(F)\leqslant\delta$}.
Then there exists an independent aggregation mechanism $H$ that
satisfies \mbox{$d(F,H)\leqslant 2m\delta$}.
\end{propositionUN}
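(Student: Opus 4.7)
The plan is to iterate Proposition \ref{Proposition_DIJisEpsIndependent} once for each issue, fixing them one at a time. Formally, I would set $H_0 = F$ and, for $j = 1, 2, \ldots, m$, apply Proposition \ref{Proposition_DIJisEpsIndependent} to $H_{j-1}$ at issue $j$ to obtain $H_j$ with $DI^j(H_j) = 0$ and $d(H_{j-1}, H_j) \leq 2 \cdot DI^j(H_{j-1})$. Then set $H = H_m$.

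The key observation that makes this iteration work is that the construction used in the proof of Proposition \ref{Proposition_DIJisEpsIndependent} modifies only the $j$-th coordinate of the output: the mechanism $G$ built there satisfies $G^k \equiv F^k$ for every $k \neq j$. Consequently, once we have made issue $k$ independent at step $k$, no later step can undo it, i.e., $DI^k(H_\ell) = 0$ for all $\ell \geq k$. In particular $H = H_m$ satisfies $DI^j(H) = 0$ for every $j$, and since $DI^j(H) = 0$ means that $(H(X))^j$ depends only on $X^j$, this is exactly the definition of $H$ being independent.

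The distance bound then follows by the triangle inequality together with the same coordinate-preservation observation: since $H_{j-1}$ and $F$ agree on coordinate $j$ (none of the previous steps $1, \ldots, j-1$ altered it), we have $DI^j(H_{j-1}) = DI^j(F) \leq DI(F) \leq \delta$. Therefore
\[
d(F,H) \;\leq\; \sum_{j=1}^m d(H_{j-1}, H_j) \;\leq\; \sum_{j=1}^m 2 \cdot DI^j(H_{j-1}) \;=\; \sum_{j=1}^m 2 \cdot DI^j(F) \;\leq\; 2m\delta,
\]
which is the desired bound.

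The main obstacle, and the only point that requires any care, is verifying the coordinate-preservation claim from the proof of Proposition \ref{Proposition_DIJisEpsIndependent}, so that successive applications neither reintroduce dependence on previously treated issues nor inflate $DI^j$ on the issue being treated next. Everything else is a clean induction and a union bound on distances.
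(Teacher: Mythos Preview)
Your proof is correct. The paper takes a slightly more direct route: rather than iterating Proposition~\ref{Proposition_DIJisEpsIndependent} issue by issue, it defines all $m$ issue-aggregating functions $f^1,\ldots,f^m$ in one shot (each by the same majority-over-the-fiber rule that the single-issue proposition uses) and then bounds $d(F,H)$ by a union bound over the $m$ coordinates, each term controlled by $2\,DI^j(F)$ via the same Markov step. Because your iteration modifies exactly one coordinate at each step and leaves the others untouched, your final $H_m$ coincides with the paper's $H$, and your triangle-inequality sum is termwise identical to the paper's union bound. The paper's version avoids having to peek inside the proof of Proposition~\ref{Proposition_DIJisEpsIndependent} to verify coordinate preservation; your version, in exchange, makes the reduction to the single-issue case explicit and modular.
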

\noindent
\begin{proof-of-proposition}{}~\\
We define issue aggregating functions
\mbox{$f^1,\ldots,f^m:\{0,1\}^n\rightarrow\{0,1\}$} by:
    \[f^j(t)=\left\{\begin{array}{rl}
        1 & \Pr\limits_{X\in\FeasableAssignments^n}\left[G^j(X)=1~|~X^j=t\right]\geqslant\half\\
        0 & \mbox{otherwise}
    \end{array}\right.\]
and an (independent) aggregation mechanism
\mbox{$F=\left<f^1,\ldots,f^m\right>$}.\\
$\begin{array}{rll}%
d(F,G)
    &= \Pr\limits_{X\in\FeasableAssignments^n}\left[F(X)\neq G(X)\right]\\
    &\leqslant \sum\limits_{j=1}^m \Pr\limits_{X\in\FeasableAssignments^n}\left[F^j(X)\neq G^j(X)\right]\\
    &= \sum\limits_{j=1}^m \Pr\limits_{X\in\FeasableAssignments^n}\left[
        \Pr\limits_{Y\in\FeasableAssignments^n}\left[G(X)\neq G(Y) | X^j=Y^j\right]\geqslant \half
        \right]\\
    &\leqslant \sum\limits_{j=1}^m 2\Exp\limits_{X\in\FeasableAssignments^n}\left[
        \Pr\limits_{Y\in\FeasableAssignments^n}\left[G(X)\neq G(Y) | X^j=Y^j\right]
        \right]\\
    &\leqslant  2m\delta_{_{DI}}
        &\qedhere\end{array}$
\end{proof-of-proposition}

\subsection{Id Agenda}
For completeness we add here an approximate aggregation theorem for the id agenda \PAIR{A}{A}
\begin{theorem}~\\%
For any \mbox{$\epsilon>0$} and any independent aggregation mechanism $F$:\\
    If \mbox{$IC^\PAIR{A}{A}\leqslant\epsilon$}, then there exists an
    aggregation mechanism $G$ that satisfies consistency and
    independence such that \mbox{$d(F,G)\leqslant\epsilon$}.
\end{theorem}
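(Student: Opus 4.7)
The plan is to exploit the fact that the id agenda $\langle A,A\rangle$ has only two consistent opinions, $00$ and $11$, so its consistent profiles are exactly the pairs $(X^1, X^2)$ with $X^1 = X^2$. This collapses the problem to a one-column analysis.

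First I would write the independent mechanism as $F = \langle f, g \rangle$ for some issue-aggregating functions $f, g: \{0,1\}^n \to \{0,1\}$. On any consistent profile we have $X^1 = X^2 = x$ for some $x \in \{0,1\}^n$ (distributed uniformly by the Impartial Culture Assumption), and consistency of $F$ on this profile is exactly the condition $f(x) = g(x)$. Therefore
\[
IC^{\langle A,A\rangle}(F) = \Pr_x\bigl[f(x) \neq g(x)\bigr] = d(f,g).
\]

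Next I would exhibit the candidate consistent independent mechanism $G = \langle f, f \rangle$. This is trivially independent, and trivially consistent because the outputs agree on every input. Evaluating the mechanism distance on a uniformly random consistent profile $(x,x)$, $F$ returns $(f(x), g(x))$ while $G$ returns $(f(x), f(x))$; they disagree exactly on the event $f(x) \neq g(x)$. Hence
\[
d^{\langle A,A\rangle}(F,G) = \Pr_x\bigl[f(x) \neq g(x)\bigr] = IC(F) \leq \epsilon,
\]
which is the claimed bound.

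There is essentially no obstacle here: the id agenda imposes no non-trivial interdependency constraint across the two issues beyond pointwise equality on consistent profiles, so inconsistency and pairwise distance between the issue-aggregating functions coincide. The only thing worth noting is that picking $G = \langle g, g \rangle$ works just as well, and that the bound $\epsilon$ is tight (take any $f$ and let $g$ differ from $f$ on an $\epsilon$-fraction of inputs).
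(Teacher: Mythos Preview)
Your proof is correct and follows essentially the same approach as the paper: both compute $IC^{\langle A,A\rangle}(\langle f,g\rangle)=\Pr[f(x)\neq g(x)]=d(f,g)$ and then take $G=\langle f,f\rangle$ as the consistent independent mechanism. Your additional remarks about the alternative choice $G=\langle g,g\rangle$ and tightness are accurate but not in the paper's (even terser) version.
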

\begin{proof}
This theorem is trivial since
    \[IC^\PAIR{A}{A}(\PAIR{f}{g})
        ~= \Pr\left[f(x)\neq g(y)~|~x=y\right]
        ~= \Pr\left[f(x)\neq g(x)\right]
        ~= d(f,g)\]
Noticing that any aggregation mechanism of the form \PAIR{f}{f} is
consistent for this agenda so we get the theorem.
\end{proof}
\section{Lemmas Proof - Conjunction agenda} \label{Appendix_LemmasProof_mAND}
\label{Chapter Lemmas proof_mAND}
\subsection{Theorem \ref{Main_Theorem_mAND}}%
\begin{theoremUN}~\\
Let $m\geqslant 2$ and let the agenda be
    $\Agenda=\mANDagenda$.

For any \mbox{$\epsilon>0$} and any independent aggregation mechanism $F$:\\
    If \mbox{$IC(F)\leqslant\epsilon$}, then there exists an
    aggregation mechanism $G$ that satisfies consistency and
    independence such that
\mbox{$d(F,G)<5m \left(n^2\epsilon\right)^\frac{1}{m^2+m-1}$}.
\end{theoremUN}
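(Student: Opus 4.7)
The plan is to establish a central influence-ignorability inequality as the engine of the proof: for any independent aggregation mechanism $F = \langle f^1, \ldots, f^m, h \rangle$ with $IC(F) \leq \epsilon$, any voter $i$, and any two distinct premise indices $j \neq k$, I claim
\[\OSI{i}{f^j}\cdot \Inf{i}{f^k} \leq 4\, IC(F).\]
The intuition is that if both factors are large, one can construct many pairs of consistent profiles that differ only in voter $i$'s vote on premise $k$ and in the consistency-forced conclusion bit, such that the AND of the aggregated premises changes between them while the aggregated conclusion $h(X^{m+1})$ cannot follow suit on both; counting such witness pairs gives a lower bound on $IC(F)$ in terms of the product.

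Armed with this lemma, I would split on whether some $f^j$ is close to the constant-zero function. If some $f^j$ satisfies $\Pr[f^j = 1] \leq \eta$ for a threshold $\eta$ to be tuned later, I replace that $f^j$ and $h$ by the constant-zero function; Proposition \ref{Prop_FuncDistToMechDist} then produces a consistent mechanism within Hamming distance $O(\eta)$ of $F$. In the non-degenerate case I choose a second threshold $\Delta$ and build the junta
\[J = \bigcup_{j=1}^{m} \bigl\{ i : \OSI{i}{f^j} \leq \Delta/n \bigr\}.\]
For any $i \notin J$ the key inequality gives $\Inf{i}{f^k} \leq 4n\epsilon/\Delta$ for every $k$, so the total influence of non-junta voters on each $f^j$ is at most $4 n^2\epsilon/\Delta$. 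A standard majority-coupling argument then shows that $\JUNTA{f^j}$ is $O(n^2\epsilon/\Delta)$-close to $f^j$, and Proposition \ref{Prop_GenWisLifshitz} lifts this to the mechanism level: the mechanism $\langle \JUNTA{f^1}, \ldots, \JUNTA{f^m}, \JUNTA{h} \rangle$ depends only on the votes of $J$ and still has small inconsistency index.

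The final step is collapsing the junta-restricted mechanism to an actually consistent independent mechanism. Lemma \ref{Lemma_mAND_WzeroIsUnanimity} characterizes exactly consistent mechanisms on the conjunction agenda as having either some premise identically zero or all premise functions equal to a single oligarchy in $\OLIG$. In the non-degenerate case the first option is ruled out by our choice of $\eta$, so I must show that the junta functions $\JUNTA{f^j}$ are close to a common oligarchy of a sub-coalition $S \subseteq J$; then the mechanism $G$ that aggregates every premise by the oligarchy of $S$ and the conclusion by $\bigwedge$ of the premises answers the theorem.

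The main obstacle is precisely this collapse: the influence-ignorability inequality only gives pairwise control between two premise functions, so forcing all $m$ premise functions to align on a common oligarchy requires iterating the lemma inside $J$, inductively producing nested juntas. Each round contributes a factor of $O(n^2\epsilon/\Delta)$ in distance and an additional pairwise constraint, which is what produces the exponent $m^2+m-1$ in the final bound. Balancing $\Delta \approx (n^2\epsilon)^{1/(m^2+m-1)}$ across the degenerate-case loss $O(\eta)$, the junta-approximation loss $O(mn^2\epsilon/\Delta)$, and the iterated collapse loss yields the claimed $5m(n^2\epsilon)^{1/(m^2+m-1)}$.
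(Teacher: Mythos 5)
Your overall architecture matches the paper's (an influence--ignorability inequality, a case split on whether some $f^j$ is close to the constant zero function, a junta built from the low-ignorability voters, and a collapse to a common oligarchy), but two of your key steps have genuine problems. First, your central inequality $\OSI{i}{f^j}\cdot\Inf{i}{f^k}\leqslant 4\,IC(F)$ is false as stated for $m>2$: take $f^1=f^2=\maj$, $f^3\equiv 0$ and $h\equiv 0$, which has $IC(F)=0$ while $\OSI{i}{f^1}\cdot\Inf{i}{f^2}>0$. The correct statement (Lemma \ref{Lemma_mAND_BoundPI}) carries the factor $\bigl(\prod_{j'\neq j,k}d(f^{j'},0)\bigr)^{-1}$, because an inconsistency witness can only be manufactured when the \emph{other} premises aggregate to $1$; this is precisely why the non-degeneracy hypothesis $d(f^{j'},0)\geqslant\Delta$ is needed even to bound influences, and why the resulting bound on non-junta voters is $\Inf{i}{f^k}\leqslant 4n\epsilon\Delta^{1-m}$ rather than your $4n\epsilon/\Delta$.

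Second, and more seriously, the collapse of the junta functions to a single oligarchy --- the crux of the proof and the source of the exponent $m^2+m-1$ --- is not carried out. Your proposed iteration of pairwise constraints over nested juntas is not shown to terminate or to control the accumulated error, and it is not how the argument works. The paper instead (a) bounds the junta size, $|J|\leqslant m\bigl(1+\log_2\frac{1}{\Delta}\bigr)$, because a function that is $\Delta$-far from zero can have at most $1+\log_2\frac{1}{\Delta}$ voters with ignorability below $\Delta/n$ (Lemma \ref{Lemma_JuntaIsSmall}); (b) observes that for functions depending only on $J$ the quantity $\ICtilde$ is an integer multiple of $2^{-m|J|}$ (Lemma \ref{Lemma_mAND_WhasGranularity}); and (c) shows $\ICtilde(\JUNTA{f^1},\ldots,\JUNTA{f^m})\leqslant\epsilon+\sum_j d(f^j,\JUNTA{f^j})<2^{-m|J|}$, forcing it to be exactly zero, at which point the exact characterization (Lemma \ref{Lemma_mAND_WzeroIsUnanimity}) gives a common oligarchy in one shot. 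The exponent $m^2+m-1$ falls out of matching $8mn^2\epsilon\Delta^{1-m}$ against $2^{-m|J|}\geqslant 2^{-m^2}\Delta^{m^2}$, not from losses accumulated over rounds. Without the junta-size bound and the granularity lemma your argument cannot close.
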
%
\begin{proof}~\\
For proving this  bound, we define for a given function
\mbox{$f:\{0,1\}^n\rightarrow\{0,1\}$} and a coalition $J$ (the
junta),  the junta function
\quad\mbox{$\JUNTA{f}:\{0,1\}^n\rightarrow\{0,1\}$}. It is derived
from $f$ in the following way:
    \[\JUNTA{f}(x)=\maj\left\{f(y)~|~y_{_J}=x_{_J}\right\}.\]
I.e., for a given input, $\JUNTA{f}$ reads only the votes of the
junta members, iterates over all the possible
    votes for the members outside the junta, and returns the more frequent
    result (assuming uniform distribution over the votes of the voters outside $J$).\\
We prove the following lemma:
\begin{lemma}~\\\label{Lemma_mAND_WisSmallSoFeqGinUnanLog}%
Let
\mbox{$f^1,\ldots,f^m:\left\{0,1\right\}^n\rightarrow\left\{0,1\right\}$}
be $m$ voting functions. Define \[\ICtilde(f^1,\ldots,f^m) =
\min\limits_hIC^\mANDagenda(f^1,\ldots,f^m,h).\] If there exists
constants $\Delta$ and $\epsilon$ such that:
    \[\begin{array}{@{\star~~}l}
        \ICtilde(f^1,\ldots,f^m)\leqslant\epsilon\\
        \forall j~:~d(f^j,0)\geqslant\Delta\\
        \epsilon<2^{-m^2-3}m^{-1}n^{-2}\Delta^{m^2+m-1}%
    \end{array}\]
Then there exist a coalition $J\subseteq\{1,\ldots,n\}$
    such that the junta functions \JUNTA{f^j}
satisfy
    \[\begin{array}{@{\star~~}l}
        \mbox{There exists an oligarchy $g\in\OLIG$ s.t. $\forall j~\JUNTA{f^j}=g$}\\
        \forall j~:d(f^j,\JUNTA{f^j})\leqslant 4n^2\epsilon\Delta^{1-m}\\
        |J|\leqslant m\left(1+\log_2\frac{1}{\Delta}\right)\\
    \end{array}\]
\end{lemma}
The requested bound is a corollary of lemma
\ref{Lemma_mAND_WisSmallSoFeqGinUnanLog}.\\
Assume a mechanism \mbox{$F=\left<f^1,\ldots,f^m,h\right>$} is given
such that {$IC(F)\leqslant \epsilon$}.
Then \mbox{$\ICtilde(f^1,\ldots,f^m)\leqslant \epsilon$}. Define
\mbox{$\Delta= 4\left(n^2\epsilon\right)^\frac{1}{m^2+m-1}$}. If
there exists $j\in\{1,\dots,m\}$ (with no loss of generality assume
$j=1$) s.t. \mbox{$d(f^j,0)<\Delta$}, then
\mbox{$\left<f^1,f^2,\ldots,f^m,h\right>$} is $\Delta$-close to
\mbox{$F=\left<0,f^2,\ldots,f^m,h\right>$} and
\mbox{$\left(\epsilon+\Delta\right)$-close}
     to \mbox{$F=\left<0,f^2,\ldots,f^m,0\right>$} which is a consistent mechanism.
If \mbox{$\forall j\in\{1,\ldots,m\}~ d(f^j,0) \geqslant\Delta$},
then \mbox{$\left<f^1,f^2,\ldots,f^m,h\right>$} is
\mbox{$4mn^2\epsilon\Delta^{1-m}$-close} to
\mbox{$\left<g,\ldots,g,h\right>$} for some oligarchy $g$ and
\mbox{$\left(\epsilon+4mn^2\epsilon\Delta^{1-m}\right)$-close} to
\mbox{$\left<g,\ldots,g,g\right>$} which is a consistent mechanism.
Since \mbox{$\max(
    \epsilon+\Delta,
    \epsilon+4mn^2\epsilon\Delta^{1-m}
)\leqslant 5m \left(n^2\epsilon\right)^\frac{1}{m^2+m-1}$} (when
$n^2\epsilon<1$), we get the theorem.
\end{proof}%
\subsection{Lemma \ref{Lemma_mAND_WisSmallSoFeqGinUnanLog}}%
\begin{lemmaUN}~\\
Let
\mbox{$f^1,\ldots,f^m:\left\{0,1\right\}^n\rightarrow\left\{0,1\right\}$}
be $m$ voting functions. Define \[\ICtilde(f^1,\ldots,f^m) =
\min\limits_hIC^\mANDagenda(f^1,\ldots,f^m,h).\] If there exists
constants $\Delta$ and $\epsilon$ such that:
    \[\begin{array}{@{\star~~}l}
        \ICtilde(f^1,\ldots,f^m)\leqslant\epsilon\\
        \forall j~:~d(f^j,0)\geqslant\Delta\\
        \epsilon<2^{-m^2-3}m^{-1}n^{-2}\Delta^{m^2+m-1}%
    \end{array}\]
Then there exist a coalition $J\subseteq\{1,\ldots,n\}$
    such that the junta functions \JUNTA{f^j}
satisfy
    \[\begin{array}{@{\star~~}l}
        \mbox{There exists an oligarchy $g\in\OLIG$ s.t. $\forall j~\JUNTA{f^j}=g$}\\
        \forall j~:d(f^j,\JUNTA{f^j})\leqslant 4n^2\epsilon\Delta^{1-m}\\
        |J|\leqslant m\left(1+\log_2\frac{1}{\Delta}\right)\\
    \end{array}\]
\end{lemmaUN}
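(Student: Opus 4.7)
The plan is to derive and iterate a single key inequality relating influence and ignorability across different premises. Fix any two distinct premises $j,k$ and any voter $i$; consider the event that a uniform profile $(X^1,\ldots,X^m)$ satisfies $X^j_i=0$, $f^j(X^j)=1$, $f^\ell(X^\ell)=1$ for every $\ell\ne j,k$, and $f^k(X^k)\ne f^k(X^k\oplus e_i)$. Because $X^j_i=0$, the conclusion column $X^1\wedge\cdots\wedge X^m$ has bit $i$ equal to $0$, so flipping $X^k_i$ leaves this column unchanged; hence any choice of $h$ gives the same output for the two profiles, but the required conclusion $\bigwedge_\ell f^\ell(X^\ell)$ flips, so at least one of the two profiles is inconsistent. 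Each inconsistent profile arises from at most two such events, so using independence of the columns under the agenda-uniform distribution,
\[
\OSI{i}{f^j}\cdot \Inf{i}{f^k}\cdot \prod_{\ell\ne j,k}d(f^\ell,0)\le 4\,\ICtilde(f^1,\ldots,f^m)\le 4\epsilon,
\]
and the hypothesis $d(f^\ell,0)\ge\Delta$ gives $\OSI{i}{f^j}\cdot \Inf{i}{f^k}\le 4\epsilon/\Delta^{m-2}$.

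Next I would define $J=\bigcup_{j=1}^m\{i:\OSI{i}{f^j}\le\Delta/n\}$ and bound $|J|$. For each fixed $j$, setting $S_j=\{i:\OSI{i}{f^j}\le\Delta/n\}$, splitting the event $f^j=1$ according to whether all coordinates in $S_j$ are $1$ or some coordinate in $S_j$ is $0$ gives $d(f^j,0)\le 2^{-|S_j|}+\frac{|S_j|\Delta}{2n}\le 2^{-|S_j|}+\Delta/2$; combined with $d(f^j,0)\ge\Delta$ this forces $|S_j|\le 1+\log_2(1/\Delta)$, hence $|J|\le m(1+\log_2(1/\Delta))$. For any $i\notin J$ we have $\OSI{i}{f^{j'}}>\Delta/n$ for every $j'$, so applying the key inequality with any $j'\ne\ell$ on the ignorability side yields $\Inf{i}{f^\ell}\le 4n\epsilon/\Delta^{m-1}$ for every $\ell$. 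Summing and invoking the standard influence-to-junta inequality $d(f^\ell,\JUNTA{f^\ell})\le\sum_{i\notin J}\Inf{i}{f^\ell}$ gives $d(f^\ell,\JUNTA{f^\ell})\le 4n^2\epsilon/\Delta^{m-1}$, the claimed bound.

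It then remains to show that the $\JUNTA{f^\ell}$ are all equal to a common oligarchy. Consider the auxiliary mechanism $\langle\JUNTA{f^1},\ldots,\JUNTA{f^m},h^\star\rangle$ with $h^\star$ chosen to minimise its inconsistency; by Proposition \ref{Prop_GenWisLifshitz} this inconsistency is at most $\epsilon+4mn^2\epsilon/\Delta^{m-1}\le 5mn^2\epsilon/\Delta^{m-1}$. Since every issue-aggregating function in this mechanism depends only on the $J$-coordinates of its column and the agenda-uniform distribution projected to $|J|$ voters is supported on $|\Agenda|^{|J|}=2^{m|J|}$ equally likely atoms, the inconsistency is a rational multiple of $2^{-m|J|}$; it is therefore either zero or at least $2^{-m|J|}\ge 2^{-m^2}\Delta^{m^2}$. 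The quantitative hypothesis $\epsilon<2^{-m^2-3}m^{-1}n^{-2}\Delta^{m^2+m-1}$ rules out the ``at least'' option, so $\ICtilde(\JUNTA{f^1},\ldots,\JUNTA{f^m})=0$. Lemma \ref{Lemma_mAND_WzeroIsUnanimity} then leaves two cases: some $\JUNTA{f^j}\equiv 0$, which together with $d(f^j,\JUNTA{f^j})\le 4n^2\epsilon/\Delta^{m-1}<\Delta$ (routinely verified from the $\epsilon$-hypothesis) contradicts $d(f^j,0)\ge\Delta$; or all $\JUNTA{f^j}$ coincide with a single oligarchy $g\in\OLIG$, as required.

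The main obstacle is calibrating the quantitative constants so that the loss $\Delta^{m-1}$ incurred when passing from the key inequality to the total-influence bound, combined with the loss $\Delta^{m^2}$ from the junta-size-driven discreteness threshold for nonzero inconsistency, fits exactly into the hypothesis exponent $m^2+m-1$; this is where the somewhat unusual form of the required bound on $\epsilon$ originates and where the proof has essentially no quantitative slack.
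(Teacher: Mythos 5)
Your proposal is correct and follows essentially the same route as the paper: the same ignorability--influence product bound via pairing a profile with its $i$-th-bit-flipped counterpart, the same junta $J$ of low-ignorability voters with the same size bound, the same influence-to-junta distance inequality, and the same granularity/discreteness argument forcing the junta mechanism's inconsistency to vanish so that the exact characterization lemma applies. The constant calibration you describe at the end matches the paper's computation as well.
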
%
\begin{proof-of-lemma}{}~\\
The proof of the lemma is constructive and defines the junta $J$. We
define the junta to be all the voters with small ignorability for at
least one of the functions
    \[ J=\bigcup\limits_{j=1}^m\left\{\begin{array}{c|c}
        i &
        \OSI{i}{f^j}\leqslant\frac{\Delta}{n}\end{array}\right\}\]
 and prove for this junta the different claims of the lemma.
\begin{itemize}
\item $\forall j~:d(f^j,\JUNTA{f^j})\leqslant
4n^2\epsilon\Delta^{1-m}$\\
The following lemma states a connection between the influence of the
voter on $f$($\Inf{i}{f}$), the ignorability of the same voter for
$g$($\OSI{i}{g}$), and the inconsistency index of $f$ and $g$.
\begin{lemma}\label{Lemma_mAND_BoundPI}
Let \mbox{$f^1,\ldots,f^m:\{0,1\}^n\rightarrow\{0,1\}$} be $m$
voting functions , \mbox{$i\in\{1,\ldots,n\}$} be a voter, and
\mbox{$k,l\in\{1,\ldots,m\}$} two different issues. Then
\mbox{$\OSI{i}{f^k}\cdot\Inf{i}{f^l}\leqslant
        4   \left(\prod\limits_{j\neq k,l}d(f^j,0)\right)^{-1}
        \ICtilde(f^1,\ldots,f^m)$}
\end{lemma}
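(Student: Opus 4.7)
The plan is to lower-bound $\ICtilde(f^1, \ldots, f^m)$ by the probability of an explicit family of \emph{bad} profiles that any conclusion-aggregator $h$ is forced to resolve inconsistently. Without loss of generality take $k = 1$ and $l = 2$. Under Impartial Culture a uniform consistent profile is described by i.i.d.\ uniform $x^1, \ldots, x^m \in \{0,1\}^n$ together with the last column $z = x^1 \wedge \cdots \wedge x^m$ (entry-wise), and the independent mechanism $\langle f^1, \ldots, f^m, h\rangle$ is inconsistent on that profile precisely when $h(z) \neq \bigwedge_j f^j(x^j)$.

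I would then consider the event
\[ E \;=\; \bigl\{\, x^1_i = 0,\ f^1(x^1) = 1,\ f^j(x^j) = 1 \text{ for every } j \geq 3,\ f^2(x^2) \neq f^2(x^2 \oplus e_i) \,\bigr\}. \]
The key observation is the asymmetric behaviour of $z$ and of the required output $\bigwedge_j f^j(x^j)$ under the involution that flips only the $i$-th bit of $x^2$. On the one hand, since $x^1_i = 0$ we have $z_i = 0$ regardless of $x^2_i$, and the other coordinates of $z$ are untouched, so $z$ is preserved by the involution. On the other hand, all factors $f^j(x^j)$ for $j \neq 2$ equal $1$ on $E$, so the required output collapses to $f^2(x^2)$, which is flipped by the defining condition of $E$. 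Hence the two profiles in each pair $\{(x^1, x^2, x^3, \ldots),\,(x^1, x^2 \oplus e_i, x^3, \ldots)\}$ share a common $z$ but demand opposite values from $h(z)$, forcing at least one of them to be inconsistent irrespective of how $h$ is chosen.

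Since $E$ is closed under this involution, the inconsistency set has measure at least $\tfrac{1}{2}\Pr[E]$. Using independence of the columns under Impartial Culture,
\[ \Pr[E] \;=\; \tfrac{1}{2} \cdot \OSI{i}{f^1} \cdot \Inf{i}{f^2} \cdot \prod_{j \geq 3} d(f^j, 0), \]
so taking the minimum over $h$ would give
\[ \ICtilde(f^1, \ldots, f^m) \;\geq\; \tfrac{1}{4} \cdot \OSI{i}{f^1} \cdot \Inf{i}{f^2} \cdot \prod_{j \neq 1, 2} d(f^j, 0), \]
which rearranges into the claimed bound.

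The argument is ultimately a single counting trick, so the only real obstacle is spotting the correct asymmetric construction: pin \emph{one} variable ($x^1$) to zero so that $z$ is frozen under the flip, and flip a \emph{different} variable ($x^2$) so that the required output toggles. Any attempt to play the same role with a single function collapses the pair construction, which is precisely why the lemma couples $\OSI{i}{f^k}$ and $\Inf{i}{f^l}$ for two \emph{distinct} issues $k$ and $l$.
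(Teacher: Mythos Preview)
Your argument is correct and is essentially the same as the paper's: both condition on $(x^k)_i=0$ and $f^j(x^j)=1$ for $j\neq k,l$, then use the involution $x^l\mapsto x^l\oplus e_i$ which fixes $z=\bigwedge_j x^j$ but flips the required output, forcing any $h$ to err on at least half of the paired profiles. The paper writes this as a chain of conditional-probability inequalities for a fixed optimal $h$, while you phrase it as an explicit bad event $E$, but the content is identical.
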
%
 Using this lemma, we can bound the influence of the voters
    outside the Junta by \mbox{$\Inf{i}{f}\leqslant 4n\epsilon \Delta^{1-m}$}.\\
    Since all these voters have small influence, the function $\JUNTA{f}$ cannot be too far
    from the original function $f$.
    \begin{lemma}\label{Lemma_JuntaIsClose}
Let \mbox{$f:\{0,1\}^n\rightarrow\{0,1\}$} be a binary function and
\mbox{$J\subseteq\{1,\ldots,n\}$} a coalition. Then
    \mbox{$\DISTx{f}{\JUNTA{f}}\leqslant~\sum\limits_{i\notin J}\Inf{i}{f}$}.
    \end{lemma}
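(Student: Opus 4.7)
The plan is to condition on the values of the coordinates inside $J$, reducing the claim to a single inequality for a boolean function on the hypercube, and then to invoke the edge-isoperimetric inequality. For each $\xi\in\{0,1\}^J$, let $g_\xi:\{0,1\}^{\bar J}\to\{0,1\}$ denote the restriction of $f$ to the coset $\{z\in\{0,1\}^n:z_{_J}=\xi\}$. By definition $\JUNTA{f}$ is constant on this coset, equal to the majority value $c_\xi$ of $g_\xi$, so the ``minority'' set $S_\xi=g_\xi^{-1}(1-c_\xi)$ has cardinality at most $2^{n-|J|-1}$.

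Two straightforward averaging identities, both obtained by conditioning on the $J$-coordinates, give
\[
d(f,\JUNTA{f}) \;=\; \Exp_\xi\!\bigl[\,|S_\xi|/2^{n-|J|}\,\bigr]
\qquad\text{and}\qquad
\sum_{i\notin J}\Inf{i}{f} \;=\; \Exp_\xi\!\Bigl[\,\sum_{i\in\bar J}\Inf{i}{g_\xi}\,\Bigr].
\]
It therefore suffices to establish, for each $\xi$ separately, the pointwise bound $|S_\xi|/2^{n-|J|}\le\sum_{i\in\bar J}\Inf{i}{g_\xi}$; taking the expectation will then yield the lemma.

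Setting $k=n-|J|$, the remaining claim is a self-contained statement about boolean functions on $\{0,1\}^k$: whenever $S$ is the minority level set of a boolean $g$ (so $|S|\le 2^{k-1}$), one has $|S|/2^k\le\sum_i\Inf{i}{g}$. A routine edge-count shows $\sum_i\Inf{i}{g}=|\partial S|/2^{k-1}$, where $\partial S$ denotes the edge-boundary of $S$, so the claim reduces to $|\partial S|\ge|S|/2$. This is a (weak form of a) standard edge-isoperimetric inequality on the Boolean cube: by Harper's theorem $|\partial S|\ge|S|(k-\log_2|S|)\ge|S|$ whenever $|S|\le 2^{k-1}$, with plenty of room to spare (so in fact one even obtains the sharper bound $|S|/2^k\le\frac{1}{2}\sum_i\Inf{i}{g}$).

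The only nontrivial ingredient is thus the cube's edge-isoperimetric inequality; once it is invoked, the reduction by conditioning on $\xi$ and the final assembly are immediate. If one wishes to avoid citing Harper, the weak form $|\partial S|\ge|S|$ for $|S|\le 2^{k-1}$ follows equally well from the Poincar\'e inequality on the hypercube, or from a direct induction on $k$, both elementary.
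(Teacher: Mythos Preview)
Your proof is correct and follows essentially the same route as the paper: condition on the $J$-coordinates, reduce to the restricted functions $g_\xi$, and apply the Boolean-cube edge-isoperimetric inequality in the form $\sum_i\Inf{i}{g}\ge\min(\Exp[g],1-\Exp[g])$ to each restriction before averaging. The paper quotes this inequality directly (citing Bollob\'as), whereas you recast it via the edge-boundary count and Harper's theorem, but the argument is the same.
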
%
    Combining both gives us the desired bound.
\item $|J|\leqslant m\left(1+\log_2\frac{1}{\Delta}\right)$\\
    Since $f$ is $\Delta$-far from zero we can bound the number of voters that have small
    ignorability.
\begin{lemma}\label{Lemma_JuntaIsSmall}
Let \mbox{$f:\{0,1\}^n\rightarrow\{0,1\}$} be a voting function. If
\mbox{$d(f,0)\geqslant\Delta$} then at most
\mbox{$\left(1+\log_2\frac{1}{\Delta}\right)$} voters have the
property:~
    \mbox{$\OSI{i}{f}\leqslant\frac{\Delta}{n}$}.%
\end{lemma}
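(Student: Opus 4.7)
The plan is to partition the probability $\Pr[f(x)=1]$ according to whether every voter in the small-ignorability set votes $1$, bound both parts, and read off the claim.

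Let $S=\left\{i\ \big|\ \OSI{i}{f}\leqslant\frac{\Delta}{n}\right\}$. I want to show that $|S|\leqslant 1+\log_2\frac{1}{\Delta}$. First I would write
\[
\Pr\left[f(x)=1\right]
  =\Pr\left[f(x)=1\ \wedge\ x_i=1\ \forall i\in S\right]
   +\Pr\left[f(x)=1\ \wedge\ \exists i\in S: x_i=0\right].
\]
The left-hand side is at least $\Delta$ by hypothesis, so controlling each term on the right from above forces $|S|$ to be small.

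For the second term I would apply the union bound over $i\in S$ together with the definition of ignorability:
\[
\Pr\left[f(x)=1\ \wedge\ \exists i\in S:x_i=0\right]
   \leqslant \sum_{i\in S}\Pr[x_i=0]\cdot\OSI{i}{f}
   \leqslant |S|\cdot \tfrac{1}{2}\cdot\tfrac{\Delta}{n}
   \leqslant \tfrac{\Delta}{2},
\]
using $|S|\leqslant n$. For the first term I would simply throw away the event $f(x)=1$ and bound by the probability that all coordinates in $S$ are $1$ under the uniform distribution, which is $2^{-|S|}$.

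Combining, $\Delta \leqslant 2^{-|S|}+\Delta/2$, so $2^{-|S|}\geqslant \Delta/2$, giving $|S|\leqslant 1+\log_2\frac{1}{\Delta}$, as required. There is no real obstacle here: the argument is a one-line decomposition plus union bound, and the only nontrivial idea is to split the event $\{f=1\}$ along the unanimity coordinate of $S$ so that the small-ignorability bound can be summed and the remainder controlled by the uniform measure of a cylinder.
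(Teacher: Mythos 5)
Your proof is correct and follows essentially the same route as the paper: the same decomposition of $\Pr[f(x)=1]$ into the cylinder event $x_{_S}=\bar{1}$ plus a union bound over $\{x_i=0\}$ for $i\in S$, each term bounded by $2^{-|S|}$ and $\tfrac{\Delta}{2}$ respectively. Nothing further to add.
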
%
\item $\JUNTA{f}=\JUNTA{g}\in\OLIG$\\
    Both $\JUNTA{f}$ and $\JUNTA{g}$ depend on a small number of voters so the inconsistency index has large
    granularity
\begin{lemma}\label{Lemma_mAND_WhasGranularity}
Let \mbox{$f^1,\ldots,f^m:\{0,1\}^n\rightarrow\{0,1\}$} be $m$
voting functions that depend only on the votes of the members of
$J$. Then \mbox{there exists an integer $C$ s.t.
$\ICtilde(f,g)=C\cdot 2^{-m|J|}$}.
\end{lemma}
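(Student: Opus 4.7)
The approach is to unfold the definition
\[\ICtilde(f^1,\ldots,f^m)=\min_h IC(f^1,\ldots,f^m,h)\]
and show that an optimal $h$ can always be taken to depend only on the junta coordinates of $X^{m+1}$. Sampling $X$ uniformly from $\mANDagenda^n$ amounts to sampling the premise columns $X^1,\ldots,X^m\in\{0,1\}^n$ independently and uniformly, with $X^{m+1}$ determined entry-wise by $X^{m+1}_i=\bigwedge_{j=1}^m X^j_i$. Using the hypothesis that each $f^j$ depends only on $J$, we may write
\[IC(f^1,\ldots,f^m,h)=\Pr\!\left[h(X^{m+1})\neq\bigwedge_{j=1}^m f^j(X^j_J)\right],\]
where the right-hand side, aside from the $h$-term, depends only on the $m|J|$ bits $(X^1_J,\ldots,X^m_J)$.

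The main step is to argue that any $h$ may be replaced, without increasing $IC$, by a function of $X^{m+1}_J$ alone. For a fixed $x^{m+1}\in\{0,1\}^n$, the contribution of the event $\{X^{m+1}=x^{m+1}\}$ to $IC$ depends on the bit $h(x^{m+1})$ and on the conditional probability $p(x^{m+1}):=\Pr\bigl[\bigwedge_j f^j(X^j_J)=1\bigm| X^{m+1}=x^{m+1}\bigr]$. Thus the pointwise-optimal choice of $h(x^{m+1})$ equals $1$ when $p(x^{m+1})\geqslant\tfrac{1}{2}$ and $0$ otherwise, with ties broken arbitrarily. I would then show that $p(x^{m+1})$ depends only on $x^{m+1}_J$: conditioning on $X^{m+1}=x^{m+1}$ factorizes across voters, and for each $i$ the tuple $(X^1_i,\ldots,X^m_i)$ is distributed uniformly over $\{y\in\{0,1\}^m:\bigwedge_j y_j=x^{m+1}_i\}$, independently across $i$. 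Consequently the conditional law of $(X^1_J,\ldots,X^m_J)$ depends only on $(x^{m+1}_i)_{i\in J}=x^{m+1}_J$, and so does $p$. Breaking ties consistently produces an optimal $\tilde h$ that factors through $X^{m+1}_J$.

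Given such $\tilde h$, the inconsistency event $\{\tilde h(X^{m+1}_J)\neq\bigwedge_{j=1}^m f^j(X^j_J)\}$ is a Boolean function of the $m|J|$ bits $(X^1_J,\ldots,X^m_J)$ alone (noting that $X^{m+1}_J$ is itself determined by those bits), and these bits are uniformly distributed on $\{0,1\}^{m|J|}$. Therefore $\ICtilde(f^1,\ldots,f^m)$ equals the counting measure of a subset of $\{0,1\}^{m|J|}$ divided by $2^{m|J|}$, hence an integer multiple of $2^{-m|J|}$. The main obstacle to verify carefully is the conditional-independence step establishing that $p(x^{m+1})$ depends only on $x^{m+1}_J$; once that is in hand the rest is routine bookkeeping.
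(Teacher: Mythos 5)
Your proposal is correct and follows essentially the same route as the paper: condition on the junta coordinates of the premise columns and observe that the inconsistency probability becomes a count of configurations in $\{0,1\}^{m|J|}$ divided by $2^{m|J|}$. If anything, your write-up is more careful than the paper's, which passes directly to the counting formula without justifying that the minimizing $h$ may be taken to factor through $X^{m+1}_J$ --- precisely the conditional-independence step you flag as the one needing verification, and which is what makes the conditional inconsistency probabilities $0/1$-valued so that the sum is an integer multiple of $2^{-m|J|}$.
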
%
    So if we show that
    \[ \epsilon+\sum\limits_{j=1}^md(f^j,\JUNTA{f^j})<2^{-m|J|}\]
    then we get that $\ICtilde(\JUNTA{f},\JUNTA{g})=0$ and based on lemma \ref{Lemma_mAND_WzeroIsUnanimity}
    we get that there exists an oligarchy  of a sub-coalition of $J$
    $g$ s.t. $\forall j~\JUNTA{f^j}=g$.
    Notice that $\JUNTA{f^j}$ cannot be the constant zero function for any $j$ since
    \mbox{$d(f^j,0)\geqslant \Delta$} and
    \mbox{$d(f,\JUNTA{f^j})\leqslant 4n^2\Delta^{1-m} \leqslant 2^{-m^2-1}m^{-1}\Delta^{m^2}<\Delta
    $}.
\[\begin{array}{l}
\epsilon+\sum\limits_{j=1}^md(f^j,\JUNTA{f^j})
\leqslant \epsilon\left(1+4mn^2\Delta^{1-m}\right)
\leqslant 8mn^2\epsilon \Delta^{1-m}
< 2^{-m^2}\Delta^{m^2}
\leqslant 2^{-m|J|}
\end{array}.\]

\end{itemize}
\end{proof-of-lemma}%
\subsection{Lemma \ref{Lemma_mAND_BoundPI}}%
\begin{lemmaUN}~\\
Let \mbox{$f^1,\ldots,f^m:\{0,1\}^n\rightarrow\{0,1\}$} be $m$
voting functions , \mbox{$i\in\{1,\ldots,n\}$} be a voter, and
\mbox{$k,l\in\{1,\ldots,m\}$} two different issues. Then
\mbox{$\OSI{i}{f^k}\cdot\Inf{i}{f^l}\leqslant
        4   \left(\prod\limits_{j\neq k,l}d(f^j,0)\right)^{-1}
        \ICtilde(f^1,\ldots,f^m)$}
\end{lemmaUN}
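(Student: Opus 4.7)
The plan is to construct, for the given voter $i$ and distinct indices $k \neq l$, an explicit family of profile-pairs on which no choice of conclusion aggregator $h$ can avoid inconsistency on at least one member. Given a consistent profile $X \in \Agenda^n$, let $X'$ denote the profile obtained from $X$ by flipping the single entry $(i,l)$, i.e., voter $i$'s vote on premise $l$. The key geometric observation is that whenever $X_i^k = 0$, the conclusion bit for voter $i$, namely $X_i^{m+1} = \bigwedge_j X_i^j$, is forced to $0$ regardless of $X_i^l$; hence on this event $X^{m+1} = X'^{m+1}$ as bit vectors, and therefore $h(X^{m+1}) = h(X'^{m+1})$ for \emph{every} choice of conclusion aggregator $h$.

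Next, define the bad event $E$ as the intersection of: (a) $X_i^k = 0$; (b) $f^j(X^j) = 1$ for every $j \neq l$; and (c) $f^l(X^l) \neq f^l(X'^l)$. On $E$, condition (b) simplifies the aggregated conclusions so that $\bigwedge_j f^j(X^j) = f^l(X^l)$ and $\bigwedge_j f^j(X'^j) = f^l(X'^l)$; by (c) these values disagree, yet by the previous paragraph $h(X^{m+1}) = h(X'^{m+1})$. Hence the mechanism $\left<f^1,\ldots,f^m,h\right>$ is inconsistent on at least one of $X, X'$ whenever $E$ holds, and this conclusion is \emph{uniform} over all $h$.

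It remains to compute $\Pr[E]$. When $X$ is sampled uniformly from $\Agenda^n$, the columns $X^1, \ldots, X^m$ are mutually independent and each uniform on $\{0,1\}^n$ (the conclusion column is determined as the coordinate-wise AND). Consequently $\Pr[E]$ factorizes across columns: the contribution from $X^k$ is $\Pr[X_i^k = 0] \cdot \Pr[f^k(X^k) = 1 \mid X_i^k = 0] = \tfrac{1}{2} \OSI{i}{f^k}$; each $j \neq k, l$ contributes $\Pr[f^j(X^j) = 1] = d(f^j, 0)$; and the contribution from $X^l$ is $\Pr[f^l(X^l) \neq f^l(X'^l)] = \Inf{i}{f^l}$. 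Thus
\[ \Pr[E] = \tfrac{1}{2} \OSI{i}{f^k} \cdot \Inf{i}{f^l} \cdot \prod_{j \neq k, l} d(f^j, 0). \]
Since $X \mapsto X'$ is a measure-preserving involution on $\Agenda^n$, a union bound over the two members of each bad pair yields $IC(\left<f^1,\ldots,f^m,h\right>) \geq \tfrac{1}{2} \Pr[E]$ for every $h$; taking the minimum over $h$ gives the same lower bound on $\ICtilde(f^1,\ldots,f^m)$, and rearranging delivers the claimed inequality. The main subtlety is to pair the roles of $k$ and $l$ correctly: one flips the $l$-th premise (producing the influence factor $\Inf{i}{f^l}$) while conditioning on $X_i^k = 0$ (producing the ignorability factor $\OSI{i}{f^k}$), which is exactly what guarantees that the conclusion coordinate stays stable under the flip so that $h$'s response drops out of the analysis.
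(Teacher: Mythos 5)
Your proof is correct and follows essentially the same route as the paper's: both arguments flip voter $i$'s vote on premise $l$, observe that when $X_i^k=0$ the conclusion column (and hence $h$'s output) is unchanged by the flip, factorize the probability of the bad event over the independent premise columns to obtain $\tfrac12\OSI{i}{f^k}\cdot\Inf{i}{f^l}\cdot\prod_{j\neq k,l}d(f^j,0)$, and lose the final factor of $2$ by charging each bad pair to the inconsistency probability. The paper merely presents this as a single chain of (in)equalities conditioned on the event, whereas you isolate the bad event $E$ explicitly, but the content is identical.
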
%
\begin{proof-of-lemma}{}~\\
Let \mbox{$h:\{0,1\}^n\rightarrow\{0,1\}$} be a voting function s.t.
\mbox{$IC(f^1,\ldots,f^M,h)=\ICtilde(f^1,\ldots,f^m)$}.\\
(We use the notation $x \xor e_i$ for adding $e_i$ (the \ith elementary
vector) which is equivalent to flipping the \ith bit
$0\leftrightarrow1$)\\
$\begin{array}{rl}%
\ICtilde(f^1,\ldots,f^m)
    &= IC(f^1,\ldots,f^M,h)\\
    &= \Pr\left[\bigwedge\limits_{j=1}^mf^j(x^j)
        \neq h\left(\bigwedge\limits_{j=1}^mx^j\right)\right]\\
    &= \Pr[\left(x^k\right)_i=0]\prod\limits_{j\neq k,l}
            \Pr[f^j(x^j)=1]\cdot
        \Pr\left[\!\!\!\begin{array}{c|l}
            \bigwedge\limits_{j=1}^mf^j(x^j)\neq h\left(\bigwedge\limits_{j=1}^mx^j\right)
            &
            \begin{array}{l}\left(x^k\right)_i=0 \\ \forall j\neq k,l~:f^j(x^j)=1\end{array}
        \end{array}\!\!\!\!\!\right]\\
    &= \half\prod\limits_{j\neq k,l} d(f^j,0)\cdot
        \Pr\left[\begin{array}{c|l}
            f^k(x^k)\wedge f^l(x^l)\neq h\left(\bigwedge\limits_{j=1}^mx^j\right)
            &
            \begin{array}{l}\left(x^k\right)_i=0 \\ \forall j\neq k,l:f^j(x^j)=1\end{array}
        \end{array}\right]\\
    &\geqslant \half\prod\limits_{j\neq k,l} d(f^j,0)\cdot \half\cdot
        \Pr\left[\!\!\!\!\!\begin{array}{c@{\!}|@{\!}l}
            \begin{array}{c}
            f^k(x^k)\wedge f^l(x^l)\neq h\left(\bigwedge\limits_{j\neq l}x^j\wedge x^l\right)\\
                \bigvee\\
            f^k(x^k)\wedge f^l(x^l\xor e_i)\neq h\left(\bigwedge\limits_{j\neq l}x^j\wedge (x^l\xor e_i)\right)\\
            \end{array}
            &
            \begin{array}{l}\left(x^k\right)_i=0 \\ \forall j\neq k,l:f^j(x^j)=1
            \end{array}
        \end{array}\!\!\!\!\!\right]\\
    &\geqslant\frac{1}{4}
    \prod\limits_{j\neq k,l} d(f^j,0)\cdot
        \Pr\left[\begin{array}{c|l}
            f^k(x^k)\wedge f^l(x^l)\neq f^k(x^k)\wedge f^l(x^l\xor e_i)
            &
            \begin{array}{l}\left(x^k\right)_i=0 \\ \forall j\neq k,l:f^j(x^j)=1\end{array}
        \end{array}\right]\\
    &\geqslant\frac{1}{4}
    \prod\limits_{j\neq k,l} d(f^j,0)\cdot
        \Pr\left[\begin{array}{c|l}
            \begin{array}{c}f^k(x^k)=1 \\ f^l(x^l)\neq f^l(x^l\xor e_i) \end{array}
            &
            \begin{array}{l}\left(x^k\right)_i=0 \\ \forall j\neq k,l:f^j(x^j)=1\end{array}
        \end{array}\right]\\
    &=\frac{1}{4}
    \prod\limits_{j\neq k,l} d(f^j,0)\cdot
        \OSI{i}{f^k}\Inf{i}{f^l}
\end{array}$
\end{proof-of-lemma}%
\subsection{Lemma \ref{Lemma_JuntaIsClose}}%
\begin{lemmaUN}~\\
Let \mbox{$f:\{0,1\}^n\rightarrow\{0,1\}$} be a binary function and
\mbox{$J\subseteq\{1,\ldots,n\}$} a coalition. Then
    \mbox{$\DISTx{f}{\JUNTA{f}}\leqslant~\sum\limits_{i\notin J}\Inf{i}{f}$}.
\end{lemmaUN}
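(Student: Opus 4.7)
The plan is to reduce the inequality to a per-subcube claim on the hypercube and close it with the standard Poincar\'e (edge-isoperimetric) inequality. Since $\JUNTA{f}(x)$ depends only on $x_{_J}$, conditioning on $x_{_J}$ gives
\[
d(f,\JUNTA{f}) \;=\; \Exp_{x_{_J}}\!\Bigl[\Pr\bigl[f(x)\neq \JUNTA{f}(x)\mid x_{_J}\bigr]\Bigr].
\]
The definition of $\JUNTA{f}$ as the subcube majority makes the inner probability equal to the minority mass $\min\bigl(\Pr[f=0\mid x_{_J}],\Pr[f=1\mid x_{_J}]\bigr)$ of the restriction $f_{x_{_J}}:\{0,1\}^{[n]\setminus J}\to\{0,1\}$. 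Dually, for every $i\notin J$ the Banzhaf-style influence splits cleanly across subcubes: $\Inf{i}{f}=\Exp_{x_{_J}}[\Inf{i}{f_{x_{_J}}}]$. By linearity of expectation, the lemma therefore reduces to the one-cube claim
\[
\min\bigl(\Pr[g=0],\Pr[g=1]\bigr) \;\leqslant\; \sum_{i=1}^{k}\Inf{i}{g}
\quad\text{for every }g:\{0,1\}^k\to\{0,1\}.
\]

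Next I would verify this one-cube inequality, which is a standard consequence of the Poincar\'e inequality on the Boolean cube (equivalently, the edge-isoperimetric inequality on the hypercube). The quickest route is via Parseval: a short Fourier computation gives $\sum_i\Inf{i}{g}\geqslant 4p(1-p)$ with $p=\Pr[g=1]$, and if $p\leqslant 1/2$ then $4p(1-p)\geqslant 2p=2\min(p,1-p)$, in fact stronger than what is needed. A purely combinatorial proof via the hypercube edge boundary ($|\partial S|\geqslant |S|$ whenever $|S|\leqslant 2^{k-1}$) yields the same bound. Either approach is textbook material (see, e.g., \cite{Wolf2008,O'Donnell2008}).

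Finally, averaging the one-cube bound back over $x_{_J}$ gives
\[
d(f,\JUNTA{f}) \;\leqslant\; \Exp_{x_{_J}}\!\Bigl[\sum_{i\notin J}\Inf{i}{f_{x_{_J}}}\Bigr] \;=\; \sum_{i\notin J}\Inf{i}{f},
\]
which is exactly the stated inequality. There is no real obstacle in this proof: the only conceptual step is to notice that $\JUNTA{f}$ is the subcube-wise majority, so that $d(f,\JUNTA{f})$ literally averages the subcube minority masses; once this observation is made, a textbook hypercube Poincar\'e estimate closes the argument, and the bound in the lemma is not even tight (a factor of $\tfrac12$ could be saved).
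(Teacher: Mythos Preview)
Your proposal is correct and follows essentially the same route as the paper: both condition on $x_{_J}$, identify $d(f,\JUNTA{f})$ as the average over subcubes of the minority mass of the restriction, observe that $\Inf{i}{f}=\Exp_{x_{_J}}[\Inf{i}{f_{x_{_J}}}]$ for $i\notin J$, and then close with the Boolean-cube isoperimetric inequality $\sum_i\Inf{i}{g}\geqslant\min(\Exp[g],1-\Exp[g])$ (which the paper quotes from \cite{Bollob'as1986}). Your Fourier derivation of this last step and the observation about the spare factor of $\tfrac12$ are extras, but the skeleton of the argument is the same.
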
%
\begin{proof-of-lemma}{}~\\
We define for a vector $c\in\{0,1\}^J$ the function
    \mbox{$f^J_c:\{0,1\}^n\rightarrow\{0,1\}$}
by $f^J_c(x)=f(y)$ where $y_{_J}=c$ and $y_{_{-J}}=x_{_{-J}}$.
Assume that $c_i$ is sampled according to $p$. Then
\mbox{$\JUNTA{f}(x_{_J},x_{_{-J}}) = \left\{\begin{array}{ll}
                            0 & \Exp_c[f^J_c(x)]<\half\\
                            1 & \Exp_c[f^J_c(x)]\geqslant\half
                        \end{array}\right.$}
~\\
We will use the following isoperimetric inequality on the boolean
cube:
\begin{propositionUN}[The Isoperimetric Inequality for The Boolean
Cube \cite{Bollob'as1986}]~\\
Let \mbox{$f:\{0,1\}^n\rightarrow\{0,1\}$} be a boolean function.
Then \mbox{$\sum\limits_i\Inf{i}{f}~\geqslant~\min(\Exp[f],1-\Exp[f])$}.%
\end{propositionUN}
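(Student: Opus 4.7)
The plan is to reinterpret the total influence as a normalized edge-boundary count on the Hamming cube and then prove the classical edge-isoperimetric inequality by induction on $n$. Setting $S = f^{-1}(1) \subseteq \{0,1\}^n$ with $s = |S|$, so that $\Exp[f] = s/2^n$, a direct calculation shows $\Inf{i}{f} = |\partial_i S|/2^{n-1}$, where $\partial_i S$ is the set of cube-edges in direction $i$ with exactly one endpoint in $S$ (each such edge contributes both endpoints to the set $\{x : f(x) \neq f(x \xor e_i)\}$). Summing over $i$ gives $\sum_i \Inf{i}{f} = |\partial S|/2^{n-1}$, so the proposition reduces to $|\partial S| \geq \min(s,\,2^n-s)/2$. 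In fact I would prove the strictly stronger bound $|\partial S| \geq \min(s,\,2^n-s)$, which has the added benefit of being tight (achieved by sub-cubes).

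For this stronger bound I would induct on $n$. The base case $n=1$ is a two-element check: either $f$ is constant, giving both sides equal to $0$, or it is not, giving LHS $=1$ and RHS $=1$. For the inductive step, split $\{0,1\}^n$ along the last coordinate into two sub-cubes $C_0, C_1$, set $S_j = S \cap C_j$ with $s_j = |S_j|$, and partition $\partial S$ into the internal boundary of $S_0$ in $C_0$, the internal boundary of $S_1$ in $C_1$, and the cross edges (of which there are exactly $|S_0 \triangle S_1|$, when $S_0$ and $S_1$ are identified as subsets of $\{0,1\}^{n-1}$). Assume WLOG $s \leq 2^{n-1}$. If both $s_0, s_1 \leq 2^{n-2}$, the induction hypothesis applied to each half gives $|\partial S| \geq s_0 + s_1 = s$ directly. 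Otherwise (WLOG $s_0 > 2^{n-2} \geq s_1$; the case $s_0, s_1 > 2^{n-2}$ is impossible as it would force $s > 2^{n-1}$), induction yields an internal boundary of size at least $2^{n-1} - s_0$ in $C_0$ and at least $s_1$ in $C_1$, while the cross edges contribute at least $s_0 - s_1$, summing to at least $(2^{n-1}-s_0) + s_1 + (s_0-s_1) = 2^{n-1} \geq s$, as required.

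The main obstacle is really only the second case of this analysis: there the inductive bound in the more-than-half-full sub-cube is smaller than $s_0$, and the shortfall must be made up by the cross edges, so one must be careful that the assumption $s \leq 2^{n-1}$ bounds $s_0 - s_1$ correctly. As a cleaner alternative, the proposition follows in one line from the Poincar\'e inequality $\sum_i \Inf{i}{f} \geq 4\,\Exp[f](1 - \Exp[f])$ (itself an immediate consequence of the Fourier identity $\Inf{i}{f} = 4 \sum_{T \ni i}\fhat(T)^2$ for $\{0,1\}$-valued $f$, together with Parseval), since $4p(1-p) \geq 2\min(p, 1-p) \geq \min(p, 1-p)$; this Fourier route actually yields a bound twice as strong as needed and bypasses the case analysis entirely.
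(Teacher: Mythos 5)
The paper does not actually prove this proposition: it is invoked inside the proof of Lemma \ref{Lemma_JuntaIsClose} as a known result and cited to Bollob\'as, so there is no in-paper argument to compare against. Your proof is correct and self-contained. The translation $\sum_i \Inf{i}{f} = |\partial S|/2^{n-1}$ with $S=f^{-1}(1)$ is right (each boundary edge in direction $i$ contributes both of its endpoints to $\{x: f(x)\neq f(x\xor e_i)\}$), so the statement reduces to $|\partial S|\geqslant \min(s,2^n-s)/2$, and your induction delivers the stronger $|\partial S|\geqslant\min(s,2^n-s)$. The case analysis is sound: complementation justifies $s\leqslant 2^{n-1}$, the two halves cannot both exceed $2^{n-2}$, and in the unbalanced case the sum $(2^{n-1}-s_0)+s_1+(s_0-s_1)=2^{n-1}\geqslant s$ closes the gap exactly as you say; the cross-edge count $|S_0\triangle S_1|\geqslant s_0-s_1$ is the one place care is needed and you handle it. (A minor quibble: the bound $|\partial S|\geqslant s$ is tight only for codimension-one subcubes; a $k$-dimensional subcube has edge boundary $2^k(n-k)$, which strictly exceeds $2^k$ for $k\leqslant n-2$.) Your Fourier alternative is also correct --- for $\{0,1\}$-valued $f$ one has $\sum_i\Inf{i}{f}=4\sum_{T\neq\emptyset}|T|\fhat(T)^2\geqslant 4\mathrm{Var}[f]=4p(1-p)\geqslant 2\min(p,1-p)$ with $p=\Exp[f]$ --- and it is arguably the more natural companion to this paper, which already runs Fourier arguments for the xor agendas. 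Either route is a valid replacement for the external citation.
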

$\begin{array}{rlr}%
\mbox{For any $c\in\{0,1\}^J$~:~} &
    \begin{array}[t]{rl}
    \sum\limits_{i\notin J}\Inf{i}{f^J_c}
    &= \sum\limits_i\Inf{i}{f^J_c}\\
    &\geqslant \min(\Exp[f^J_c],1-\Exp[f^J_c])
    \end{array}\\
\mbox{For $i\notin J$:} &
    \begin{array}[t]{rl}
    \Inf{i}{f}
        & = \Pr[f(x)\neq f(x \xor e_i)]\\
        & = \Exp_c\left[\Pr[f^J_c(x)\neq f^J_c(x \xor e_i)]\right]\\
        & = \Exp_c\left[\Inf{i}{f^J_c}\right]
    \end{array}\\
~\\
\Exp_c\left[\sum\limits_{i\notin J}\Inf{i}{f^J_c}\right]
    &= \sum\limits_{i\notin J}\Exp_c\left[\Inf{i}{f^J_c}\right]\\
    &= \sum\limits_{i\notin J}\Inf{i}{f}\\
\Exp_c\left[\sum\limits_{i\notin J}\Inf{i}{f^J_c}\right]
    &\geqslant \Exp_c\left[\min\left(\Exp\left[f^J_c\right],1-\Exp\left[f^J_c\right]\right)\right]\\
    &= Pr[\JUNTA{f}(x)\neq f(x)]\\
    &= d(\JUNTA{f},f)\\
~\\
\DISTx{f}{\JUNTA{f}}
    & \leqslant \sum\limits_{i\notin J}\Inf{i}{f}
&\qedhere\end{array}$
\end{proof-of-lemma}%
\subsection{Lemma \ref{Lemma_JuntaIsSmall}}%
\begin{lemmaUN}~\\
Let \mbox{$f:\{0,1\}^n\rightarrow\{0,1\}$} be a voting function. If
\mbox{$d(f,0)\geqslant\Delta$} then at most
\mbox{$\left(1+\log_2\frac{1}{\Delta}\right)$} voters have the
property:~
    \mbox{$\OSI{i}{f}\leqslant\frac{\Delta}{n}$}.%
\end{lemmaUN}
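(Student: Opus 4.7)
The plan is to prove this by a union-bound argument combined with the observation that the event $\{f(x)=1 \wedge x_J = \bar{1}\}$ has probability at most $2^{-|J|}$, where $J$ is the coalition of voters with small ignorability and $\bar{1}$ is the all-ones pattern on coordinates in $J$.

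First I would let $J = \{i : \OSI{i}{f} \leqslant \Delta/n\}$ be the set whose size we wish to bound. The key identity is $\Pr[f(x)=1 \wedge x_i = 0] = \half \OSI{i}{f}$, so each $i \in J$ contributes at most $\Delta/(2n)$ to this quantity. A union bound over $J$ then yields
\[
\Pr[f(x)=1 \wedge \exists i \in J\colon x_i = 0] \;\leqslant\; \sum_{i \in J} \half \OSI{i}{f} \;\leqslant\; \frac{|J|\Delta}{2n} \;\leqslant\; \frac{\Delta}{2},
\]
using the trivial bound $|J| \leqslant n$.

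Next I would split $\Pr[f(x)=1] = d(f,0)$ according to whether $x_J$ is the all-ones vector on $J$ or not. The ``not'' part is exactly the event controlled above, so it is at most $\Delta/2$; the ``all-ones'' part is trivially at most $\Pr[x_J = \bar{1}] = 2^{-|J|}$. Combining with the hypothesis $d(f,0) \geqslant \Delta$ gives
\[
\Delta \;\leqslant\; 2^{-|J|} + \frac{\Delta}{2},
\]
which rearranges to $2^{-|J|} \geqslant \Delta/2$ and hence $|J| \leqslant 1 + \log_2(1/\Delta)$, as required.

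There is no substantial obstacle here; the proof is a short decomposition argument. The only thing to be careful about is making sure that $|J| \leqslant n$ so that the union-bound estimate $|J|\Delta/(2n) \leqslant \Delta/2$ actually holds, and that the probability $\Pr[x_J = \bar{1}] = 2^{-|J|}$ is computed under the uniform distribution assumed throughout the paper.
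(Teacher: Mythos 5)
Your proof is correct and is essentially the same argument as the paper's: both decompose $\Pr[f(x)=1]$ into the event $x_{_J}=\bar{1}$ (bounded by $2^{-|J|}$) plus a union bound over $i\in J$ of $\Pr[f(x)=1\wedge x_i=0]=\half\OSI{i}{f}\leqslant\frac{\Delta}{2n}$, and then rearrange $\Delta\leqslant 2^{-|J|}+\Delta/2$. No issues.
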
%
\begin{proof-of-lemma}{}~\\
Define~
    \mbox{$J=\{i~|~\OSI{i}{f}\leqslant \frac{\Delta}{n}\}$}. Then:~~
\[\begin{array}[t]{rlr}%
Pr[f(x)=1]
    &\leqslant  Pr[x_{_J}=\bar{1}] + \sum\limits_{i\in J} Pr[f(x)=1~|~x_i\neq1]\cdot Pr[x_i\neq1]\\
    &\leqslant  2^{-|J|} + \sum\limits_{i\in J} \frac{1}{2}\cdot \OSI{i}{f}\\
    &\leqslant  2^{-|J|}+\frac{|J|}{2}\frac{\Delta}{n}\\
    &\leqslant  2^{-|J|}+\frac{\Delta}{2}\\
\Delta
    &\leqslant  2^{-|J|}+\frac{\Delta}{2}\\
|J|
    &\leqslant 1+\log_2\frac{1}{\Delta}
&\qedhere\end{array}\]
\end{proof-of-lemma}%
\subsection{Lemma \ref{Lemma_mAND_WhasGranularity}}%
\begin{lemmaUN}~\\
Let \mbox{$f^1,\ldots,f^m:\{0,1\}^n\rightarrow\{0,1\}$} be $m$
voting functions that depend only on the votes of the members of
$J$. Then \mbox{there exists an integer $C$ s.t.
$\ICtilde(f,g)=C\cdot 2^{-m|J|}$}.
\end{lemmaUN}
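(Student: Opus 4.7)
The plan is to expand the definition of $\ICtilde(f^1,\ldots,f^m)=\min_h IC^{\mANDagenda}(f^1,\ldots,f^m,h)$ and exploit the fact that once the $f^j$ are $J$-juntas, the quantity being tested against $h$ also becomes a $J$-junta, while the argument fed into $h$ splits nicely into a $J$-part and a $(-J)$-part. Concretely, sampling the profile means sampling $x^1,\ldots,x^m\in\{0,1\}^n$ uniformly and independently; write each $x^j=(x^j_J,x^j_{-J})$. Then $a(x_J)\define\bigwedge_{j=1}^m f^j(x^j)$ depends only on $x_J\in\{0,1\}^{m|J|}$, and $\bigwedge_{j=1}^m x^j=(z_J(x_J),z_{-J}(x_{-J}))$ where the two coordinate blocks are independent.

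The next step is to observe that the optimal $h$ can be described cleanly. For each pair $(\zeta,\eta)\in\{0,1\}^{|J|}\times\{0,1\}^{n-|J|}$, the contribution of $h(\zeta,\eta)=b$ to the inconsistency is $\Pr[z_{-J}=\eta]\cdot\Pr[z_J(x_J)=\zeta,\ a(x_J)\neq b]$. The factor $\Pr[z_{-J}=\eta]$ does not depend on $b$, so the minimizing choice of $b$ depends only on $\zeta$. Summing over $\eta$ collapses the $\Pr[z_{-J}=\eta]$ factors to $1$, and we obtain
\[
\ICtilde(f^1,\ldots,f^m)=\sum_{\zeta\in\{0,1\}^{|J|}}\min\bigl(P_1(\zeta),P_0(\zeta)\bigr),
\]
where $P_b(\zeta)=\Pr_{x_J}[z_J(x_J)=\zeta\text{ and }a(x_J)=b]$.

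The final observation is the granularity statement itself. Since $x_J$ is uniform on the finite set $\{0,1\}^{m|J|}$, each probability $P_b(\zeta)$ is of the form $N_{b,\zeta}\cdot 2^{-m|J|}$ for some nonnegative integer $N_{b,\zeta}$. Hence $\min(P_1(\zeta),P_0(\zeta))=\min(N_{1,\zeta},N_{0,\zeta})\cdot 2^{-m|J|}$, and summing over $\zeta$ yields $\ICtilde=C\cdot 2^{-m|J|}$ with $C=\sum_\zeta\min(N_{1,\zeta},N_{0,\zeta})\in\N$, as required.

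There is no real obstacle here; the only thing to verify carefully is the factorization of the sample space that lets the $x_{-J}$ randomness drop out of the optimization for $h$. Once that decoupling is made explicit, the result reduces to the trivial fact that probabilities of events on the finite uniform space $\{0,1\}^{m|J|}$ lie in $2^{-m|J|}\Z$.
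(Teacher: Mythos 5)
Your proof is correct and follows essentially the same route as the paper: decompose the inconsistency probability according to the $J$-coordinates of the premise columns and observe that everything relevant lives on the uniform space $\{0,1\}^{m|J|}$, where every probability is an integer multiple of $2^{-m|J|}$. If anything, your write-up is the more careful one: the paper's displayed chain of equalities silently assumes that the conditional inconsistency probabilities given the $J$-coordinates are $0$ or $1$ (equivalently, that the minimizing $h$ may be taken to depend only on the $J$-part of its input), which is exactly the point you justify by noting that the factor $\Pr[z_{-J}=\eta]$ does not depend on the bit being optimized and hence sums out to $1$.
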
%
\begin{proof-of-lemma}{}~\\
Let \mbox{$h:\{0,1\}^n\rightarrow\{0,1\}$} be a voting function s.t.
\mbox{$IC(f^1,\ldots,f^M,h)=\ICtilde(f^1,\ldots,f^m)$}.
Then,\\
$\begin{array}{rlr}%
\ICtilde(f^1,\ldots,f^m)
    &= \Pr\left[\bigwedge\limits_{j=1}^mf^j(x^j)\neq h\left(\bigwedge\limits_{j=1}^mx^j\right)\right]\\
    &= \sum\limits_{c^1,\dots,c^m\in\{0,1\}^J}
            \prod\limits_{j=1}^m\Pr\left[\left(x^j\right)_{_J}=c^j\right]~\cdot
           \Pr\left[\bigwedge\limits_{j=1}^mf^j(x^j)\neq h\left(\bigwedge\limits_{j=1}^mx^j\right)
               ~|\forall j~\left(x^j\right)_{_J}=c^j\right]\\
    &= 2^{-m|J|}\#\left\{c^1,\dots,c^m\in\{0,1\}^n~|~
            \left(\bigwedge\limits_{j=1}^mf^j(x^j)\neq h\left(\bigwedge\limits_{j=1}^mx^j\right)\right)
                \bigwedge
            \left(\forall j~\left(x^j\right)_{_J}=\bar{0}\right)
        \right\}
\qedhere\end{array}$
\end{proof-of-lemma}%
\subsection{Lemma \ref{Lemma_mAND_WzeroIsUnanimity}}%
\begin{lemmaUN}~\\
Let \mbox{$f^1,\ldots,f^m,h:\{0,1\}^n\rightarrow\{0,1\}$} be
\mbox{$m+1$} voting functions satisfying
\mbox{$IC(\left<f^1,\ldots,f^m,h\right>)=0$}. Then either there
exists an issue $j$ s.t. \mbox{$f^j=h\equiv0$} or
\mbox{$f^1=f^2=\ldots=f^m=h\in\OLIG$}.
\end{lemmaUN}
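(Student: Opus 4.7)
The starting point is the observation that $IC(\langle f^1,\ldots,f^m,h\rangle)=0$ is equivalent to the functional equation
\[
h\!\left(\bigwedge_{j=1}^m x^j\right) \;=\; \bigwedge_{j=1}^m f^j(x^j)
\]
holding for all $x^1,\ldots,x^m\in\{0,1\}^n$, because any such tuple together with $x^{m+1}=\bigwedge_j x^j$ forms a consistent profile, and conversely consistent profiles have exactly this form. My plan is to split on whether some $f^j$ is identically zero, handle the easy case, then reduce the remaining case to a two-variable multiplicativity identity, and finally read off an oligarchy from that identity.

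If $f^k\equiv 0$ for some $k$, the right-hand side is $0$ for all choices. Setting $x^k=y$ and $x^l=\bar 1$ for $l\neq k$ makes $\bigwedge_j x^j=y$ arbitrary, so $h(y)=0$ for every $y$, i.e.\ $h\equiv 0$. This gives $f^k=h\equiv 0$, which is the first alternative of the lemma.

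Otherwise no $f^j$ is identically zero, and I will first show $f^j(\bar 1)=1$ for all $j$. If $f^k(\bar 1)=0$, fixing $x^k=\bar 1$ collapses the right-hand side to $0$ and, varying the remaining $x^l$, forces $h\equiv 0$; but then $\bigwedge_j f^j(x^j)\equiv 0$ for every choice of the $x^j$, so some $f^j$ must be identically zero, contradicting the case hypothesis. With $f^j(\bar 1)=1$ for all $j$, setting $x^l=\bar 1$ for $l\neq j$ in the functional equation yields $h(x^j)=f^j(x^j)$, so $f^1=\cdots=f^m=h$; denote this common function by $f$. Substituting $m-2$ of the arguments equal to $\bar 1$ reduces the equation to the two-variable identity
\[
f(x\wedge y)=f(x)\wedge f(y)\qquad\forall\,x,y\in\{0,1\}^n,
\]
together with $f(\bar 1)=1$.

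The last step is to recognise such an $f$ as an oligarchy. Let $v_i\in\{0,1\}^n$ be the vector with a $0$ in position $i$ and $1$ elsewhere, and set $S=\{i:f(v_i)=0\}$. Any $x\in\{0,1\}^n$ satisfies $x=\bigwedge_{i:x_i=0} v_i$ (the empty meet being $\bar 1$), so iterating multiplicativity gives
\[
f(x)=\bigwedge_{i:x_i=0}f(v_i),
\]
which evaluates to $1$ exactly when no coordinate of $x$ in $S$ vanishes, i.e.\ when $\prod_{i\in S}x_i=1$. Hence $f$ is the oligarchy of $S$, completing the second alternative. The main conceptual step is the reduction to $f(x\wedge y)=f(x)\wedge f(y)$; once that and $f(\bar 1)=1$ are in hand, the oligarchy characterisation is routine, and the only real care is in the two ``collapse'' arguments that produce $h\equiv 0$ whenever any $f^j(\bar 1)=0$ or any $f^j\equiv 0$.
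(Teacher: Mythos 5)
Your proof is correct. The first two steps coincide with the paper's: you both show that if some $f^k\equiv 0$ (or some $f^k(\bar 1)=0$) then $h$ collapses to the constant zero function by fixing the other arguments at $\bar 1$, and you both derive $f^1=\cdots=f^m=h$ by the same substitution. Where you genuinely diverge is the final step, identifying the common function $f$ as an oligarchy. The paper reuses its quantitative Lemma \ref{Lemma_mAND_BoundPI} at $IC=0$: for every voter $i$ the product $\OSI{i}{f}\cdot\Inf{i}{f}$ vanishes, so every influential voter $i$ satisfies $x_i=0\Rightarrow f(x)=0$, and since $f$ depends only on its influential voters and $f(\bar 1)=1$ it is the oligarchy of that set. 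You instead reduce the functional equation to the two-variable multiplicativity $f(x\wedge y)=f(x)\wedge f(y)$ with $f(\bar 1)=1$ and read off the oligarchy directly by decomposing $x=\bigwedge_{i:x_i=0}v_i$; this is a clean, self-contained lattice-homomorphism argument that avoids the influence machinery entirely. The paper's route is not more economical in isolation, but it gets the lemma essentially for free from a tool it must develop anyway for the approximate case; yours is the more elementary and portable derivation of the exact characterization. Both arguments implicitly require $m\geqslant 2$ (the lemma is false for $m=1$), which matches the hypothesis of Theorem \ref{Main_Theorem_mAND}.
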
%
\begin{proof-of-lemma}{}~\\
Assume that for issues $j$, $f^j$ is not the constant zero function.
\mbox{We will prove that
$f^1\!=\!f^2\!=\!\ldots\!=\!f^m\!=\!h\in\OLIG$} by proving the
following series of claims.
\begin{itemize}
\item For all issues $j$ $f^j(\bar{1})=1$\\
With no loss of generality, assume for contraction that
\mbox{$f^1(\bar{1})=0$}. Let \mbox{$x\in\{0,1\}^n$}. Then
    \[h(x)  =h\left(\bar{1}\wedge\left(\bigwedge\limits_{j=2}^mx\right)\right)
            =f^1\left(\bar{1}\right)\wedge\left(\bigwedge\limits_{j=2}^m f^j\left(x\right)\right)
            =0.
    \]
I.e. \mbox{$h\equiv 0$}. From that we can conclude that there exists
an issue $j$ s.t. \mbox{$f^j\equiv0$} and get a contradiction.
\item For all issues $j$ $f^j=h$\\
We will prove that $f^1=h$. The proof is similar for all $j$.\\
Let \mbox{$x\in\{0,1\}^n$}. Then
    \mbox{$h(x)=h\left(x\wedge\left(\bigwedge\limits_{j=2}^m\bar{1}\right)\right)
            =f^1\left(x\right)\wedge\left(\bigwedge\limits_{j=2}^m f^j\left(\bar{1}\right)\right)
            =f^1(x)$}

\item $f^1\in\OLIG$\\
Let \mbox{$J=\{i~|~\Inf{i}{f^1}\neq0\}$}. Then $f^1$ is a function
of \mbox{$\{x_i\}_{i\in J}$}. Based on lemma
\ref{Lemma_mAND_BoundPI}, for $i\in J$ $\OSI{i}{f^1}=0$ and hence
    $\left[x_i=0~\Rightarrow f^1(x)=0\right]$.
So we get that $f^1$ is the oligarchy of $J$. \qedhere
\end{itemize}
\end{proof-of-lemma}%
\section{Lemmas Proof - XOR agenda} \label{Appendix_LemmasProof_mXOR}
\label{Chapter Lemmas proof_mXOR}
\subsection{Theorem \ref{Main_Theorem_mXOR}}%
\begin{theoremUN}~\\
Let $m\geqslant 3$ and let the agenda be
    \mbox{$\Agenda=\left<A^1,\ldots,A^{m-1},\xor\limits_{j=1}^{m-1}A^j\right>$}.

For any \mbox{$\epsilon\!<\!\frac{1}{6}$} and any independent
aggregation mechanism $F$:
    If \mbox{$IC(F)\!\leqslant\!\epsilon$}, then there exists an
    aggregation mechanism $G$ that satisfies consistency and
    independence such that \mbox{$d(F,G)\!\leqslant\!m\epsilon$}.
\end{theoremUN}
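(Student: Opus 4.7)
The plan is to pass to the $\pm 1$ encoding of Boolean functions and use Fourier analysis of the issue-aggregating functions. Setting $F_j(x) = (-1)^{f^j(x)}$ for the $m-1$ premise-aggregating functions, $H(x) = (-1)^{h(x)}$ for the conclusion, and $\chi_T(x) = (-1)^{\sum_{i\in T} x_i}$ for the characters, I would first compute the expectation
\[
E \define \Exp\bigl[\,F_1(X^1)\,F_2(X^2)\cdots F_{m-1}(X^{m-1})\,H(X^1\xor X^2\xor\cdots\xor X^{m-1})\,\bigr]
\]
with the $X^j$ independent and uniform on $\{0,1\}^n$, in two different ways. A uniformly random consistent profile is obtained by drawing $X^1,\ldots,X^{m-1}$ uniformly and setting $X^m = \xor_j X^j$, and the integrand equals $+1$ on consistent aggregations and $-1$ on inconsistent ones, so $E = 1-2\,IC(F)\geq 1-2\epsilon$. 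Expanding each function in its Fourier basis and using $\chi_T(a\xor b)=\chi_T(a)\chi_T(b)$ together with orthogonality of the characters, all off-diagonal terms vanish and $E = \sum_T \hat{F_1}(T)\hat{F_2}(T)\cdots\hat{F_{m-1}}(T)\hat{H}(T)$.

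Next I would extract a large Fourier coefficient from each of the $m$ functions, which for uniform notation I call $G_1,\ldots,G_m$. For any two distinct indices $j_1,j_2$, Parseval gives $\sum_T \hat{G_{j_1}}(T)^2 = \sum_T \hat{G_{j_2}}(T)^2 = 1$ and hence by Cauchy--Schwarz $\sum_T |\hat{G_{j_1}}(T)\hat{G_{j_2}}(T)| \leq 1$, while the remaining $m-2$ factors are each pointwise bounded by their respective maxima. This yields
\[
1-2\epsilon \;\leq\; \Bigl|\sum_T \prod_j \hat{G_j}(T)\Bigr| \;\leq\; \prod_{j\notin\{j_1,j_2\}} \max_T |\hat{G_j}(T)|,
\]
and since each factor is at most $1$ and $m\geq 3$ lets me choose the pair to exclude any specified index, every $G_j$ must admit some character $T_j^\star$ with $|\hat{G_j}(T_j^\star)|\geq 1-2\epsilon$. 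Equivalently, each $G_j$ is $\epsilon$-close to the signed character $\sigma_j\chi_{T_j^\star}$, where $\sigma_j=\mathrm{sgn}\,\hat{G_j}(T_j^\star)\in\{\pm 1\}$.

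The remaining step, which I expect to be the main obstacle, is to prove that the characters $T_j^\star$ all coincide with a single $T^\star$ and that $\prod_j\sigma_j=+1$, so that the candidate $G=\langle \sigma_1\chi_{T_1^\star},\ldots,\sigma_m\chi_{T_m^\star}\rangle$ is actually consistent of the type singled out in the proof sketch (either $\langle \chi,\ldots,\chi\rangle$ or a permutation-type $\langle -\chi,-\chi,\chi,\ldots\rangle$ with evenly many negated entries). The natural route is to observe that $d(F,G)\leq m\epsilon$ by Proposition \ref{Prop_FuncDistToMechDist} and then, via Proposition \ref{Prop_GenWisLifshitz}, $IC(G)\leq IC(F)+m\epsilon\leq (m+1)\epsilon$. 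On the other hand a direct computation (using that every nontrivial character has mean zero under the uniform measure) shows that $IC(G)$ for a pure signed-character mechanism takes only the values $0$ (all $T_j^\star$ equal and $\prod\sigma_j=+1$), $\tfrac12$ (some $T_j^\star$ differ), or $1$ (all equal but $\prod\sigma_j=-1$). Any $\epsilon$ small enough to force $IC(G)<\tfrac12$ already forces the first case, giving the desired consistent independent mechanism with $d(F,G)\leq m\epsilon$. The delicate point is tightening this to the stated threshold $\epsilon<1/6$ rather than only the naive $\epsilon<1/(2(m+1))$: this would require sharper bookkeeping in the near-matched case, using the bound $|\hat{G_j}(T)|\leq 2\sqrt{\epsilon}$ for $T\neq T_j^\star$ (from Parseval applied to a function with one dominant coefficient) to show that the total Fourier mass that can survive away from a common $T^\star$ is much smaller than the crude union bound, and this additional slack is what should permit the weaker hypothesis.
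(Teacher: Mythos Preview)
Your Fourier setup and the extraction of a large coefficient for each function are essentially the paper's approach (the paper pulls out one factor in sup-norm and controls the remaining $(m-1)$-fold product via generalized H\"older plus Parseval rather than your Cauchy--Schwarz on a pair of factors, but both routes give $\max_T|\hat G_j(T)|\geq 1-2\epsilon$ for every $j$). The divergence, and the gap you yourself flag, is in the endgame that forces all the dominant characters to coincide.

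Replacing all $m$ functions simultaneously costs $m\epsilon$ in $IC$ and only yields the threshold $\epsilon<1/(2(m+1))$; the Parseval-based sharpening you sketch is not what the paper does and does not obviously reach $1/6$. The paper's trick is to replace \emph{one} function at a time. Having found $A$ and $a^1$ with $d(f^1,a^1\chi_{A})\leq\epsilon$, replace only $f^1$, so $IC(a^1\chi_{A},f^2,\ldots,f^m)\leq 2\epsilon$. Because $\chi_{A}$ has a single nonzero Fourier coefficient, the identity $1-2IC=\sum_S\prod_j\hat f^j(S)$ collapses to the single term at $S=A$, giving
\[
a^1\prod_{j\geq 2}\hat f^j(A)\;\geq\;1-4\epsilon,
\]
so every $|\hat f^j(A)|\geq 1-4\epsilon$; i.e.\ there are signs $a^j$ with $\prod_j a^j=1$ and $d(f^j,a^j\chi_{A})\leq 2\epsilon$. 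By the symmetry of the argument each $f^j$ is also $\epsilon$-close to its own $b^j\chi_{B_j}$; the triangle inequality gives $d(b^j\chi_{B_j},a^j\chi_{A})\leq 3\epsilon$, and since distinct signed characters are at distance $\tfrac12$ or $1$, the hypothesis $\epsilon<1/6$ forces $B_j=A$ and $b^j=a^j$, hence $d(f^j,a^j\chi_{A})\leq\epsilon$ for every $j$. That is exactly where the $1/6$ comes from.
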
%
\begin{proof}~\\
The theorem is a corollary of the following lemma:\\
(We rename the
values from $\{0,1\}$ to $\{1,-1\}$ in order to ease the analysis
(use multiplication instead of xor) and in particular use the
Fourier transformation for $f,g,h$.\footnote{ Fourier transforms are
widely used in mathematics, computer science, and engineering. The
main idea is representing the function $f$ over an orthonormal basis
$\chi_{_S}$ when the inner product is define to be
\mbox{$\left<f,g\right>=\Exp[f(x)g(x)]$} and the basis vectors
$\chi_{_S}$ are defined to be \mbox{$\chi_{_S}(x)=\prod\limits_{i\in
S}x_i$} for $S\subseteq\{1,\ldots,n\}$. The coefficients of $f$
according to the Fourier basis are notated $\fhat(S)$. I.e.,
$f\equiv\sum\limits_S\fhat(S)\chi_{_S}$. For a good introduction to
the subject see \cite{O'Donnell2008}, \cite{Wolf2008}.\\
In this proof we are using the following:
\begin{itemize}
\item $\chi_{_S}(xy)=\chi_{_S}(x)\chi_{_S}(y)$
\item $\Exp[\chi_{_S}(x)\chi_{_T}(x)]=\left\{\begin{array}{lr}1 & S=T\\0& otherwise\end{array}\right.$
\item $\Exp[f^2(x)]=\sum\limits_S\fhat^2(S)$
\item $\fhat(S)=1-2d(f,\chi_{_S})=2d(f,-\chi_{_S})-1$
\end{itemize}
} )
\begin{lemma}\label{mXOR_Main_Lemma}
Let
\mbox{$f^1,\ldots,f^m:\left\{-1,1\right\}^n\rightarrow\left\{-1,1\right\}$}
three voting functions and $\epsilon$ a constant such that
    \[\Pr\left[\prod\limits_{i=1}^{m-1}f^i(x^i)\neq f^m(\prod\limits_{i=1}^{m-1}x^i)\right]\leqslant\epsilon\]
Then,
\begin{itemize}
\item There exists a linear function
    \mbox{$\chi:\left\{-1,1\right\}^n\rightarrow\left\{-1,1\right\}$}
    defined as $\chi(x)=\prod\limits_{i\in S}x_i$ for some coalition
    $S$ and signs \mbox{$\left(a^i\right)_{i=1,\ldots,m}\in\{-1,1\}$} such that
    \[\begin{array}{r@{~\leqslant}l}
        \multicolumn{2}{c}{\prod\limits_{i=1}^{m}a^i=1}\\
        d(f^1,a^1\chi)&\epsilon\\
        \forall i~:~d(f^i,a^i\chi)&2\epsilon\\
    \end{array}\]
\item
If \mbox{$\epsilon<\frac{1}{6}$}, then there exists a linear
function
    \mbox{$\chi:\left\{-1,1\right\}^n\rightarrow\left\{-1,1\right\}$}
    defined as \mbox{$\chi(x)=\prod\limits_{i\in S}x_i$} for some
    coalition $S$ and signs \mbox{$\left(a^j\right)_{j=1,\ldots,m}\in\{-1,1\}$}
    such that \mbox{$\prod\limits_{j=1}^ma^j=1$} and
        \mbox{$d(f^j,a^j\chi)\leqslant \epsilon$} for all $j$
\end{itemize}
\end{lemma}
Noticing that $\left<\left(a^j\chi\right)\right>$ is a consistent
mechanism for any linear function $\chi$  and signs $a^j$ s.t.
$\prod\limits_{j=1}^ma^j=1$ gives us the requested result by
applying proposition \ref{Prop_FuncDistToMechDist}.
\begin{proof-of-lemma}{\ref{mXOR_Main_Lemma}}~\\
The main ingredient in the proof is the following lemma that
connects the inconsistency index with a simple expression over the
Fourier coefficients of $f^j$.
\begin{lemma}\label{Lemma_GeneralParseval}
Let \mbox{$f^1,\ldots,f^m:\left\{-1,1\right\}^n\rightarrow\Re$}.
Then:~~~ \mbox{$
    \Exp\left[\prod\limits_{j=1}^{m-1}f^j\left(x^j\right)f^m\left(\prod\limits_{j=1}^{m-1}x^j\right)\right]
        =\sum\limits_S\prod\limits_{j=1}^{m}\fhatj(S)$}.
\end{lemma}
\begin{corollary}\label{mXORcorollary}
For the aggregation mechanism $F=\left<f^1,\ldots,f^m\right>$:
\\\indent\mbox{$1-2IC(F)=\sum\limits_S\prod\limits_{j=1}^{m}\fhatj(S)$}.
\end{corollary}

\noindent Now let
\mbox{$F=\left<f^1,\ldots,f^m\right>$}
be an independent aggregation mechanism that satisfies
\mbox{$IC(F)\leqslant\epsilon$}.\\
First we claim there exists a coalition \mbox{$A$} and a sign
\mbox{$a^1\in\{-1,1\}$} s.t.
\mbox{$d\left(f^1,a\chi_{_{A}}\right)\leqslant\epsilon$}\\
$\begin{array}[t]{rll}%
1-2IC(F)
    &= \sum\limits_S\prod\limits_{j=1}^{m}\fhatj(S)
    &\leqslant \sum\limits_S\left|\widehat{f^1}(S)\right|\cdot\left|\prod\limits_{j=2}^{m}\fhatj(S)\right|\\
    &\leqslant \max\limits_S\left|\widehat{f^1}(S)\right|\sum\limits_S\prod\limits_{j=2}^{m}\left|\fhatj(S)\right|
    &\leqslant^{\mbox{Lemma \ref{Lemma_GeneralizedCauchySwartz}}}
        \max\limits_S\left|\widehat{f^1}\right|\prod\limits_{j=2}^{m}\sqrt{\sum\limits_S\left|\fhatj^{m-1}(S)\right|}\\
    &\leqslant \max\limits_S\left|\widehat{f^1}\right|\prod\limits_{j=2}^{m}\sqrt{\sum\limits_S\fhatj^2(S)}
        = \max\limits_S\left|\widehat{f^1}\right|
    &= 1-2\min\limits_{S,a\in\{-1,1\}}\left(d\left(f^1,a\chi_{_S}\right)\right)
\end{array}$\\

and hence there exists a coalition $A$ and a sign $a^1$ s.t.
\mbox{$\Pr[f^1(x)\neq a^1\chi_{_{A}}(x)]\leqslant IC(f,g,h)=\epsilon$}.

Based on proposition \ref{Prop_GenWisLifshitz},
    \mbox{$IC(a^1\chi_{_{A}}, f^2,\dots, f^m)\leqslant IC(F)+d(f,a^1\chi_{_{A}})\leqslant 2\epsilon$}.
On the other hand based on corollary
    \[IC(a^1\chi_{_{A}}, f^2,\dots, f^m)
        =\frac{1}{2}\left(1-a^1\sum\limits_S\widehat{\chi_{_{A}}}(S)\prod\limits_{j=2}^m\fhatj(S)\right)
        =\frac{1}{2}\left(1-a^1\prod\limits_{j=2}^m\fhatj(A)\right).\]
So we get that
\mbox{$a^1\prod\limits_{j=2}^m\fhatj(A)\geqslant1-4\epsilon$}

and hence there exist signs $(a^j)_{j=1}^m$ such that
\mbox{$\prod\limits_{j=2}^ma^j=1$} and
\mbox{$a^j\fhatj(A)\geqslant1-4\epsilon$} so
\mbox{$d(f^j,a^j\chi_{_{A}})\leqslant2\epsilon$}.

Due to symmetry there is also a coalition $B$ and a sign $b^2$
such that
    \mbox{$d(f^2,b^2\chi_{_B})\leqslant\epsilon$}
and hence
    \mbox{$d(b^2\chi_{_B},a^2\chi_{_A})\leqslant3\epsilon$}.
On the other hand
    \mbox{$d(b^2\chi_{_B},a^2\chi_{_A})=
        \left\{\begin{array}{ll}
            0 & a^2=b^2 \wedge A=B\\
            1 & a^2=b^2 \wedge A=B\\
            \half & A\neq B\\
        \end{array}\right.$}.\\
Hence, if \mbox{$\epsilon<\frac{1}{6}$}, we get that \mbox{$A=B$},
\mbox{$a^2=b^2$}.

Due to symmetry we can repeat this for all $f^j$.
\end{proof-of-lemma}
\end{proof}%
\subsection{Lemma \ref{Lemma_GeneralParseval}}%
\begin{lemmaUN}~\\
Let \mbox{$f^1,\ldots,f^m:\left\{-1,1\right\}^n\rightarrow\Re$}.
Then:~~~ \mbox{$
    \Exp\left[\prod\limits_{j=1}^{m-1}f^j\left(x^j\right)f^m\left(\prod\limits_{j=1}^{m-1}x^j\right)\right]
        =\sum\limits_S\prod\limits_{j=1}^{m}\fhatj(S)$}.
\end{lemmaUN}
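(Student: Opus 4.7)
The plan is to prove the identity by expanding each $f^j$ in its Fourier basis and then exploiting the two key algebraic properties of the characters $\chi_S(x) = \prod_{i \in S} x_i$: multiplicativity under coordinate-wise product, $\chi_S(xy) = \chi_S(x)\chi_S(y)$, and orthonormality under the uniform measure, $\mathbb{E}[\chi_S(x)\chi_T(x)] = \mathbf{1}[S = T]$.

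First I would write $f^j = \sum_{S_j} \widehat{f^j}(S_j)\,\chi_{S_j}$ for each $j = 1,\ldots,m$ and substitute these expansions into the product inside the expectation. Using multiplicativity of $\chi_{S_m}$ on the last factor, $\chi_{S_m}\!\left(\prod_{j=1}^{m-1} x^j\right) = \prod_{j=1}^{m-1} \chi_{S_m}(x^j)$, the integrand becomes
\[
\sum_{S_1,\ldots,S_m}\;\Bigl(\prod_{j=1}^{m}\widehat{f^j}(S_j)\Bigr)\;\prod_{j=1}^{m-1}\chi_{S_j}(x^j)\chi_{S_m}(x^j).
\]
Since $\chi_{S_j}(x^j)\chi_{S_m}(x^j) = \chi_{S_j \triangle S_m}(x^j)$ (because the variables in $\{-1,1\}$ satisfy $x_i^2 = 1$), the integrand factorizes as a product of characters, each depending on a different independent input $x^j$.

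Next I would take the expectation and use independence of the $x^j$'s: $\mathbb{E}\!\left[\prod_{j=1}^{m-1}\chi_{S_j \triangle S_m}(x^j)\right] = \prod_{j=1}^{m-1}\mathbb{E}[\chi_{S_j \triangle S_m}(x^j)]$. By orthonormality with the constant character, each factor is $1$ if $S_j = S_m$ and $0$ otherwise. Hence only the diagonal terms $S_1 = S_2 = \cdots = S_m$ survive, yielding
\[
\mathbb{E}\!\left[\prod_{j=1}^{m-1}f^j(x^j)\,f^m\!\Bigl(\prod_{j=1}^{m-1}x^j\Bigr)\right] \;=\; \sum_S \prod_{j=1}^{m}\widehat{f^j}(S),
\]
which is the desired identity. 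There is no real obstacle here; the only minor care point is keeping track of the interaction between $\chi_{S_m}$ and the $\chi_{S_j}$ via the symmetric difference, and swapping sum with expectation (which is legitimate since everything is a finite sum).
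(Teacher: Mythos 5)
Your proposal is correct and follows essentially the same route as the paper's proof: expand each $f^j$ in the Fourier basis, use $\chi_S(xy)=\chi_S(x)\chi_S(y)$ together with independence of the inputs, and let orthonormality kill all off-diagonal terms. The paper writes the computation only for the three-function case $\Exp[f(x)g(y)h(xy)]$, whereas you carry out the identical argument for general $m$ (phrasing the orthogonality step via the symmetric difference $S_j\triangle S_m$), which is a purely cosmetic difference.
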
%
\begin{proof-of-lemma}{}~\\
$\begin{array}[t]{rlrr}%
\Exp_{x,y}[f(x)g(y)h(xy)]
    &= \Exp_{x,y}\left[\sum\limits_{S,T,R}\fhat(S)\chi_{_S}(x)\ghat(T)\chi_{_T}(y)\hhat(R)\chi_{_R}(xy)\right]\\
    &= \sum\limits_{S,T,R}\fhat(S)\ghat(T)\hhat(R)\Exp[\chi_{_S}(x)\chi_{_R}(x)]\Exp[\chi_{_T}(y)\chi_{_R}(y)]\\
    &= \sum\limits_S\fhat(S)\ghat(S)\hhat(S)&\qedhere
\end{array}$
\end{proof-of-lemma}%
\subsection{Lemma \ref{Lemma_GeneralizedCauchySwartz}}%
\begin{lemma}\label{Lemma_GeneralizedCauchySwartz}~\\
Let $k\geqslant2$ be an integer and $\{a_{i,j}\}_{i=1\ldots n,
j=1\ldots k}$ positive reals. Then,
\[\left(\sum\limits_{i=1}^n \prod_{j=1}^ka_{i,j}\right)^k\leqslant
    \prod\limits_{j=1}^k\left(\sum\limits_{i=1}^n\left(a_{i,j}\right)^k\right)\]
\end{lemma}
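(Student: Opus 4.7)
The plan is to recognize this as the Hölder inequality with equal exponents $p_1 = p_2 = \cdots = p_k = k$ (note that $\sum_j 1/p_j = 1$), and to give the standard two-step proof via normalization plus AM--GM.

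First, I would exploit homogeneity. Observe that if one multiplies column $j$ by a positive constant $\lambda_j$, i.e.\ replaces $a_{i,j}$ by $\lambda_j a_{i,j}$ for all $i$, then the left-hand side gets multiplied by $\left(\prod_j \lambda_j\right)^k = \prod_j \lambda_j^k$ and the right-hand side gets multiplied by exactly $\prod_j \lambda_j^k$ as well. Thus the inequality is invariant under such column rescalings. Since all $a_{i,j}$ are strictly positive, each quantity $\sum_i a_{i,j}^k$ is strictly positive, and we may rescale column $j$ by $\lambda_j = \left(\sum_i a_{i,j}^k\right)^{-1/k}$. After this normalization, we reduce to the case $\sum_i a_{i,j}^k = 1$ for every $j$, and the desired bound becomes
\[\sum_{i=1}^n \prod_{j=1}^k a_{i,j} \;\leqslant\; 1.\]

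Second, I would apply the AM--GM inequality pointwise in $i$. For each fixed $i$, writing $a_{i,j} = (a_{i,j}^k)^{1/k}$, AM--GM gives
\[\prod_{j=1}^k a_{i,j} \;=\; \prod_{j=1}^k \left(a_{i,j}^k\right)^{1/k} \;\leqslant\; \frac{1}{k}\sum_{j=1}^k a_{i,j}^k.\]
Summing this over $i$ and swapping the order of summation yields
\[\sum_{i=1}^n \prod_{j=1}^k a_{i,j} \;\leqslant\; \frac{1}{k}\sum_{i=1}^n \sum_{j=1}^k a_{i,j}^k \;=\; \frac{1}{k}\sum_{j=1}^k \sum_{i=1}^n a_{i,j}^k \;=\; \frac{1}{k}\sum_{j=1}^k 1 \;=\; 1,\]
which is the desired inequality in the normalized case, and the un-normalized form follows from the homogeneity observation.

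There is essentially no obstacle in this proof; the only subtlety worth noting is the need for strict positivity to justify the rescaling $\lambda_j = (\sum_i a_{i,j}^k)^{-1/k}$. Had the hypothesis allowed nonnegative reals, one would handle the degenerate case $\sum_i a_{i,j}^k = 0$ separately, but it forces $a_{i,j} = 0$ for all $i$ (for that column), which makes the left-hand side identically zero so the inequality is trivial. As an alternative presentation, the same two lines can be packaged as induction on $k$ with Cauchy--Schwarz as the base case, but the AM--GM route is shorter and requires no auxiliary lemma.
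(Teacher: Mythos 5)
Your proof is correct and follows essentially the same route as the paper's: normalize each column so that $\sum_i a_{i,j}^k=1$, then bound each product $\prod_{j} a_{i,j}$ pointwise by AM--GM and sum over $i$. The only difference is cosmetic --- the paper first proves the general H\"older inequality for arbitrary exponents $q_j$ with $\sum_j 1/q_j=1$ (deriving Young's inequality from the convexity of $-\log$) and then specializes to $q_j=k$, whereas you work directly with the equal exponents.
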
%
\begin{proof-of-lemma}{}~\\
\begin{itemize}
\item Let $x_1,\ldots,x_k>0$ and $p_1,\ldots,p_k$ s.t. $\sum p_j=1$
then $\prod x_j^{p_j}\leqslant \sum p_jx_j$.\\
$-log(x)$ is convex and hence
    $-log(\sum p_jx_j)\leqslant -\sum p_j log(x_j) = -log(\prod
    x_j^{p_j})$.
$-log(x)$ is downward monotone and hence
    $\sum p_jx_j\geqslant\prod x_j^{p_j}$.

\item Let $y_1,\ldots,y_k>0$ and $q_1,\ldots,q_k$ s.t. $\sum \frac{1}{q_j}=1$
then $\prod y_j\leqslant \sum \frac{y_j^{q_j}}{q_j}$.\\
Assign $x_j\leftarrow y_j^{q_j}$ and $p_j\leftarrow \frac{1}{q_j}$
in the former.

\item Let $q_j,\ldots,q_k$ s.t. $\sum \frac{1}{q_j}=1$ and assume
    $\forall j~~:~~\sum\limits_{i=1}^n a_{i,j}^{q_j}=1$. Then
    $\sum\limits_{i=1}^n \prod_{j=1}^ka_{i,j}\leqslant
     \prod\limits_{j=1}^k\left(\sum\limits_{i=1}^n\left(a_{i,j}\right)^{q_j}\right)^{\frac{1}{q_j}}$\\
$\begin{array}{rl} \sum\limits_{i=1}^n \prod_{j=1}^ka_{i,j}
    &\leqslant \sum\limits_{i=1}^n \sum_{j=1}^k\frac{a_{i,j}^{q_j}}{q_j}\\
    &= \sum_{j=1}^k \frac{\sum\limits_{i=1}^n a_{i,j}^{q_j}}{q_j}\\
    &= \sum_{j=1}^k \frac{1}{q_j}\\
    &= 1\\
    &= \prod\limits_{j=1}^k\left(\sum\limits_{i=1}^n\left(a_{i,j}\right)^{q_j}\right)^{\frac{1}{q_j}}
\end{array}$

\item Let $q_1,\ldots,q_k$ s.t. $\sum \frac{1}{q_i}=1$. Then
    $\sum\limits_{i=1}^n \prod_{j=1}^ka_{i,j}\leqslant
     \prod\limits_{j=1}^k\left(\sum\limits_{i=1}^n\left(a_{i,j}\right)^{q_j}\right)^{\frac{1}{q_j}}$\\
Normalize each vector. (The case of zero is trivial)

\item $\left(\sum\limits_{i=1}^n \prod_{j=1}^ka_{i,j}\right)^k\leqslant
    \prod\limits_{j=1}^k\left(\sum\limits_{i=1}^n\left(a_{i,j}\right)^k\right)$\\
Take $q_i=k$
\end{itemize}
\end{proof-of-lemma}%

\subsection{Affine Agenda - Lemma \ref{Lemma_AffineIsXOR}}
\begin{lemma}\label{Lemma_AffineIsXOR}
Let \FeasableAssignments be an affine subspace of $\{0,1\}^m$ of degree $k$.\\
Then \Agenda can be represented as a truth-functional agenda using xor conclusions only.
\end{lemma}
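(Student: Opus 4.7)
Identify $\{0,1\}^m$ with $\mathbb{F}_2^m$, so the affine subspace $\FeasableAssignments$ of dimension $k$ can be written as $\FeasableAssignments = x_0 + V$ where $x_0 \in \FeasableAssignments$ is any fixed point and $V \subseteq \mathbb{F}_2^m$ is a $k$-dimensional linear subspace. Let $M$ be the $m \times k$ matrix over $\mathbb{F}_2$ whose columns form a basis of $V$, so that $V = \{Mc : c \in \mathbb{F}_2^k\}$ and $\mathop{\mathrm{rank}}(M) = k$.

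The first key step is to single out $k$ coordinates to play the role of premises. Since $M$ has rank $k$, there exists a set $S \subseteq \{1,\ldots,m\}$ of size $k$ such that the $k \times k$ submatrix $M_S$ obtained by restricting $M$ to the rows indexed by $S$ is invertible over $\mathbb{F}_2$ (pick any $k$ linearly independent rows; this is the standard fact that an $m \times k$ matrix of rank $k$ has an invertible $k \times k$ minor). Relabel the coordinates so that $S = \{1,\ldots,k\}$; this relabeling does not change the agenda as a subset of $\{0,1\}^m$. I declare the coordinates in $S$ to be the premises.

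Next, express each non-premise coordinate as an affine function of the premises. For $j \notin S$ let $M_j$ denote the $j$-th row of $M$ (a row vector in $\mathbb{F}_2^{1 \times k}$), and set $w^{(j)} := M_j M_S^{-1} \in \mathbb{F}_2^{1 \times k}$ and $b_j := x_{0,j} \oplus w^{(j)} x_{0,S} \in \mathbb{F}_2$. Then for every $x \in \FeasableAssignments$, writing $x = x_0 + Mc$, one has $x_S = x_{0,S} + M_S c$, hence $c = M_S^{-1}(x_S \oplus x_{0,S})$, and therefore
\[
x_j \;=\; x_{0,j} \oplus M_j c \;=\; x_{0,j} \oplus w^{(j)}\!\left(x_S \oplus x_{0,S}\right) \;=\; \Bigl(\bigoplus_{i \in T_j} x_i\Bigr) \oplus b_j,
\]
where $T_j \subseteq S$ is the support of $w^{(j)}$. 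Conversely, any $x \in \{0,1\}^m$ satisfying these $m-k$ equations lies in $\FeasableAssignments$, so
\[
\FeasableAssignments \;=\; \Bigl\{x \in \{0,1\}^m \;\Big|\; \forall j > k,\; x_j = \Phi^j(x_1,\ldots,x_k)\Bigr\},
\]
with each $\Phi^j(x_1,\ldots,x_k) = \bigl(\bigoplus_{i \in T_j} x_i\bigr) \oplus b_j$. If $b_j = 0$ this is pure xor; if $b_j = 1$ it is xor with output negation, which the paper explicitly allows. This is exactly the definition of a truth-functional agenda whose conclusions are all xor (up to output negation).

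The only step with any substance is the choice of premises, and it reduces to the elementary linear-algebra fact that a rank-$k$ matrix has an invertible $k\times k$ minor; everything else is a direct computation in $\mathbb{F}_2$. No additional obstacle is expected.
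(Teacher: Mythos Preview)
Your proof is correct and follows essentially the same idea as the paper's: both reduce the claim to elementary linear algebra over $\mathbb{F}_2$, selecting a set of pivot coordinates and expressing the remaining coordinates as $\mathbb{F}_2$-affine (i.e., xor up to negation) functions of them. The only cosmetic difference is that the paper works with the kernel description $\FeasableAssignments=\{x:Ax=0\}$ and applies Gaussian elimination to the constraint matrix $A$ (absorbing the affine shift upfront as a ``renaming of opinions''), whereas you work with the dual parametric description $\FeasableAssignments=x_0+M\mathbb{F}_2^k$ and pick an invertible $k\times k$ minor of the generator matrix $M$, keeping the affine constant $b_j$ explicit and handling it via output negation; these are two sides of the same coin.
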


\begin{proof-of-lemma}{}~\\
\FeasableAssignments is an affine space and therefore can be
represented as a linear subspace shifted by a constant vector.
Shifting is merely renaming of the opinions so with no loss of
generality, assume that \FeasableAssignments is a linear subspace
defined by a matrix $A_{k\times m}$ of rank $k$ in the following way
$\FeasableAssignments=\{x\in\{0,1\}^m~|~Ax=0\}$. There exists an
invertible matrix (representing the Gaussian elimination process)
$P$ s.t.
\begin{itemize}
\item $\{x\in\{0,1\}^m~|~Ax=0\}= \{x\in\{0,1\}^m~|~PAx=0\}$
\item $PA$ is in canonical form. I.e. for any row $t\in[k]$ there is a unique index $a_t\in[m]$ s.t. $(PA)_{t,j}=1$ iff $j=a_t$.
\end{itemize}
Hence \FeasableAssignments is a truth-functional agenda for the
premises $[m]\setminus\{a_t\}_{t\in[k]}$ and conclusions based on
the row of $PA$.
\end{proof-of-lemma}
\end{document}